\documentclass[letterpaper,12pt]{article}
\usepackage{diagbox}
\usepackage{latexsym,amssymb,amsmath, bm}
\usepackage{mathtools}
\usepackage{amsthm}
\usepackage{lipsum,graphicx,multicol}
\usepackage{setspace,color}
\usepackage[sort&compress]{natbib}
\usepackage{tcolorbox}
\usepackage{subcaption}
\usepackage{float}
\usepackage[normalem]{ulem}
\usepackage{authblk}
\usepackage{leftidx}
\usepackage{dsfont}
\usepackage{geometry}
\usepackage{bbm}
\usepackage[title,titletoc]{appendix}
\usepackage{actuarialangle}
\usepackage{lscape}
\usepackage{afterpage}
\usepackage{relsize}
\usepackage{scalerel}
\usepackage[colorlinks=true,allcolors=blue, linkcolor=red]{hyperref}
\allowdisplaybreaks

\usepackage{lineno}

\usepackage{accents}

\newcommand\medtilde[1]{\vstretch{0.8}{\hstretch{0.8}{\widetilde{#1}}}}

\newcommand{\medhat}[1]{{
  \ThisStyle{%
    \ooalign{%
      $\SavedStyle#1$\cr
      \hidewidth\kern 0.5pt \raisebox{-0.75pt}{$\SavedStyle\hstretch{0.75}{\widehat{\hstretch{1.333}{\phantom{\SavedStyle #1}}}}$}\kern -0.5pt \hidewidth\cr
    }%
  }%
  \kern 0.75pt 
}}

\def \one {\bm{1}}

\def \eqd {\buildrel d \over =}

\newcommand{\Fbar}{\,\overline{\kern-0.175em F\kern-0em}}

\def \gh {\medhat{g}}
\def \th {\medhat{t}}
\def	 \phih {\widehat{\Phi}}
\def \psih {\widehat{\Psi}}

\def \hh {\boldsymbol{h}}

\def \i {\mathrm{i}}

\def \dom {\mathrm{dom}}

\def \D {\mathfrak{S}_{u^{\ast}}}

\def \NN {\mathbb{N}}

\def \PP {\mathbb{P}}

\def \RR {\mathbb{R}}

\def \T {\mathfrak{T}_{u^{\ast}}}

\def \XX {\boldsymbol{X}}

\def\VaR{{\mathrm{VaR}}}
\def\CTE{{\mathrm{CTE}}}
\def\e{{\mathbb E}}

\def\Cov{{\mathrm{Cov}}}
\def\Var{{\mathrm{Var}}}

\numberwithin{equation}{section} 
\newtheorem{theorem}{Theorem}[section]
\newtheorem{definition}[theorem]{Definition}

\newtheorem{example}[theorem]{Example}
\newtheorem{remark}[theorem]{Remark}
\newtheorem{proposition}[theorem]{Proposition}
\newtheorem{lemma}[theorem]{Lemma}
\newtheorem{corollary}[theorem]{Corollary}


\definecolor{darkread}{rgb}{0.7, 0, 0}

\begin{document}
\onehalfspacing
\title{\Large\textbf{{A tale of two allocations: Risk capital contributions versus risk contributions in the tail}}}
\author[]{\small Nawaf Mohammed\thanks{\url{nawaf.mohammed.ac@gmail.com}}\ }
\author[]{\small Edward Furman} 
\affil[]{\footnotesize Department of Mathematics and Statistics, York University, Toronto, ON M3J 1P3 \\ RISC Foundation, 5000 Yonge Street, Suite 1901, Toronto, ON M2N 7E9, Canada.}
\date{}
\maketitle
\vspace{-1cm}
\begin{abstract}
\onehalfspacing
We compare two natural proportional notions of a risk component’s contribution to the aggregate tail risk of a collection of risks: the fraction of aggregate tail risk capital allocated to the component under Conditional Tail Expectation (CTE), and the component's expected realized share of aggregate risk under Geometric Tail Expectation (GTE). The resulting proportional allocations generally differ. For arbitrary random vectors—allowing atoms, signed risks, and any tail domain—we establish when the allocations agree, determine their ordering when they do not, and characterize their asymptotic separation.

Both proportional allocations are weighted averages of the conditional risk share of a component given the aggregate: the proportional CTE allocation weights tail scenarios by severity, whereas the proportional GTE allocation weights them uniformly. Their difference is therefore a normalized tail covariance. The proportional CTE and GTE allocations agree throughout a tail exactly when the conditional risk share is constant there; under a mild unimodality condition, dominance is characterized by its monotonicity. Among independent exponential dispersion models with heterogeneous natural parameters, exact agreement is possible only for the scaled Poisson family; under comonotonicity, it is equivalent to proportional quantile functions. 

In the extreme tail, the limiting relationship between the proportional CTE and GTE allocations is governed by the ratio of Expected Shortfall to Value-at-Risk: boundedness ensures common limits, convergence to one forces the allocations to merge, and divergence can cause their limits to separate.
\medskip

\noindent{{\em Key words and phrases}: risk capital allocation; Conditional Tail Expectation; Geometric Tail Expectation; exponential dispersion models; comonotonicity; tail asymptotics.}

\medskip
\noindent{{\em JEL Classification}: C46; G22.}
\end{abstract}

\newpage
\section{Introduction}
\label{sec:introduction}

The allocation of aggregate risk capital among the constituents of a collection of risks is an important problem in actuarial science and quantitative risk management, underpinning performance measurement, pricing, and capital budgeting. Given risks $X_1,\ldots,X_n$, indexed by $N=\{1,\ldots,n\}$, with aggregate risk $S=\sum_{i\in N}X_i$ and an aggregate risk measure $\varrho$, one seeks contributions $\kappa_i$, $i\in N$, satisfying the full-allocation constraint $\sum_{i\in N}\kappa_i=\varrho(S)$. The Euler, or gradient, allocation principle \citep{Denault2001,Tasche2004} sets $\kappa_i=\partial\varrho\bigl(\sum_{j\in N}h_jX_j\bigr)/\partial h_i\big|_{\hh=\one}$, thereby charging each component its marginal contribution to the aggregate risk measure. When $\varrho$ is positively homogeneous of degree one, Euler's theorem on homogeneous functions guarantees full allocation, and the allocation principle admits both game-theoretic \citep{Denault2001} and axiomatic \citep{Kalkbrener2005} foundations. Beyond the Euler paradigm, prominent alternatives include the distorted allocations of \citet{TsanakasBarnett2003}, the weighted allocations of \citet{Furman2008b}, and the optimal allocations of \citet{Dhaene2012}.

Two risk measures are central to our investigation. The first is the Conditional Tail Expectation, $\CTE_u(S)=\e[S\mid S>q(u)]$, $u\in[0,1)$, where $q(u)$ is the Value-at-Risk of $S$ at level $u$. For continuous $S$ the CTE coincides with the Expected Shortfall \citep{Acerbi.2002}, on whose quantile representation our general framework is built so as to accommodate atoms. The CTE risk measure is prominent in regulatory capital measurement \citep{Embrechts2014}, and, under standard regularity conditions, its Euler contributions are $\e[X_i\mid S>q(u)]$, $i\in N$. Since these contributions sum to $\CTE_u(S)$, the canonical proportional Euler allocation of the CTE is
\begin{equation}
\label{eq:intro_Phi}
  \Phi_{i,u}
  =\frac{\e[X_i\mid S>q(u)]}
             {\e[S\mid S>q(u)]}.
\end{equation}
The second risk measure is the \emph{Geometric Tail Expectation} (GTE), defined whenever $S>0$ almost surely on the tail event $\{S>q(u)\}$ by
\[
  \mathrm{GTE}_u(S)
  =\exp\bigl(\e[\log S\mid S>q(u)]\bigr).
\]
A tail quasi-linear mean in the sense of \citet{BaurleShushi2020} with logarithmic generator, the GTE is positively homogeneous and, by Jensen's inequality, never exceeds the CTE; it gauges the typical order of magnitude of tail outcomes through their geometric, rather than arithmetic, mean. Its Euler contributions are $\mathrm{GTE}_u(S)\,\e[X_i/S\mid S>q(u)]$, $i\in N$, yielding the proportional allocation \citep{Mohammed2021}
\begin{equation}
\label{eq:intro_Psi}
  \Psi_{i,u}
  =\e\left[\frac{X_i}{S}\,\bigg|\,S>q(u)\right].
\end{equation}
Both \eqref{eq:intro_Phi} and \eqref{eq:intro_Psi} sum to one over $i\in N$, take values in $[0,1]$ for non-negative risks, and coincide trivially when $n=1$. For $n\geq 2$, however, they are generically distinct, and the comparison between them is the central subject of this paper.

Although both \eqref{eq:intro_Phi} and \eqref{eq:intro_Psi} are proportional allocations, they reflect different notions of contribution. Following \citet{Furman2020}, we distinguish between conventional and compositional notions of proportional allocation. Under the conventional approach, the risk capital allocated to a component is normalized by the total risk capital; over the tail, this produces $\Phi_{i,u}$, the $i$-th component's contribution to total risk capital. Under the compositional approach, the fundamental stochastic quantity is $R_i=X_i/S$, the $i$-th component's realized share of the aggregate risk $S$; averaging this share over the tail produces $\Psi_{i,u}$, the component's expected risk share. The latter perspective treats proportional risk contributions as compositional data on the simplex \citep{Aitchison1986}, as in \citet{Belles-Sampera2016} and, in a dynamic forecasting setting, \citet{Boonen2019}. Equivalently, $\Phi_{i,u}$ is a ratio of tail expectations, whereas $\Psi_{i,u}$ is a tail expectation of a ratio. The two generally differ because normalization and tail averaging do not commute. Writing $X_i=S\,R_i$, this non-commutativity is quantified, for continuous $S$ with $S>0$ on the tail, by
\begin{equation}
\label{eq:intro_cov}
  \Phi_{i,u}-\Psi_{i,u}
  =\frac{\Cov\bigl(S,R_i\mid S>q(u)\bigr)}{\e[S\mid S>q(u)]}.
\end{equation}
In the language of weighted allocation principles \citep{Furman2008b}, both allocations are weighted tail averages $\e[R_i\,w(S)]/\e[w(S)]$ of the realized share, with the severity weight $w(s)=s\,\mathds{1}_{\{s>q(u)\}}$ for $\Phi_{i,u}$ and the flat weight $w(s)=\mathds{1}_{\{s>q(u)\}}$ for $\Psi_{i,u}$. The CTE allocation therefore exceeds the GTE allocation exactly when the aggregate risk and the component's share are positively correlated in the tail, that is, loosely speaking, when the component drives the tail. Moreover, by the tower property, \eqref{eq:intro_cov} depends on $R_i$ only through the conditional risk share $g_i(s)=\e[R_i\mid S=s]$, which is a central tool in our analysis.

\citet{Bauer2016} interpret \eqref{eq:intro_Psi} as a profit-maximizing allocation. Building on this interpretation, \citet{Mohammed2021} establish conditions under which the CTE allocation \eqref{eq:intro_Phi} coincides with \eqref{eq:intro_Psi} for every $u\in[0,1)$, thereby clarifying the incentive structure embedded in CTE-based capital requirements. \citet{Owada2025} advocate the GTE allocation as a computationally tractable alternative and show that, for non-negative, continuously distributed risks with certain tail models, $\Phi_{i,u}$ and $\Psi_{i,u}$ are \emph{asymptotically equivalent} as $u\to 1^-$; see also \citet{Asimit2011} for the extreme-value asymptotics of CTE-based allocations.

The existing comparisons, however, operate under four principal restrictions. First, risks are assumed non-negative and continuously distributed, which excludes discrete risks as well as risks taking both signs. Second, results are established globally, that is, for all $u\in[0,1)$; this requires the integrability of $R_i$ over the entire sample space, which fails for every vector of independent risks with continuous and strictly positive densities on $\RR$ (Proposition~\ref{prop:R_not_L1}). Third, attention is restricted to the \emph{equality} of the two allocations; neither the direction of the inequality between them nor the conditions for its reversal have been studied. Fourth, asymptotic equivalence is established under specific tail models, leaving unexplored the structural mechanism that governs the joint limiting behavior of the two allocations. The present paper removes each of these restrictions through the contributions summarized below.

\subsection{Main contributions}
\label{subsec:Main contributions}

We work with arbitrary random vectors under minimal tail-integrability conditions, defining both allocations through the quantile function of $S$ so as to accommodate aggregates with atoms and of arbitrary sign (Definition~\ref{def:allocations_paradigms}). All results are established on a tail domain $[u^\ast,1)$ for an arbitrary threshold $u^\ast\in[0,1)$. This localizes the comparison to any tail level of practical interest and weakens the integrability requirements substantially: whenever $q(u^\ast)>0$, tail integrability of the components alone suffices. The global setting is recovered as the special case $u^\ast=0$. Within this framework, we show that $\Phi_{i,u}=\Psi_{i,u}$ for all $u\in[u^\ast,1)$ if and only if $g_i$ is almost surely constant on the tail (Theorem~\ref{thm:tail_allocations_coincide}), and that, whenever $g_i$ is quasi-concave (quasi-convex) on the tail, $\Phi_{i,u}\geq(\leq)\,\Psi_{i,u}$ for all $u\in[u^\ast,1)$ if and only if $g_i$ is non-decreasing (non-increasing) there (Theorem~\ref{thm:allocations_dominance_tail}). Sufficiency rests on Chebyshev's integral inequality \citep{Hardy1952}, whereas necessity requires a finer argument exploiting the unimodality of $g_i$. For instance, if $X_i\sim N(\mu_i,\sigma_i^2)$, $i\in N$, are independent, then $\e[|R_i|]=\infty$ and the global GTE allocation is not even well defined; yet on any tail with $q(u^\ast)>0$, Gaussian regression yields $g_i(s)=\beta_i+(\mu_i-\beta_i\mu_S)/s$, where $\beta_i=\sigma_i^2/\sigma_S^2$, $\mu_S=\sum_{j\in N}\mu_j$ and $\sigma_S^2=\sum_{j\in N}\sigma_j^2$. Hence $\Phi_{i,u}\geq\Psi_{i,u}$ for all $u\in[u^\ast,1)$ if and only if $\mu_i/\sigma_i^2\leq\mu_S/\sigma_S^2$, with equality throughout if and only if the two ratios coincide (Example~\ref{ex:Normal_tail}): components whose mean-to-variance ratio falls below that of the aggregate
risk drive the tail and are charged more heavily by the severity-weighted CTE
allocation.

We then specialize these results to two model classes at opposite ends of the dependence spectrum. For independent risks following additive Exponential Dispersion Models (EDMs) \citep{Jorgensen1987} with a common cumulant function, the two allocations coincide at all levels whenever the natural parameters are equal, whereas under heterogeneous natural parameters the scaled Poisson family is the unique model for which equality can hold (Theorem~\ref{thm:EDM_coincide}); dominance criteria, expressed through ratios of cumulant derivatives, follow from a saddle-point approximation of $g_i$ \citep{Daniels1954} that is exact in the Gaussian case (Proposition~\ref{prop:EDM_gi_monotonicity}). For comonotonic risks \citep{Dhaene2002}, the exact representation $g_i(s)=q_{X_i}(F_S(s))/s$, $s\neq 0$ (Proposition~\ref{prop:comonotone_gi}), reduces equality and dominance, respectively, to the constancy and the monotonicity of the quantile ratio $q_{X_i}/q_S$ on $(u^\ast,1)$.

Finally, we develop a general theory of the limiting behavior of the two allocations as $u\to 1^-$. Here $\Psi_{i,u}$ is a Ces\`{a}ro-type, uniform tail average of the conditional share along the quantile curve, whereas $\Phi_{i,u}$ is a severity-weighted average which, by an integration-by-parts identity, is itself a convex mixture of $\{\Psi_{i,v}:v\in[u,1)\}$. Consequently, convergence of $\Psi_{i,u}$ always entails convergence of
$\Phi_{i,u}$ to the same limit, an Abelian statement. The converse is
Tauberian in nature \citep{Bingham1987}: it holds whenever the ratio of
Expected Shortfall to Value-at-Risk remains bounded, a condition satisfied
under every standard tail model. It may fail otherwise but is restored under
monotonicity of $g_i$ (Theorem~\ref{thm:allocation_limits} and
Corollary~\ref{cor:tauberian}). When $g_i$ itself converges at the upper endpoint of the support of $S$, both allocations inherit its limit (Corollary~\ref{cor:limit_identification}); this identifies, for instance, the concentration of comonotonic Pareto allocations on the heaviest-tailed components (Example~\ref{ex:comonotone_Pareto}).

Section~\ref{sec:setup} introduces the general framework and the foundational equality and dominance results. Sections~\ref{sec:Independent_EDMs} and~\ref{sec:comonotone_dependence} specialize to independent additive EDMs and to comonotonic risks, respectively. Section~\ref{sec:allocation_convergence} develops the asymptotic theory, and Section~\ref{sec:conclusions} concludes. All proofs are collected in Appendix~\ref{app:proofs}.
\section{Set-up and Preliminary Results}
\label{sec:setup}

This section develops the framework on which the remainder of the paper rests. Its guiding observation is that, along the quantile curve of the aggregate risk, both allocations become functionals of a single deterministic function, the conditional risk share evaluated at the quantiles of $S$, and that their comparison is governed by the tail covariance between this function and the quantile function itself.

Fix $n\in\NN$, $n\ge 2$, and let $N=\{1,\ldots,n\}$ denote the index set. Let $\XX=(X_1,\ldots,X_n)$ be an arbitrary random vector with aggregate $S=\sum_{i\in N}X_i$ and distribution function $F_S(s)=\PP(S\le s)$. For each $i\in N$, the realized share of the $i$-th component in the aggregate risk pool is
\[
R_i=\frac{X_i}{S}\mathds{1}_{\{S\neq 0\}}+c_i\mathds{1}_{\{S=0\}},
\]
where the constants $c_i\in\RR$, subject to $\sum_{i\in N}c_i=1$, prescribe how a vanishing aggregate is apportioned; they matter only when $\PP(S=0)>0$. By construction, $\sum_{i\in N}R_i=1$ holds identically.

For $u\in(0,1)$, the quantile function, or Value-at-Risk, of the aggregate $S$ is
\[
q(u)=\VaR_u[S]
:=\inf\{s\in\RR:F_S(s)\ge u\},
\]
and at the endpoints we set $q(0)=s_{\min}:=\inf\{s\in\RR:F_S(s)>0\}$ and $q(1)=s_{\max}:=\sup\{s\in\RR:F_S(s)<1\}$, the lower and upper endpoints of the support of $S$, with values in $[-\infty,\infty]$. We repeatedly use the following standard facts \citep{Dhaene2002}: $q$ is non-decreasing and left-continuous on $(0,1)$; for all $v\in(0,1)$ and $s\in\RR$,
\begin{equation}
\label{eq:quantile_galois}
q(v)\le s\iff v\le F_S(s);
\end{equation}
and $q(V)\eqd S$ whenever $V\sim\mathcal{U}(0,1)$.

\begin{definition}
\label{def:Tail_event_SDomain}
Take any $u^\ast\in[0,1)$ and let $s^{\ast}:=q(u^\ast)$ denote its corresponding quantile. The effective tail domain $\D$ of $u^\ast$ is defined as
\begin{equation*}
\label{eq:D_effective_domain}
 \D:=
  \begin{cases}
    (s^{\ast},s_{\max}] & \text{if }F_S(s^{\ast})=u^{\ast},\\[4pt]
    [s^{\ast},s_{\max}] & \text{if }F_S(s^{\ast})>u^{\ast}.
  \end{cases}
\end{equation*}
If $s_{\min}=-\infty$ or $s_{\max}=\infty$, then the intervals are taken open at those points.
\end{definition}

The two cases distinguish whether the upper tail of probability $1-u^\ast$ is the event $\{S>s^\ast\}$ or must also absorb part of an atom of $S$ at $s^\ast$. In either case, $\PP(S\in\D)\ge1-u^\ast>0$ and, by \eqref{eq:quantile_galois}, $q(v)\in\D$ for every $v\in(u^\ast,1)$, since $q(v)=s^\ast$ forces $u^\ast<v\le F_S(s^\ast)$, which is possible only in the second case. To describe how the quantile curve over $(u^\ast,1)$ represents the law of $S$ on $\D$, we introduce, for $u\in[u^\ast,1)$, the \emph{randomized tail weight}
\begin{equation}
\label{eq:tail_weight}
\pi_u(s):=\mathds{1}_{\{s>q(u)\}}+\theta_u\,\mathds{1}_{\{s=q(u)\}},
\qquad
\theta_u:=\frac{F_S\bigl(q(u)\bigr)-u}{\PP\bigl(S=q(u)\bigr)},
\end{equation}
with the convention $\theta_u:=0$ when $\PP(S=q(u))=0$. Thus $\pi_u(s)$ is the fraction of the scenarios with aggregate outcome $s$ that belong to the upper $(1-u)$-tail; an atom at the Value-at-Risk is split exactly as in the quantile representation of the Expected Shortfall \citep{Acerbi.2002}.

\begin{lemma}
\label{lem:tail_transfer}
Fix $u^\ast\in[0,1)$ with tail domain $\D$. For every $u\in[u^\ast,1)$, one has $\theta_u\in[0,1]$, $\pi_u$ vanishes $\PP_S$-almost everywhere outside $\D$, and $\e[\pi_u(S)]=1-u$. Moreover, for every Borel function $\phi:\RR\to\RR$:
\begin{enumerate}
\item[\textup{(a)}] if $\phi\ge 0$ or $\e\bigl[|\phi(S)|\,\mathds{1}_{\{S\in\D\}}\bigr]<\infty$, then
\begin{equation}
\label{eq:tail_transfer}
\int_u^1\phi\bigl(q(v)\bigr)\,dv=\e\bigl[\phi(S)\,\pi_u(S)\bigr];
\end{equation}
\item[\textup{(b)}] $\phi\circ q\in L^1(u^\ast,1)$ if and only if $\e\bigl[|\phi(S)|\,\mathds{1}_{\{S\in\D\}}\bigr]<\infty$, and $\phi\circ q=0$ almost everywhere on $(u^\ast,1)$ if and only if $\phi(S)=0$ almost surely on $\{S\in\D\}$;
\item[\textup{(c)}] $\phi\circ q$ is non-decreasing (non-increasing) on $(u^\ast,1)\setminus L$ for some Lebesgue-null set $L$ if and only if $\phi$ is $\PP_S$-almost surely non-decreasing (non-increasing) on $\D$, that is, $\phi$ is non-decreasing (non-increasing) on $\D\setminus\mathcal{N}$ for some $\PP_S$-null set $\mathcal{N}$.
\end{enumerate}
\end{lemma}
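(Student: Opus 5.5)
The plan is to reduce everything to a single identity: for every Borel set $B$, the pushforward of Lebesgue measure on $(u,1)$ under $q$ equals the measure $\phi\mapsto\e[\phi(S)\pi_u(S)]$. Part (a) is this identity for general $\phi$, and parts (b) and (c) follow from it.

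First, the elementary claims about $\theta_u$. When $\PP(S=q(u))>0$, I will show $F_S(q(u)^-)\le u\le F_S(q(u))$. The right inequality is \eqref{eq:quantile_galois} with $s=q(u)$. For the left one, any $s<q(u)$ has $F_S(s)<u$, again by \eqref{eq:quantile_galois}. Hence $0\le F_S(q(u))-u\le \PP(S=q(u))$, which gives $\theta_u\in[0,1]$. Next,
\[
\e[\pi_u(S)]=1-F_S(q(u))+\theta_u\PP(S=q(u))=1-u,
\]
and this holds in the atomless case as well, where $F_S(q(u))=u$.

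For the vanishing claim, I will show that $\pi_u$ is supported on $\{s\ge q(u)\}$ and that this set sits inside $\D$ up to a $\PP_S$-null set.
\begin{itemize}
\item Since $q(u)\ge s^\ast$, the only possible problem is the point $s^\ast$ when $F_S(s^\ast)=u^\ast$. In that case, either $q(u)>s^\ast$, or $q(u)=s^\ast$ and $\theta_u=(u^\ast-u)/\PP(S=s^\ast)$, which is $\le 0$ and hence equals $0$, since $u\ge u^\ast$.
\item Points beyond $s_{\max}$ carry no mass.
\end{itemize}

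For (a), I start with $\phi=\mathds{1}_{(-\infty,s]}$. By \eqref{eq:quantile_galois}, $\int_u^1\mathds{1}_{\{q(v)\le s\}}dv=(F_S(s)-u)^+$. I will check that this equals $\e[\mathds{1}_{\{S\le s\}}\pi_u(S)]$ in three cases:
\begin{itemize}
\item $s<q(u)$: both sides vanish, since $F_S(s)<u$.
\item $s\ge q(u)$: the right side is $F_S(s)-F_S(q(u))+\theta_u\PP(S=q(u))=F_S(s)-u$.
\item $s=q(u)$ is included in the previous case.
\end{itemize}
Both sides are finite measures on $\RR$ that agree on a $\pi$-system generating the Borel sets, so Dynkin's theorem gives agreement on all Borel sets. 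Simple functions and monotone convergence then give the identity for $\phi\ge0$. For integrable $\phi$, I split $\phi=\phi^+-\phi^-$ and use $\pi_u\le\mathds{1}_{\D}$ $\PP_S$-a.e.

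For (b), I apply (a) with $u=u^\ast$ to $|\phi|$ and to $\mathds{1}_{\{\phi\neq0\}}$. I also need $\pi_{u^\ast}>0$ $\PP_S$-a.e. on $\D$. Off the point $s^\ast$ we have $\pi_{u^\ast}=1$ on $\D$. At $s^\ast$, the point belongs to $\D$ only if $F_S(s^\ast)>u^\ast$, and then $\theta_{u^\ast}>0$. Consequently, $\e[\psi(S)\pi_{u^\ast}(S)]$ is finite (respectively zero) exactly when $\e[\psi(S)\mathds{1}_{\D}(S)]$ is. Here I use that $\pi_{u^\ast}\le1$, and that $\pi_{u^\ast}$ is bounded below by a positive constant on $\D$, because it takes only the values $1$ and $\theta_{u^\ast}$ there.

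Part (c) is the main obstacle, because it compares pointwise monotonicity rather than integrals.

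For the forward direction, I assume $\phi\circ q$ is monotone off a null set $L$. I will take $\mathcal N=\{s\in\D:\lambda(q^{-1}(s)\setminus L)=0\}$, the set of points whose quantile preimage is null after removing $L$. The set $q^{-1}(s)$ is an interval, so this definition makes sense. By (b) applied to $\mathds{1}_{\mathcal N}$, $\PP_S(\mathcal N)=0$. Each $s\in\D\setminus\mathcal N$ has some $v\in q^{-1}(s)\setminus L$, so $\phi(s)=\phi(q(v))$. For $s<t$ in $\D\setminus\mathcal N$, the corresponding $v<w$ by monotonicity of $q$, and the order transfers to $\phi$.

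For the converse, I take $L=\{v\in(u^\ast,1):q(v)\in\mathcal N\}$. This set is Lebesgue-null by (a), since $\lambda(L)=\e[\mathds{1}_{\mathcal N}(S)\pi_{u^\ast}(S)]=0$. On the complement, $\phi\circ q$ is the composition of the non-decreasing map $q$ (valued in $\D\setminus\mathcal N$) with $\phi$, which is monotone on $\D\setminus\mathcal N$. Hence $\phi\circ q$ is monotone there.

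The delicate point is measurability of $\mathcal N$. I will handle it by noting that $s\mapsto\lambda(q^{-1}(s)\setminus L)$ vanishes except at the countably many atoms of $S$. So $\mathcal N$ differs from $\D$ only by a countable set, and is therefore Borel.
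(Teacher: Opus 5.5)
Your handling of $\theta_u$, of $\e[\pi_u(S)]$, of the support of $\pi_u$, and of part (b) is correct. So is the implication ``$\phi$ $\PP_S$-a.s.\ monotone $\Rightarrow$ $\phi\circ q$ monotone off a null set'' in (c). Your proof of (a) via half-lines and Dynkin's theorem is a valid alternative to the paper's direct route, which splits $(u,1)$ at $F_S(q(u))$ and uses $q(V)\eqd S$.

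The gap is in the other implication of (c), and it comes from your choice of $\mathcal{N}$. By \eqref{eq:quantile_galois}, every $v$ with $q(v)=s$ satisfies $F_S(s^-)\le v\le F_S(s)$. So when $\PP(S=s)=0$ the fibre $q^{-1}(s)$ has at most one point, and $\lambda(q^{-1}(s)\setminus L)=0$ holds automatically. Your $\mathcal{N}=\{s\in\D:\lambda(q^{-1}(s)\setminus L)=0\}$ therefore contains every non-atom of $S$ in $\D$. Your own measurability remark, that $\mathcal{N}$ is co-countable in $\D$, says exactly this. Hence $\PP_S(\mathcal{N})$ is the mass of the continuous part of $S$ on $\D$; for continuous $S$ you get $\mathcal{N}=\D$ and $\PP_S(\mathcal{N})=1-u^\ast>0$. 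The appeal to (b) fails because $\{v\in(u^\ast,1):q(v)\in\mathcal{N}\}$ is an uncountable union of null fibres and is not itself null. As written, the argument only shows that $\phi$ is monotone on the atoms of $S$.

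The fix is to require the fibre to lie inside $L$, not merely to be null after removing $L$. Set $\mathcal{N}:=\{s\in\D:q^{-1}(s)\cap(u^\ast,1)\subseteq L\}=\D\setminus q\bigl((u^\ast,1)\setminus L\bigr)$, which is the paper's choice. Then $\{v\in(u^\ast,1):q(v)\in\mathcal{N}\}\subseteq L$ by construction, so (b) gives $\PP(S\in\mathcal{N})=0$. Each $s\in\D\setminus\mathcal{N}$ still has a representative $v\notin L$ with $q(v)=s$, so the rest of your argument goes through unchanged.

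Your measurability argument no longer applies, since this $\mathcal{N}$ need not be co-countable. You can either read ``null'' in the completion of $\PP_S$, or argue as follows. Enlarge $L$ to a Borel null set that also contains the countably many discontinuities of $q$. Choose a $\sigma$-compact $K\subseteq(u^\ast,1)\setminus L$ of full measure, so that $q(K)$ is $\sigma$-compact. The preimage of $\D\setminus q(K)$ lies in $(u^\ast,1)\setminus K$, so $\D\setminus q(K)$ is a Borel $\PP_S$-null set by (b), and $\phi$ is monotone on $\D\cap q(K)$.
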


Throughout the paper, a threshold $u^\ast\in[0,1)$ with tail domain $\D$ is fixed, and we impose, for all $j\in N$, the \emph{tail-integrability conditions}
\begin{equation}
\label{eq:standing_integrability}
\e\bigl[|X_j|\,\mathds{1}_{\{S\in \D\}}\bigr]<\infty
\qquad\text{and}\qquad
\e\bigl[|R_j|\,\mathds{1}_{\{S\in\D\}}\bigr]<\infty,
\end{equation}
together with the \emph{consistency condition}
\begin{equation}
\label{eq:standing_consistency}
\e\bigl[X_j\,\mathds{1}_{\{S=0\}}\bigr]=0\qquad\text{whenever } 0\in\D.
\end{equation}
Since $R_i$ need not be integrable over the whole sample space (see Proposition~\ref{prop:R_not_L1}), we define the conditional risk share on the tail only: $g_i:\RR\to\RR$ denotes a Borel function such that
\[
g_i(S)\,\mathds{1}_{\{S\in\D\}}=\e\bigl[R_i\,\mathds{1}_{\{S\in\D\}}\bigm| S\bigr]\qquad\text{almost surely}.
\]
equivalently, $g_i(s)=\e[R_i\mid S=s]$ for $\PP_S$-almost every $s\in\D$. Throughout, $g_i$ denotes this function and $g_i(S)$ the random variable obtained by evaluating it at the aggregate. Its values off $\D$ play no role, it is unique $\PP_S$-almost everywhere on $\D$, and $\sum_{j\in N}g_j=1$ $\PP_S$-almost everywhere on $\D$.

Every statement about the shape of $g_i$ below is a statement about this function on the tail domain, up to a $\PP_S$-null set. We say that $g_i$ is \emph{$\PP_S$-almost surely constant}, \emph{non-decreasing} or \emph{non-increasing} on $\D$ if there is a Borel set $\mathcal{N}$ with $\PP(S\in\mathcal{N})=0$ such that the property in question holds for the restriction of $g_i$ to $\D\setminus\mathcal{N}$, that is, for every pair $s\le t$ in $\D\setminus\mathcal{N}$; no assertion is made at a single prescribed point such as $s^\ast$, nor outside $\D$. By Lemma~\ref{lem:tail_transfer}(b)--(c) these are equivalent to the corresponding properties of $v\mapsto g_i(q(v))$ on $(u^\ast,1)$, up to a Lebesgue-null set.

The second condition in \eqref{eq:standing_integrability} is implied by the first whenever $s^\ast>0$, since then $S\ge s^\ast$ on $\{S\in\D\}$ and $\e[|R_i|\mathds{1}_{\{S\in \D\}}]\le\e[|X_i|\mathds{1}_{\{S\in \D\}}]/s^\ast$. The consistency condition \eqref{eq:standing_consistency} guarantees that the conditional share and the conditional expectation of the component are linked by the aggregate in the natural way. On $\{S\in\D\}$ we may pull the factor $S$ out of the conditional expectation, since $R_iS\,\mathds{1}_{\{S\in\D\}}=X_i\mathds{1}_{\{S\neq0\}}\mathds{1}_{\{S\in\D\}}$ is integrable; this gives
\[
S\,g_i(S)=\e[R_iS\mid S]=\e\bigl[X_i\mathds{1}_{\{S\neq 0\}}\bigm| S\bigr]=\e[X_i\mid S]-\mathds{1}_{\{S=0\}}\,\frac{\e\bigl[X_i\mathds{1}_{\{S=0\}}\bigr]}{\PP(S=0)},
\]
where the last term is present only if $0\in\D$ and $\PP(S=0)>0$. Consequently,
\begin{equation}
\label{eq:consistency_identity}
\e[X_i\mid S]=S\,g_i(S)\qquad\text{almost surely on }\{S\in\D\}.
\end{equation}
Condition \eqref{eq:standing_consistency} is void when $0\notin\D$, holds trivially when $\PP(S=0)=0$ or when all components are non-negative, and, since $\sum_{j\in N}\e[X_j\mathds{1}_{\{S=0\}}]=0$ always, merely rules out a zero aggregate concealing offsetting component expectations.

Set $h(v):=g_i\bigl(q(v)\bigr)$ for $v\in(u^\ast,1)$; by Lemma~\ref{lem:tail_transfer}(b), $h$ does not depend, up to a Lebesgue-null set, on the version of $g_i$. The tail-integrability conditions translate into the integrability of the three functions on which all subsequent arguments rest:
\begin{equation}
\label{eq:L1_triplet}
q,\ h,\ q\,h\ \in L^1(u^\ast,1).
\end{equation}
Indeed, by Lemma~\ref{lem:tail_transfer}(b) it suffices to observe that $\e[|S|\mathds{1}_{\{S\in\D\}}]\le\sum_{j\in N}\e[|X_j|\mathds{1}_{\{S\in\D\}}]<\infty$, that $|g_i(S)|\le\e[|R_i|\mid S]$ on $\{S\in\D\}$, and that $|S\,g_i(S)|=|\e[X_i\mathds{1}_{\{S\neq0\}}\mid S]|\le\e[|X_i|\mid S]$ on $\{S\in\D\}$.

\begin{remark}
\label{rem:regular_version_gi}
The mapping $s\mapsto g_i(s)$ is determined only up to a $\PP_S$-null set, and is thus arbitrary on gaps of the support of $S$ within $\D$. By Lemma~\ref{lem:tail_transfer}, tail constancy, tail monotonicity and tail quasi-concavity of $g_i$ are insensitive to such modifications once formulated $\PP_S$-almost surely, and they transfer to $h$. For pointwise statements, such as limits at the endpoints of $\D$ in Section~\ref{sec:allocation_convergence}, we fix a regular version: whenever $g_i$ has one of these properties $\PP_S$-almost surely on $\D$, we select a version having it for every $s\in\D$. Such a version always exists. Constancy is handled by the constant version. If $g_i$ is non-decreasing on $\D\setminus\mathcal{N}$ with $\PP_S(\mathcal{N})=0$, a non-decreasing version on $\D$ is obtained by monotone extension from $\D\setminus\mathcal{N}$, for instance $s\mapsto\sup\{g_i(t):t\in\D\setminus\mathcal{N},\,t\le s\}$, with the obvious modification at points of $\D$ lying below $\D\setminus\mathcal{N}$. Quasi-concavity is handled by the same extension on either side of a mode.
\end{remark}

We can now define the two allocation principles in full generality. Write $Q(u):=\int_u^1q(v)\,dv=(1-u)\,\mathrm{ES}_u(S)$, where $\mathrm{ES}_u(S)$ denotes the Expected Shortfall of $S$ at level $u$.

\begin{definition}
\label{def:allocations_paradigms}
For $i\in N$ and $u\in[u^\ast,1)$ such that $Q(u)\neq0$, define
\[
\Phi_{i,u}=\frac{\frac{1}{1-u}\int_{u}^1 q(v)\,g_i\bigl(q(v)\bigr)\,dv}{\frac{1}{1-u}\int_{u}^1 q(v)\,dv}\qquad \text{and}\qquad \Psi_{i,u}=\frac{1}{1-u}\int_{u}^1 g_i\bigl(q(v)\bigr)\,dv.
\]
The limit case $u\to 1^{-}$ is treated in Section~\ref{sec:allocation_convergence}.
\end{definition}

By \eqref{eq:L1_triplet}, both quantities are well defined and finite. Since $q$ is non-decreasing, $u\mapsto\mathrm{ES}_u(S)$ is non-decreasing; in particular, $\mathrm{ES}_{u^\ast}(S)>0$ ensures $Q(u)>0$ for every $u\in[u^\ast,1)$. The next proposition identifies the probabilistic content of Definition~\ref{def:allocations_paradigms}.

\begin{proposition}
\label{prop:allocation_representation}
For every $u\in[u^\ast,1)$ with $Q(u)\neq 0$,
\begin{equation}
\label{eq:allocation_representation}
\Phi_{i,u}=\frac{\e\bigl[X_i\,\pi_u(S)\bigr]}{\e\bigl[S\,\pi_u(S)\bigr]}
\qquad\text{and}\qquad
\Psi_{i,u}=\frac{\e\bigl[R_i\,\pi_u(S)\bigr]}{\e\bigl[\pi_u(S)\bigr]},
\end{equation}
where $\e[\pi_u(S)]=1-u$ and $\e[S\,\pi_u(S)]=\e[S\mathds{1}_{\{S>q(u)\}}]+\bigl(F_S(q(u))-u\bigr)q(u)=(1-u)\,\mathrm{ES}_u(S)$. In particular, $\sum_{i\in N}\Phi_{i,u}=\sum_{i\in N}\Psi_{i,u}=1$. If $F_S(q(u))=u$, which holds for all $u$ when $S$ is continuous, then $\pi_u(S)=\mathds{1}_{\{S>q(u)\}}$ almost surely, and the two allocations reduce to \eqref{eq:intro_Phi} and \eqref{eq:intro_Psi}, respectively.
\end{proposition}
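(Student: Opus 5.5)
The whole argument reduces to applying the tail-transfer identity \eqref{eq:tail_transfer} of Lemma~\ref{lem:tail_transfer}(a) with suitable choices of $\phi$, and then undoing the conditioning on $S$ by the tower property.

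For the denominators, take $\phi\equiv 1$. Lemma~\ref{lem:tail_transfer} then gives directly $\int_u^1dv=1-u=\e[\pi_u(S)]$.

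Next take $\phi(s)=s$, which is admissible because $\e[|S|\mathds{1}_{\{S\in\D\}}]<\infty$ by \eqref{eq:standing_integrability}. This yields $Q(u)=\int_u^1q(v)\,dv=\e[S\pi_u(S)]$. To get the explicit form, I would expand $\pi_u$ from \eqref{eq:tail_weight}. The term $\theta_u\,\e[S\mathds{1}_{\{S=q(u)\}}]$ equals $\theta_u\,q(u)\,\PP(S=q(u))=(F_S(q(u))-u)\,q(u)$. This also holds when the atom is absent, since then $F_S(q(u))=u$ by right-continuity combined with \eqref{eq:quantile_galois}.

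For the numerators, take $\phi(s)=s\,g_i(s)$ and $\phi=g_i$. Both are integrable on $\{S\in\D\}$ by \eqref{eq:L1_triplet} together with Lemma~\ref{lem:tail_transfer}(b). This gives
\[
\int_u^1 q(v)g_i(q(v))\,dv=\e[S g_i(S)\pi_u(S)],\qquad \int_u^1 g_i(q(v))\,dv=\e[g_i(S)\pi_u(S)].
\]
Since $\pi_u$ vanishes $\PP_S$-a.e.\ off $\D$ and is bounded by one, we may insert $\mathds{1}_{\{S\in\D\}}$. The identity \eqref{eq:consistency_identity} then replaces $Sg_i(S)$ by $\e[X_i\mid S]$ on $\{S\in\D\}$. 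The definition of $g_i$ replaces $g_i(S)\mathds{1}_{\{S\in\D\}}$ by $\e[R_i\mathds{1}_{\{S\in\D\}}\mid S]$. Because $\pi_u(S)$ is a bounded $\sigma(S)$-measurable factor, the tower property gives $\e[X_i\pi_u(S)]$ and $\e[R_i\pi_u(S)]$, with integrability again supplied by \eqref{eq:standing_integrability}. Dividing the two representations yields \eqref{eq:allocation_representation}.

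The summation claims follow from linearity. For $\Phi$, $\sum_i X_i=S$. For $\Psi$, $\sum_iR_i=1$ identically, so the numerator sums to $\e[\pi_u(S)]$.

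For the final assertion, suppose $F_S(q(u))=u$. If $\PP(S=q(u))>0$, then $\theta_u=0$; otherwise the atom term is null by convention. In both cases $\pi_u(S)=\mathds{1}_{\{S>q(u)\}}$ a.s. Moreover, $\PP(S>q(u))=1-u>0$, so the ratios become the conditional expectations in \eqref{eq:intro_Phi} and \eqref{eq:intro_Psi}. Continuity of $S$ gives $F_S(q(u))=u$ for all $u$.

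The only delicate step is justifying the conditioning argument: the version ambiguity of $g_i$, and making sure the integrands are integrable so that the tower property applies. Both are already settled by \eqref{eq:L1_triplet} and \eqref{eq:consistency_identity}.
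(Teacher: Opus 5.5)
Your proof is correct and follows essentially the same route as the paper. Like the paper, you apply the tail-transfer identity with $\phi(s)=s$, $\phi=g_i$ and $\phi(s)=s\,g_i(s)$, and then use that $\pi_u(S)$ is $\sigma(S)$-measurable and vanishes off $\{S\in\D\}$, together with the consistency identity and the tower property. One small wording point: in the atomless case, $F_S(q(u))=u$ comes from the absence of a jump of $F_S$ at $q(u)$, which gives $F_S(q(u))=F_S(q(u)^-)\le u$, combined with the Galois relation, and not from right-continuity.
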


Proposition~\ref{prop:allocation_representation} shows that Definition~\ref{def:allocations_paradigms} is the canonical extension of \eqref{eq:intro_Phi} and \eqref{eq:intro_Psi} to arbitrary aggregates. The numerator $\e[X_i\,\pi_u(S)]/(1-u)$ of $\Phi_{i,u}$ is the contribution of $X_i$ to $\mathrm{ES}_u(S)$ \citep{Tasche2002,Kalkbrener2005}, with the atom of $S$ at its Value-at-Risk apportioned in the proportion $\theta_u$, so that $\Phi_{i,u}$ is the proportional Euler allocation of the Expected Shortfall, whereas $\Psi_{i,u}$ is the expected realized share over the same randomized tail. Both are weighted averages $\e[R_i\,w(S)]/\e[w(S)]$ of the realized share, with the severity weight $w(s)=s\,\pi_u(s)$ and the flat weight $w(s)=\pi_u(s)$, respectively. Note that $\Phi_{i,u}$ never depends on the constants $c_j$, whereas $\Psi_{i,u}$ does whenever $\pi_u(0)\,\PP(S=0)>0$. The discrepancy between the two weighting schemes is captured by the following identity.

\begin{lemma}
\label{lem:cov_representation}
Let $V\sim\mathcal{U}(0,1)$, and for $u\in[u^\ast,1)$ write $\e_u[\cdot]:=\e[\cdot\mid V>u]$ and $\Cov_u(\cdot,\cdot):=\Cov(\cdot,\cdot\mid V>u)$. Then, for every $u\in[u^\ast,1)$ with $Q(u)\neq0$,
\begin{equation}
\label{eq:cov_representation}
\Phi_{i,u}-\Psi_{i,u}=\frac{\Cov_u\bigl(q(V),h(V)\bigr)}{\e_u\bigl[q(V)\bigr]}.
\end{equation}
\end{lemma}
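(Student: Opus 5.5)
The identity is essentially a rewriting of Definition~\ref{def:allocations_paradigms} in probabilistic language, so the plan is to translate each integral over $(u,1)$ into a conditional expectation given $\{V>u\}$ and then do elementary algebra. The single fact needed is this: for any $f\in L^1(u^\ast,1)$ and $u\in[u^\ast,1)$, since $\PP(V>u)=1-u>0$ and $V$ has Lebesgue density on $(0,1)$,
\[
\e_u[f(V)]=\frac{1}{1-u}\int_u^1 f(v)\,dv .
\]
By \eqref{eq:L1_triplet}, $q$, $h$ and $qh$ all belong to $L^1(u^\ast,1)$, hence to $L^1(u,1)$ for every $u\ge u^\ast$. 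Consequently $q(V)$, $h(V)$ and $q(V)h(V)$ are integrable under $\PP(\cdot\mid V>u)$, and the conditional covariance
\[
\Cov_u\bigl(q(V),h(V)\bigr)=\e_u[q(V)h(V)]-\e_u[q(V)]\,\e_u[h(V)]
\]
is well defined and finite. This holds even though $q(V)$ and $h(V)$ need not be square integrable, because only their product is required to be integrable.

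Applying the fact to Definition~\ref{def:allocations_paradigms} gives
\[
\Phi_{i,u}=\frac{\e_u[q(V)h(V)]}{\e_u[q(V)]},\qquad \Psi_{i,u}=\e_u[h(V)] .
\]
The denominator is nonzero because $\e_u[q(V)]=Q(u)/(1-u)\neq0$ by hypothesis. Subtracting, we get
\[
\Phi_{i,u}-\Psi_{i,u}=\frac{\e_u[q(V)h(V)]-\e_u[q(V)]\,\e_u[h(V)]}{\e_u[q(V)]},
\]
and the numerator is exactly the conditional covariance, which is \eqref{eq:cov_representation}.

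One subtlety is that $h$ is only defined up to a Lebesgue-null set. It depends on the version of $g_i$, and it is evaluated at $q(v)$, which lies in $\D$ for $v>u^\ast$. However, $V$ has a density, so all the expectations above are unchanged by null modifications, and Lemma~\ref{lem:tail_transfer}(b) guarantees that $h$ is version-independent almost everywhere. The only step requiring care is the integrability justification for the product term in the covariance, and \eqref{eq:L1_triplet} settles it. There is no real obstacle here, since the rest is bookkeeping.
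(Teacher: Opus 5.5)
Your proposal is correct and follows the paper's own proof: rewrite both allocations as $\Phi_{i,u}=\e_u[q(V)h(V)]/\e_u[q(V)]$ and $\Psi_{i,u}=\e_u[h(V)]$, with finiteness coming from \eqref{eq:L1_triplet}. The subtraction then gives the conditional covariance identity. Your remark that only $qh\in L^1$, and not square integrability, is needed for the covariance to be well defined is exactly the point the paper relies on implicitly.
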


When $S$ is continuous, the pair $(q(V),h(V))$ given $V>u$ has the law of $(S,g_i(S))$ given $S>q(u)$, and \eqref{eq:cov_representation} reduces, by the tower property, to identity \eqref{eq:intro_cov} of the Introduction. The comparison of $\Phi_{i,u}$ and $\Psi_{i,u}$ is therefore the study of the sign of a tail covariance between the non-decreasing function $q$ and the function $h$. We first examine the extreme case in which this covariance vanishes identically.

\begin{theorem}
\label{thm:tail_allocations_coincide}
Fix $u^\ast\in[0,1)$, let $\D$ be its effective tail domain, and assume $Q(u)\neq0$ for all $u\in[u^\ast,1)$. Then
$\Phi_{i,u}=\Psi_{i,u}$ for all $u\in[u^\ast,1)$ if and only if
there exists $r_{i}\in\RR$ such that
\begin{equation}
\label{eq:tail_constancy}
  g_{i}(S)=r_{i}\quad \textup{almost surely\ on }\D.
\end{equation}
In this case, $\Phi_{i,u}=\Psi_{i,u}=r_i$ for all $u\in[u^\ast,1)$.
\end{theorem}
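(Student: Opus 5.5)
The plan is to work entirely on the quantile side, with $h(v)=g_i(q(v))$ and $Q(u)=\int_u^1 q(v)\,dv$. By Lemma~\ref{lem:tail_transfer}(b), condition \eqref{eq:tail_constancy} is equivalent to $h=r_i$ almost everywhere on $(u^\ast,1)$. So it suffices to prove that $\Phi_{i,u}=\Psi_{i,u}$ for all $u\in[u^\ast,1)$ holds if and only if $h$ is a.e.\ constant on $(u^\ast,1)$.

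\emph{Sufficiency.} Assume $h=r_i$ a.e. Definition~\ref{def:allocations_paradigms} then gives $\Psi_{i,u}=r_i$ directly. It also gives $\Phi_{i,u}=r_iQ(u)/Q(u)=r_i$, which is legitimate because $Q(u)\neq 0$ and $qh=r_iq\in L^1$ by \eqref{eq:L1_triplet}. This proves the final assertion of the theorem as well.

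\emph{Necessity.} Set $H(u):=\int_u^1 h(v)\,dv$ and $K(u):=\int_u^1 q(v)h(v)\,dv$. Both are absolutely continuous on $[u^\ast,1)$ by \eqref{eq:L1_triplet}, and so is $Q$. Equality of the two allocations means
\[
K(u)=\frac{Q(u)H(u)}{1-u}\quad\text{for all }u\in[u^\ast,1).
\]
Write $\Psi(u):=H(u)/(1-u)$, so that $K=Q\Psi$. Differentiating a.e.\ and using $K'=-qh$, $Q'=-q$, $H'=-h$ gives
\[
-qh=-q\Psi+Q\Psi',\qquad \Psi'(u)=\frac{\Psi(u)-h(u)}{1-u}.
\]
Substituting the second identity into the first gives
\[
q(h-\Psi)=\frac{Q}{1-u}(h-\Psi),
\]
that is, $(h-\Psi)\bigl(\mathrm{ES}_u(S)-q(u)\bigr)=0$ a.e. Hence $h=\Psi$ a.e.\ on the set $\{u:\mathrm{ES}_u(S)\neq q(u)\}$. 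This forces $\Psi'=0$ a.e.\ there.

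\emph{The obstacle: the set $E=\{u\in(u^\ast,1):\mathrm{ES}_u(S)=q(u)\}$.} Since $q$ is non-decreasing, $\mathrm{ES}_u(S)=q(u)$ means exactly that $q$ is constant on $[u,1)$. So if $E$ is nonempty, it is an interval $[u_0,1)$ on which $q\equiv s_{\max}$, the top atom. On $E$, $h$ is a.e.\ constant, because $g_i$ is evaluated at the single point $s_{\max}$. Consequently $\Psi$ is constant on $E$, equal to that value, and $h=\Psi$ a.e.\ there too.

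Combining the two regions, $\Psi'=(\Psi-h)/(1-u)=0$ a.e.\ on $(u^\ast,1)$. Since $\Psi$ is absolutely continuous on compact subintervals, it is constant, equal to some $r_i$. Then $h=\Psi=r_i$ a.e., which is \eqref{eq:tail_constancy} via Lemma~\ref{lem:tail_transfer}(b).

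An alternative route to the same conclusion is to differentiate the identity $H(u)=(1-u)r_i$ once $\Psi$ is known to be constant. The main care needed is justifying the a.e.\ differentiation of products of absolutely continuous functions, and treating the top-atom case separately as above.
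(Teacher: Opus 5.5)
Your proof is correct and follows the paper's argument essentially step for step. You derive the same factorization $\bigl[\Psi_{i,u}-h(u)\bigr]\bigl(\mathrm{ES}_u(S)-q(u)\bigr)=0$ a.e.\ (the paper writes it with $\Delta(u)=Q(u)-(1-u)q(u)$), and you treat separately the degenerate stretch where $q$ is constant at an atom of $S$ at $s_{\max}$, exactly as the paper does. The differences are only cosmetic: you glue the two regions by noting $h=\Psi_{i,\cdot}$ a.e.\ on all of $(u^\ast,1)$ and integrating $\Psi'=0$ once, whereas the paper invokes continuity of $\Psi_{i,u}$ at $u_0$; and your set $E$ may be $(u_0,1)$ rather than $[u_0,1)$ when $q$ jumps at $u_0$, which is immaterial for the a.e.\ argument.
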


The proof reveals a dichotomy: wherever the aggregate retains dispersion beyond its Value-at-Risk ($\Delta>0$), equality forces the conditional share to be flat, while on a degenerate stretch of the tail ($\Delta=0$) equality is automatic and, by continuity, inherits the preceding level. Only \eqref{eq:L1_triplet} was used, so the argument applies verbatim to any $h$ with $h,\,qh\in L^1(u^\ast,1)$ and any threshold $u'\in[u^\ast,1)$ in place of $u^\ast$. For $u^\ast=0$, Theorem~\ref{thm:tail_allocations_coincide} covers the global setting studied by \citet{Mohammed2021}; see Corollary~\ref{cor:global_version_0}. Since condition \eqref{eq:tail_constancy} concerns an unobservable conditional expectation, it is useful to recast it through observable moment identities, which we do next.

\begin{corollary}
\label{cor:tail_weighted}
Retain the assumptions of Theorem~\ref{thm:tail_allocations_coincide}. A Borel function $w:\RR\to\RR$ is called \emph{admissible} if
\begin{enumerate}
  \item[(i)] $w(s)=0$ for all $s\notin\D$,
  \item[(ii)] $\e[|w(S)|]<\infty$ and $\e[|R_{i}\,w(S)|]<\infty$, and
  \item[(iii)] $\e[w(S)]\neq 0$.
\end{enumerate}
Then $g_i(S)=r_i$ almost surely on $\D$ if and only if
\begin{equation}
\label{eq:tail_weighted}
  r_{i}=\frac{\e[R_{i}\,w(S)]}{\e[w(S)]}\qquad\text{for every admissible }w.
\end{equation}
In this case, if $0\in\D$ and $\PP(S=0)>0$, then necessarily $c_i=r_i$, and for every admissible $w$ with $\e[w(S)\mathds{1}_{\{S\neq0\}}]\neq0$,
\begin{equation}
\label{eq:ci_formula}
  r_{i}=c_{i}
  =\frac{\e\!\left[\frac{X_{i}}{S}\,w(S)\,\mathds{1}_{\{S\neq0\}}\right]}
        {\e\bigl[w(S)\,\mathds{1}_{\{S\neq0\}}\bigr]}.
\end{equation}
Otherwise, \eqref{eq:tail_weighted} does not involve $c_i$, which remains free subject to $\sum_{j\in N}c_j=1$.
\end{corollary}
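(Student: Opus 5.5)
The plan is to reduce everything to the tower property on the tail. For an admissible $w$, condition (i) gives $w(S)=w(S)\mathds{1}_{\{S\in\D\}}$. By (ii), $R_i w(S)$ is integrable, so conditioning on $S$ yields
\[
\e[R_i\,w(S)]=\e\bigl[\e[R_i\mathds{1}_{\{S\in\D\}}\mid S]\,w(S)\bigr]=\e[g_i(S)\,w(S)].
\]
This identity drives both directions.

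\emph{Sufficiency.} Suppose $g_i(S)=r_i$ almost surely on $\D$. The identity gives $\e[R_i w(S)]=r_i\,\e[w(S)]$, and (iii) lets us divide, which is \eqref{eq:tail_weighted}.

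\emph{Necessity.} Suppose \eqref{eq:tail_weighted} holds. I will test it against enough admissible weights to pin $g_i$ down.
\begin{itemize}
\item First take $w_0(s)=\mathds{1}_{\{s\in\D\}}$. It is admissible because $\PP(S\in\D)\ge 1-u^\ast>0$ and by \eqref{eq:standing_integrability}.
\item For any Borel $B\subseteq\D$, take $w=w_0+\mathds{1}_B$. This is bounded and supported in $\D$, and $\e[w(S)]=\PP(S\in\D)+\PP(S\in B)>0$, so it is admissible.
\end{itemize}
Subtracting the identity for $w_0$ from the identity for $w_0+\mathds{1}_B$, and using linearity of both sides of $\e[g_i(S)w(S)]=r_i\e[w(S)]$ in $w$, gives
\[
\e\bigl[(g_i(S)-r_i)\mathds{1}_{\{S\in B\}}\bigr]=0\qquad\text{for every Borel }B\subseteq\D.
\]
Now choose $B=\{s\in\D: g_i(s)>r_i\}$ and then $B=\{s\in\D: g_i(s)<r_i\}$. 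Together these force $g_i(S)=r_i$ almost surely on $\{S\in\D\}$. The only care needed is the admissibility check, especially (iii); the shift by $w_0$ is what guarantees it.

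\emph{The role of $c_i$.} Assume $0\in\D$ and $\PP(S=0)>0$. On $\{S=0\}$ we have $R_i=c_i$ identically, so $\sigma(S)$-measurability gives $g_i(0)=c_i$. Since $\{0\}$ is an atom of $\PP_S$, the almost-sure constancy evaluated at $s=0$ forces $c_i=r_i$.

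For \eqref{eq:ci_formula}, take an admissible $w$ with $\e[w(S)\mathds{1}_{\{S\neq0\}}]\neq0$. The weight $w\mathds{1}_{\{S\ne0\}}$ is then also admissible, and on $\{S\neq0\}$ we have $R_i=X_i/S$. Applying \eqref{eq:tail_weighted} to this weight gives the displayed ratio.

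Otherwise, either $0\notin\D$ or $\PP(S=0)=0$. In both cases $w(S)\mathds{1}_{\{S=0\}}=0$ almost surely, so $R_iw(S)$ does not involve $c_i$, and $c_i$ remains free.

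I do not expect a real obstacle. The most delicate point is checking admissibility of the test weights (integrability of $R_i w$, which needs \eqref{eq:standing_integrability}, and nonzero mass), together with the atom argument at $0$.
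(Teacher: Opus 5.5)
Your proposal is correct and takes essentially the same route as the paper: the tower property gives sufficiency, indicator test weights on Borel subsets of $\D$ give necessity, and the atom at the origin forces $c_i=r_i$. The differences are cosmetic. You shift by $w_0=\mathds{1}_{\D}$ so that (iii) holds for every $B$, whereas the paper uses $\mathds{1}_B$ only when $\PP(S\in B)>0$. You also obtain \eqref{eq:ci_formula} by applying \eqref{eq:tail_weighted} to the admissible weight $w(s)\mathds{1}_{\{s\neq0\}}$, whereas the paper expands $R_i$ and substitutes $c_i=r_i$.
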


Under tail constancy, a whole family of allocation principles thus collapses to the single value $r_i$. By \eqref{eq:allocation_representation}, the flat weight $w=\pi_u$ returns $\Psi_{i,u}$ and the severity weight $w(s)=s\,\pi_u(s)$ returns $\Phi_{i,u}$, for every $u\in[u^\ast,1)$. The covariance weight $w(s)=s(s-\mu)\mathds{1}_{\{s\in\D\}}$, $\mu:=\e[S\mid S\in\D]$, is admissible provided $\e[X_j^2\mathds{1}_{\{S\in\D\}}]<\infty$ for all $j\in N$ and $S$ is non-degenerate on $\{S\in\D\}$; since $R_iS=X_i\mathds{1}_{\{S\neq0\}}$, condition \eqref{eq:standing_consistency} gives $\e[R_i\,w(S)]=\e[X_i(S-\mu)\mathds{1}_{\{S\in\D\}}]$, whence
\[
  r_{i}=\frac{\Cov(X_{i},S\mid S\in\D)}
              {\Var(S\mid S\in\D)},
\]
the tail covariance (beta) allocation. Thus the expected share, the proportional Expected Shortfall allocation and the tail covariance allocation all coincide under tail constancy. The factor $s$ in the covariance weight, as opposed to $(s-\mu)^2$, is what converts the share $R_i$ back into the component $X_i$.

Tail constancy can equivalently be expressed through a transform identity, which is particularly convenient for models specified through characteristic or cumulant generating functions, as in Section~\ref{sec:Independent_EDMs}.

\begin{proposition}\label{prop:tail_allocations_coincide_phi}
Fix $u^{\ast}\in[0,1)$ and let $\D$ be its effective tail domain. Then
$g_i(S)=r_i$ almost surely on $\D\setminus\{0\}$ if and only if
\begin{equation}
\label{eq:tail_cf_condition}
\e\bigl[X_i\,e^{\i tS}\,\mid {S\in\D}\bigr]
=r_i\,\e\bigl[S\,e^{\i tS}\,\mid {S\in\D}\bigr]
\qquad\text{for all }t\in\RR.
\end{equation}
Consequently, under the assumptions of Theorem~\ref{thm:tail_allocations_coincide}, $\Phi_{i,u}=\Psi_{i,u}$ for all $u\in[u^\ast,1)$ if and only if \eqref{eq:tail_cf_condition} holds and, in addition, $c_i=r_i$ when $0\in\D$ and $\PP(S=0)>0$.
\end{proposition}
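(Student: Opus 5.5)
The plan is to reduce \eqref{eq:tail_cf_condition} to the vanishing of the Fourier transform of a finite signed measure, and then invoke uniqueness of Fourier--Stieltjes transforms. Since $\PP(S\in\D)\ge 1-u^\ast>0$, conditioning on $\{S\in\D\}$ only divides both sides by the same positive constant. So \eqref{eq:tail_cf_condition} is equivalent to
\[
\e\bigl[X_i e^{\i tS}\mathds{1}_{\{S\in\D\}}\bigr]=r_i\,\e\bigl[S e^{\i tS}\mathds{1}_{\{S\in\D\}}\bigr]\qquad\text{for all }t\in\RR .
\]
On the left I condition on $S$ and use the consistency identity \eqref{eq:consistency_identity}, $\e[X_i\mid S]=S\,g_i(S)$ almost surely on $\{S\in\D\}$. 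This is legitimate because $X_i\mathds{1}_{\{S\in\D\}}$ is integrable by \eqref{eq:standing_integrability} and $e^{\i tS}\mathds{1}_{\{S\in\D\}}$ is a bounded $\sigma(S)$-measurable factor. The condition thus becomes
\[
\int_{\D} e^{\i ts}\, s\bigl(g_i(s)-r_i\bigr)\,\PP_S(ds)=0\qquad\text{for all }t\in\RR .
\]

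Next I define the signed measure $\mu(B):=\e\bigl[S\,(g_i(S)-r_i)\mathds{1}_{\{S\in\D\cap B\}}\bigr]$ for Borel sets $B$. It is finite because $S\,g_i(S)\mathds{1}_{\{S\in\D\}}$ and $S\mathds{1}_{\{S\in\D\}}$ are integrable; this is the content of \eqref{eq:L1_triplet}, or equivalently $|S g_i(S)|\le\e[|X_i|\mid S]$ and $\e[|S|\mathds{1}_{\{S\in\D\}}]<\infty$. The display above says exactly that $\hat\mu\equiv 0$.

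The one step needing care is uniqueness for signed rather than probability measures. I would write the Jordan decomposition $\mu=\mu^+-\mu^-$ into two finite positive measures. They have identical Fourier transforms, so their total masses agree (take $t=0$). If the common mass is zero there is nothing to prove. Otherwise, after normalization they are probability measures with the same characteristic function, hence equal, and so $\mu=0$. Consequently the density satisfies $s\,(g_i(s)-r_i)=0$ for $\PP_S$-a.e.\ $s\in\D$, that is, $g_i(S)=r_i$ almost surely on $\D\setminus\{0\}$. The converse is immediate: if $g_i=r_i$ $\PP_S$-a.e.\ on $\D\setminus\{0\}$, the integrand $s(g_i(s)-r_i)$ vanishes $\PP_S$-a.e.\ on $\D$, since the factor $s$ kills the point $0$. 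Hence $\mu=0$ and \eqref{eq:tail_cf_condition} holds.

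For the consequence I combine this with Theorem~\ref{thm:tail_allocations_coincide}, which equates $\Phi_{i,u}=\Psi_{i,u}$ for all $u\in[u^\ast,1)$ with $g_i(S)=r_i$ almost surely on $\D$. That statement differs from the one just proved only on the event $\{S=0\}$, and only when $0\in\D$ and $\PP(S=0)>0$. In that case $R_i=c_i$ on $\{S=0\}$, so the version of $g_i$ is forced to satisfy $g_i(0)=\e[R_i\mid S=0]=c_i$. Hence constancy on all of $\D$ is equivalent to \eqref{eq:tail_cf_condition} together with $c_i=r_i$, in agreement with Corollary~\ref{cor:tail_weighted}. When $0\notin\D$ or $\PP(S=0)=0$, the point $0$ is $\PP_S$-null in $\D$ and no extra condition arises.
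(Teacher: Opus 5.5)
Your proposal is correct and follows essentially the same route as the paper. The paper also uses the consistency identity to turn \eqref{eq:tail_cf_condition} into the vanishing Fourier--Stieltjes transform of the finite signed measure with $\PP_S$-density $s\,(g_i(s)-r_i)$ on $\D$, concludes from uniqueness that this measure is zero, and obtains the consequence from Theorem~\ref{thm:tail_allocations_coincide} together with $g_i(0)=c_i$; your Jordan-decomposition reduction of uniqueness for signed measures to the probability case spells out a step the paper leaves implicit.
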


If $g_j(S)=r_j$ on $\D\setminus\{0\}$ for all $j\in N$, then $\sum_{j\in N}r_j=1$, so that the choice $c_j=r_j$ is always compatible with the normalization of the constants. Specializing to $u^\ast=0$ yields the global statements used in Section~\ref{sec:Independent_EDMs}.

\begin{corollary}
\label{cor:global_version_0}
Suppose that $\e[|X_j|]<\infty$ and $\e[|R_j|]<\infty$ for all $j\in N$, that \eqref{eq:standing_consistency} holds with $u^\ast=0$, and that $\e[S]>0$. Then the tail domain $\mathfrak{S}_{0}$ carries the full mass of $S$, all conditional expectations over $\mathfrak{S}_{0}$ reduce to their unconditional counterparts, and:
\begin{enumerate}
\item[\textup{(i)}] $\Phi_{i,u}=\Psi_{i,u}$ for all $u\in[0,1)$ if and only if $g_i(S)=r_i$ almost surely for some $r_i\in\RR$, with $c_i=r_i$ when $\PP(S=0)>0$. Apart from the latter proviso, this is equivalent to
\begin{equation}
\label{eq:global_cf}
    \e\bigl[X_i\,e^{\i t S}\bigr]
    = r_i\,\e\bigl[S\,e^{\i t S}\bigr]
    \qquad\text{for all }t\in\RR.
\end{equation}
\item[\textup{(ii)}] Suppose, in addition, that the components of $\XX$ are independent, let $\varphi$ denote characteristic functions, let $I_0$ be the largest open interval containing the origin on which $\varphi_S$ does not vanish, and let $\operatorname{Log}$ denote the distinguished (continuous) logarithm. Then \eqref{eq:global_cf} implies
\begin{equation}
\label{eq:global_log_cf}
    \operatorname{Log}\varphi_{X_i}(t)
    = r_i\operatorname{Log}\varphi_S(t)
    \qquad
    \text{for all }t\in I_0.
\end{equation}
Conversely, \eqref{eq:global_log_cf} implies \eqref{eq:global_cf} if $I_0=\RR$, as for infinitely divisible laws. It also does so if all $X_j$ possess moment generating functions that are finite on a neighborhood $(-\delta,\delta)$ of the origin; in that case it suffices that $K_{X_i}(t)=r_iK_S(t)$ for $t\in(-\delta,\delta)$, where $K$ denotes the cumulant generating function. Moreover, $r_i\in[0,1]$ whenever $S$ is non-degenerate, and $r_i=\Var(X_i)/\Var(S)$ whenever the components have finite variances.
\end{enumerate}
\end{corollary}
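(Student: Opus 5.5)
Part (i) follows by specializing Theorem~\ref{thm:tail_allocations_coincide} and Proposition~\ref{prop:tail_allocations_coincide_phi} to $u^\ast=0$; part (ii) is a characteristic-function computation exploiting independence.

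For the preliminary claims, take $u^\ast=0$. Then $s^\ast=q(0)=s_{\min}$, so $\mathfrak{S}_0$ is the full (closed or open) support interval and $\PP(S\in\mathfrak{S}_0)=1$; the conditioning $\{S\in\mathfrak{S}_0\}$ can therefore be dropped everywhere. The standing conditions \eqref{eq:standing_integrability} reduce to the assumed global integrability. Since $\mathrm{ES}_u(S)$ is non-decreasing in $u$ and $\mathrm{ES}_0(S)=\e[S]>0$, we get $Q(u)>0$ for all $u\in[0,1)$, so Theorem~\ref{thm:tail_allocations_coincide} applies.

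For (i), Theorem~\ref{thm:tail_allocations_coincide} gives the equivalence of $\Phi_{i,u}=\Psi_{i,u}$ for all $u$ with $g_i(S)=r_i$ a.s. Corollary~\ref{cor:tail_weighted} forces $c_i=r_i$ when $\PP(S=0)>0$, since $0\in\mathfrak{S}_0$ automatically in that case. Proposition~\ref{prop:tail_allocations_coincide_phi} with trivial conditioning turns constancy off $\{0\}$ into \eqref{eq:global_cf}.

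For (ii), independence gives $\e[X_ie^{\i tS}]=\e[X_ie^{\i tX_i}]\prod_{j\ne i}\varphi_{X_j}(t)=-\i\,\varphi_{X_i}'(t)\prod_{j\neq i}\varphi_{X_j}(t)$, and likewise $\e[Se^{\i tS}]=-\i\,\varphi_S'(t)$; differentiability follows from the first moments. On $I_0$ no factor vanishes, because $\varphi_S=\prod_j\varphi_{X_j}$. Dividing \eqref{eq:global_cf} by $\varphi_S$ gives $\varphi_{X_i}'/\varphi_{X_i}=r_i\varphi_S'/\varphi_S$ on $I_0$, that is, $(\operatorname{Log}\varphi_{X_i})'=r_i(\operatorname{Log}\varphi_S)'$. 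Integrating from $0$, where both logarithms vanish, yields \eqref{eq:global_log_cf}.

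Conversely, differentiating \eqref{eq:global_log_cf} and multiplying by $\varphi_S$ recovers \eqref{eq:global_cf} on $I_0$, hence everywhere if $I_0=\RR$. Infinitely divisible characteristic functions never vanish, which covers that case. Under finite mgfs near $0$, $K_{X_i}=r_iK_S$ on $(-\delta,\delta)$ extends to the identity $\e[X_ie^{zS}]=r_i\e[Se^{zS}]$. Both sides are analytic in the strip $|\operatorname{Re}z|<\delta$, since by independence each has finite exponential moments there. The identity holds on the real segment after multiplying the derivative identity by $M_S$, so by analytic continuation it holds on the imaginary axis, giving \eqref{eq:global_cf}. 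The main point of care is that the exceptional zeros of $\varphi_S$ are avoided in this route: analyticity replaces division by $\varphi_S$.

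Finally, take $t\to0$ with variances. Expanding $\operatorname{Log}\varphi(t)=\i\mu t-\sigma^2t^2/2+o(t^2)$ in \eqref{eq:global_log_cf} and comparing $t^2$ coefficients gives $\Var(X_i)=r_i\Var(S)$, hence $r_i=\Var(X_i)/\Var(S)$ when $S$ is non-degenerate. For $r_i\in[0,1]$ without variances, apply \eqref{eq:global_log_cf} to each $j$; summing over $j$ gives $\sum_j r_j=1$. Also $|\varphi_{X_i}(t)|=|\varphi_S(t)|^{r_i}$ on $I_0$. Since $|\varphi_S(t)|<1$ for some small $t\neq0$ when $S$ is non-degenerate, $r_i<0$ would force $|\varphi_{X_i}(t)|>1$, which is impossible. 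Thus $r_j\ge0$ for every $j$, and then $\sum_j r_j=1$ gives $r_i\le1$.
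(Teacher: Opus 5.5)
Your proposal follows the paper's proof almost step for step. Part (i) reduces to Theorem~\ref{thm:tail_allocations_coincide} and Proposition~\ref{prop:tail_allocations_coincide_phi} once $\PP(S\in\mathfrak{S}_0)=1$ and $Q>0$ are checked. In part (ii) you use the factorization $\e[X_ie^{\i tS}]=-\i\,\varphi_{X_i}'(t)\varphi_{Z_i}(t)$, divide by $\varphi_S$ on $I_0$ and integrate from the origin. You handle the moment generating function case by analytic continuation in the strip $|\mathrm{Re}\,z|<\delta$, and you exclude $r_i<0$ with the modulus identity $|\varphi_{X_i}|=|\varphi_S|^{r_i}$.

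The one genuinely different sub-route is the variance formula. You compare $t^2$-coefficients in $\operatorname{Log}\varphi(t)=\i\mu t-\sigma^2t^2/2+o(t^2)$. The paper instead inserts the covariance weight of Corollary~\ref{cor:tail_weighted} into the constancy of $g_i$ to get $r_i=\Cov(X_i,S)/\Var(S)$, and only then uses independence. Your route stays inside the characteristic-function calculus of (ii). The paper's route delivers the beta form $\Cov(X_i,S)/\Var(S)$ without independence.

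There is, however, a gap in your proof of $r_i\le 1$. You apply \eqref{eq:global_log_cf} to every $j\in N$ and sum, which presupposes that \eqref{eq:global_cf} holds for every component. Part (ii) assumes it only for the fixed index $i$. For $n=2$ this is harmless, since the other share is $1-g_i$. For $n\ge3$, however, constancy of $g_i$ says nothing about the individual $g_j$, $j\neq i$, so the constants $r_j$ need not exist. For instance, take $X_1\sim\mathrm{Exp}(1)$, $X_2$ geometric on $\NN_0$ with $\PP(X_2=k)=(1-e^{-1})e^{-k}$, and $X_3$ the unit exponential truncated to $[0,1)$, all independent. Then $X_2+X_3\sim\mathrm{Exp}(1)$, so $g_1\equiv\frac12$, yet $g_2=0$ on $(0,1)$ and $g_2>0$ beyond.

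The paper's proof invokes $\sum_{j\in N}r_j=1$ in the same way. The repair uses component $i$ alone. On $I_0$ we have $|\varphi_S|=|\varphi_{X_i}|\,|\varphi_{Z_i}|\le|\varphi_{X_i}|=|\varphi_S|^{r_i}$, hence $|\varphi_S(t)|^{1-r_i}\le 1$. Evaluating at a point $t\in I_0$ with $0<|\varphi_S(t)|<1$ gives $r_i\le1$; such a point exists by the non-degeneracy argument you already use for $r_i\ge0$. Equivalently, run your $r_i\ge0$ argument for $Z_i=S-X_i$. Its share is the constant $1-r_i$, because $\e[Z_ie^{\i tS}]=(1-r_i)\,\e[Se^{\i tS}]$ and $Z_i$ is independent of $X_i$.
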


Tail constancy is a knife-edge property; generically, $h$ varies along the tail and the covariance in \eqref{eq:cov_representation} is non-zero. We now show that its sign is, under a mild shape restriction, completely determined by the direction in which the conditional share moves with the aggregate.

\begin{definition}
\label{def:quasi_concave}
For $u^\ast\in[0,1)$ with tail domain $\D$, we say that $g_i$ is $\PP_S$-almost surely quasi-concave (quasi-convex) on $\D$ if there exists a mode $s_m\in[s^\ast,\infty]$ such that $g_i$ is $\PP_S$-almost surely non-decreasing (non-increasing) on $\D\cap[s^\ast,s_m]$ and $\PP_S$-almost surely non-increasing (non-decreasing) on $\D\cap[s_m,\infty)$.
\end{definition}

The class of quasi-concave functions contains all non-decreasing ($s_m=\infty$) and all non-increasing ($s_m=s^\ast$) functions, as well as every unimodal function, flat stretches included.

\begin{theorem}\label{thm:allocations_dominance_tail}
Fix $u^\ast \in [0, 1)$ and let $\D$ be its tail domain. Assume $\int_{u^\ast}^1 q(v)\,dv > 0$, and suppose $g_i$ is $\PP_S$-almost surely quasi-concave (quasi-convex) on $\D$. 

Then $\Phi_{i,u} \ge (\le)\, \Psi_{i,u}$ for all $u \in [u^\ast, 1)$ if and only if $g_i$ is $\PP_S$-almost surely non-decreasing (non-increasing) on $\D$. The \emph{if} part holds without the quasi-concavity (quasi-convexity) assumption.
\end{theorem}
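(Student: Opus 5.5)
The plan is to work throughout on the quantile scale, using Lemma~\ref{lem:cov_representation}. Since $\int_u^1 q\,dv>0$ for all $u\ge u^\ast$ (monotonicity of $\mathrm{ES}_u$), the sign of $\Phi_{i,u}-\Psi_{i,u}$ is the sign of
\[
D(u):=\int_u^1 q(v)h(v)\,dv-\frac{1}{1-u}\int_u^1 q(v)\,dv\int_u^1 h(v)\,dv=(1-u)\Cov_u\bigl(q(V),h(V)\bigr).
\]
By Lemma~\ref{lem:tail_transfer}(b)--(c), $\PP_S$-a.s.\ monotonicity or quasi-concavity of $g_i$ on $\D$ transfers to the same property of $h$ on $(u^\ast,1)$ off a Lebesgue-null set. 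Null sets do not affect integrals, so $h$ may be assumed to have the property everywhere. I treat the quasi-concave case; the quasi-convex case follows by replacing $h$ with $-h$, or equivalently $i$ with the complementary share $1-h$.

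\emph{Sufficiency.} If $h$ is non-decreasing on $(u,1)$, then $q$ and $h$ are similarly ordered. Chebyshev's integral inequality, or equivalently the identity $2(1-u)D(u)=\int_u^1\!\int_u^1 (q(v)-q(w))(h(v)-h(w))\,dv\,dw\ge0$, gives $D(u)\ge 0$ for every $u$. This step needs only \eqref{eq:L1_triplet} and no quasi-concavity. Fubini is justified because $q,h,qh\in L^1$.

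\emph{Necessity.} Assume $h$ is quasi-concave with mode $v_m$ on the quantile scale, so that $h$ is non-decreasing on $(u^\ast,v_m]$ and non-increasing on $[v_m,1)$. Assume also $D(u)\ge0$ for all $u$, and suppose for contradiction that $h$ is not a.e.\ non-decreasing. Then $v_m<1$, and $h$ is non-increasing and non-constant on $[v_m,1)$ (a.e.). Apply Chebyshev in the reverse direction on a tail $(u,1)$ with $u\ge v_m$: there $q$ is non-decreasing and $h$ non-increasing, so $D(u)\le0$. Combined with $D(u)\ge0$, this gives $D(u)=0$ for every $u\in[v_m,1)$.

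Next I use the double-integral identity. Its integrand is $\le0$ on $(u,1)^2$, so it vanishes a.e. Hence for a.e.\ pair $v,w$ either $q(v)=q(w)$ or $h(v)=h(w)$. Equivalently, one can invoke the dichotomy from the proof of Theorem~\ref{thm:tail_allocations_coincide} on the tail $[v_m,1)$, using the remark that its argument applies with any threshold $u'$ in place of $u^\ast$. Equality on $[v_m,1)$ then forces $h$ to be constant on the region where $q$ has dispersion, namely on $\{v\ge v_m:\ \Delta(v)>0\}$. On any stretch where $q$ is constant, i.e.\ where $\Delta=0$, the stretch is the final interval $[v_1,1)$ with $q\equiv s_{\max}$ there. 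That corresponds to an atom at $s_{\max}$, where $g_i(q(v))=g_i(s_{\max})$ is itself constant.

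Putting these together, $h$ is a.e.\ constant on $[v_m,1)$. Since $h$ is non-increasing there and non-decreasing before $v_m$, it follows that $h$ is a.e.\ non-decreasing on all of $(u^\ast,1)$. Transferring back through Lemma~\ref{lem:tail_transfer}(c) gives the claim for $g_i$.

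The main obstacle is this constancy step when $q$ has flat stretches inside $[v_m,1)$. Several cases must be handled carefully: a flat stretch strictly before the endpoint, where $q$ jumps afterwards; the mode sitting inside a flat stretch; and $v_m=u^\ast$. What I would try first is the pairwise argument. Whenever $q(v)<q(w)$ for $v<w$ in $[v_m,1)$, we need $h(v)=h(w)$. Monotonicity of $h$ then propagates equality across flat stretches of $q$, because a flat stretch is sandwiched between points with strictly different $q$-values. The one exception is a terminal flat stretch, and there $h$ is constant anyway since $q$ is constant.
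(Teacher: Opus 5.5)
Your sufficiency argument coincides with the paper's. So does the first half of your necessity argument: reverse Chebyshev beyond the mode, hence $D\equiv 0$ on $[v_m,1)$, hence $h=r$ a.e.\ on $(v_m,1)$ by applying Theorem~\ref{thm:tail_allocations_coincide} at threshold $v_m$.

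The proof breaks at the last step. ``$h$ non-decreasing on $(u^\ast,v_m)$ and $h=r$ a.e.\ on $(v_m,1)$'' does not imply that $h$ is non-decreasing on $(u^\ast,1)$. The function $h$ may exceed $r$ just before the mode and drop to $r$ at it. For example, take $h=\mathds{1}_{(a,v_m)}$ with $u^\ast<a<v_m<1$ and $r=0$. This $h$ is quasi-concave, has $D(u)=0$ for every $u\ge v_m$, and is not non-decreasing. Equivalently, your opening claim is unjustified: a quasi-concave $h$ that fails to be non-decreasing need not be non-constant on $[v_m,1)$. So the contradiction you reach is not one. The telltale sign is that you never use the hypothesis $D(u)\ge 0$ at any level $u<v_m$, and without it the necessity statement is false.

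The missing idea is a second use of the hypothesis just below the mode, to show $h\le r$ a.e.\ on $(u^\ast,u_m)$.
\begin{itemize}
\item[(i)] Anchor the mode at $u_m:=F_S(s_m)$, where $s_m$ is the mode of $g_i$ on $\D$. By \eqref{eq:quantile_galois}, $q(v)\le s_m$ for $v\le u_m$ and $q(v)>s_m$ for $v>u_m$. Hence $\bar q_{u_m}:=\e_{u_m}[q(V)]>s_m$.
\item[(ii)] Suppose $\operatorname{ess\,sup}_{(u^\ast,u_m)}h>r$. Monotonicity then gives $h>r$ a.e.\ on some $(u_m-\epsilon,u_m)$. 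By continuity of $u\mapsto\bar q_u$ you may shrink $\epsilon$ so that $\bar q_u>s_m$ at $u:=u_m-\epsilon$.
\item[(iii)] Shifting $h$ by $r$ leaves the covariance unchanged, and $h=r$ a.e.\ on $(u_m,1)$. Therefore
\[
(1-u)\,\Cov_u\bigl(q(V),h(V)\bigr)=\int_u^{u_m}\bigl(q(v)-\bar q_u\bigr)\bigl(h(v)-r\bigr)\,dv<0,
\]
which contradicts $D(u)\ge 0$.
\end{itemize}
Only once $h\le r$ before the mode is $h$ non-decreasing on all of $(u^\ast,1)$. The anchoring $u_m=F_S(s_m)$ matters, because it supplies the strict separation $q(v)\le s_m<\bar q_u$. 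A mode chosen arbitrarily on the quantile scale, for instance inside a flat stretch of $q$, need not provide it.
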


Theorem~\ref{thm:allocations_dominance_tail} makes precise the intuition conveyed by \eqref{eq:intro_cov}: the severity-weighted allocation dominates the flat one at every tail level exactly when the conditional share grows with the aggregate, that is, when the component drives the tail. Quasi-concavity cannot simply be dropped from the necessity part, since a conditional share that increases along the tail except for a short dip can keep the covariance in \eqref{eq:cov_representation} positive at every level. When $S$ has an atom at the origin inside $\D$, monotonicity also involves the constant $c_i$.

\begin{remark}
\label{rem:ci_choice_monotonicity}
Let $0\in\D$ and $\PP(S=0)>0$, so that $g_i(0)=c_i$. Since $\Psi_{i,u}$, unlike $\Phi_{i,u}$, depends on $c_i$, the monotonicity of $g_i$ in Theorem~\ref{thm:allocations_dominance_tail} requires a compatible choice of $c_i$. With $g_i(0^{\pm})$ denoting the one-sided limits of a monotone regular version on either side of the origin (Remark~\ref{rem:regular_version_gi}), $g_i$ is non-decreasing across the origin if and only if
\[
g_i(0^{-})\le c_i\le g_i(0^{+}),
\]
with the inequalities reversed in the non-increasing case. Whenever the two one-sided limits coincide, the unique compatible choice is
\[
c_i=\lim_{s\to 0}g_i(s),
\]
and this continuous extension automatically satisfies $\sum_{i\in N}c_i=1$, since $\sum_{i\in N}g_i(s)=1$ for $\PP_S$-almost every $s\in\D\setminus\{0\}$.
\end{remark}

Theorems~\ref{thm:tail_allocations_coincide} and~\ref{thm:allocations_dominance_tail} are qualitative: they decide whether, and in which direction, the two allocations differ. The covariance identity \eqref{eq:cov_representation} also yields a quantitative, distribution-free control of the size of the discrepancy.

\begin{remark}
\label{rem:tail_discrepancy}
Suppose, in addition, that $\e[S^2\mathds{1}_{\{S\in\D\}}]<\infty$ and $h\in L^2(u^\ast,1)$, and write $\Var_u(\cdot):=\Var(\cdot\mid V>u)$. By \eqref{eq:cov_representation}, $\Phi_{i,u}-\Psi_{i,u}$ is the product of the tail correlation between $q(V)$ and $h(V)$, the tail coefficient of variation of the aggregate, and the tail volatility of the conditional share. Writing $\mathrm{CV}_u(S) := \sqrt{\Var_u(q(V))}\,/\,|\e_u[q(V)]|$ for the tail coefficient of variation, which equals $\sqrt{\Var(S\mid S>q(u))}/|\e[S\mid S>q(u)]|$ for continuous $S$, the Cauchy--Schwarz inequality yields
\begin{equation*}
    |\Phi_{i,u} - \Psi_{i,u}| \leq \mathrm{CV}_u(S) \,\sqrt{\mathrm{Var}_u\bigl(h(V)\bigr)}.
\end{equation*} 
If the components are non-negative and $c_i\in[0,1]$, then $h$ takes values in $[0,1]$, so $h^2\le h$ gives $\Var_u(h(V))\le\Psi_{i,u}(1-\Psi_{i,u})\le\frac14$. This yields the distribution-free bound
\begin{equation*}
    |\Phi_{i,u} - \Psi_{i,u}| \leq \mathrm{CV}_u(S)\,\sqrt{\Psi_{i,u}\bigl(1-\Psi_{i,u}\bigr)}\le\frac{1}{2} \mathrm{CV}_u(S),
\end{equation*}
governed by the tail dispersion of the aggregate and sharpest for components with extreme expected shares. For an exponential aggregate with rate $\lambda$, $\mathrm{CV}_u(S)=1/(1+\lambda q(u))\to0$ as $u\to1^-$, so the two allocations merge in the extreme tail; for a Pareto (Type~I) aggregate with tail index $\alpha>2$, $\mathrm{CV}_u(S)=1/\sqrt{\alpha(\alpha-2)}$ for every $u$, and the limiting behavior is governed by the finer structure of $g_i$ studied in Section~\ref{sec:allocation_convergence}.
\end{remark}
\section{Independent EDMs}
\label{sec:Independent_EDMs}

Independent additive exponential dispersion models are the natural first testing ground for the theory of Section~\ref{sec:setup}. The conditional risk share is tractable there, since the aggregate is again a convolution within the class and $g_i$ is obtained from its density by a single differentiation with respect to a natural parameter, exactly in Proposition~\ref{prop:E[X_i|S]_formula} and in closed form through the saddle-point expansion in Corollary~\ref{cor:EDM_E[R_i|S]}. The class also spans the support structures the general framework was built to accommodate --- the whole real line for the Normal, the generalized hyperbolic secant and the tilted extreme stable laws, the positive half-line for the gamma and the inverse Gaussian, the non-negative integers for the Poisson and its compound relatives --- so atoms, signed risks and the role of a positive threshold can all be examined within one family. And the equality question of Section~\ref{sec:setup} admits a complete answer inside it, Theorem~\ref{thm:EDM_coincide} below.

Throughout this section $\XX$ has independent components and, invoking Remark~\ref{rem:regular_version_gi}, we work with a regular version of $g_i$ defined pointwise on $\D$. We begin with the integrability that the framework requires. All results of Section~\ref{sec:setup} rest on $\e[|R_i|\mathds{1}_{\{S\in\D\}}]<\infty$, and their global counterparts on the stronger $\e[|R_i|]<\infty$. The next proposition shows that this stronger condition fails, for every component at once, as soon as a single component has a density positive on the whole line: for such models the global allocation $\Psi_{i,0}$ is not merely hard to compute but undefined, and localization to a tail domain is unavoidable rather than a matter of convenience.

\begin{proposition}
\label{prop:R_not_L1}
Suppose there exists an index $i \in N$ such that $X_i$ has an absolutely continuous distribution with a density $f_{X_i}$ that is bounded away from zero on every compact subset of $\RR$, as is the case when $f_{X_i}$ is continuous and strictly positive on $\RR$. Write $Z_i = S - X_i = \sum_{j \neq i} X_j$. Then:
\begin{enumerate}
    \item[(a)]\label{item:a}
        For every $j \neq i$ with $X_j \not\equiv 0$\,: $\;\e\left[|R_j|\right]=\infty$.
    \item[(b)]\label{item:b}
        If\/ $Z_i \not\equiv 0$\,: $\;\e\left[|R_i|\right]=\infty$.
\end{enumerate}
\end{proposition}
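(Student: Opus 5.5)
The plan is to condition on everything except $X_i$ and exploit the fact that $S=X_i+Z_i$ then has, conditionally, a density bounded below near the origin. This produces a non-integrable $1/|S|$ singularity at $S=0$. By independence (standing in this section), $X_i$ is independent of $Z_i$ and of each $X_j$, $j\neq i$.

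For part (b), write $|R_i|=|X_i|/|X_i+Z_i|$ on $\{S\neq0\}$; note $\PP(S=0)=0$ because $S$ has a density. By Tonelli,
\[
\e[|R_i|]=\e\Bigl[\int_{\RR}\frac{|x|}{|x+Z_i|}\,f_{X_i}(x)\,dx\Bigr].
\]
Since $Z_i\not\equiv0$, there is a compact interval $K$ not containing $0$ with $\PP(Z_i\in K)>0$. For $z\in K$, the point $x=-z$ satisfies $|x|=|z|\ge c>0$. On the compact set $\{x:|x+z|\le1,\ z\in K\}$ we have $f_{X_i}\ge m>0$ and $|x|\ge c/2$ whenever $|x+z|\le \min(1,c/2)=:\delta$. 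Hence the inner integral is at least $(c/2)\,m\int_{-\delta}^{\delta}dy/|y|=\infty$ for every $z\in K$. The expectation is therefore infinite.

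For part (a), fix $j\neq i$ with $X_j\not\equiv0$ and set $W=S-X_i-X_j$, the sum of the remaining components. Then $|R_j|=|X_j|/|X_i+X_j+W|$, and conditionally on $(X_j,W)=(y,w)$ the integral over $x$ is
\[
\int\frac{|y|\,f_{X_i}(x)}{|x+y+w|}\,dx .
\]
On $\{y\neq0\}$, which has positive probability, this integral is infinite. The reason is that $f_{X_i}$ is bounded below on the compact interval $[-(y+w)-1,-(y+w)+1]$, and $\int dy'/|y'|$ diverges near $0$. Hence $\e[|R_j|]=\infty$. This case needs no condition beyond $\PP(X_j\neq0)>0$, because the numerator $|y|$ is fixed and positive while the singularity comes from $x$.

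The only delicate point is the uniformity in (b). There the numerator $|x|$ vanishes at $x=0$, so the singularity $x=-z$ must be kept away from the origin. This is exactly why $Z_i\not\equiv0$ is needed and why we restrict to $z$ in a compact set avoiding $0$. Strictly speaking, no uniformity is required: for each fixed $z\neq0$ the inner integral is already $+\infty$ by the local lower bound on $f_{X_i}$ near $-z$, and $\PP(Z_i\neq0)>0$ suffices. Measurability is handled by Tonelli for nonnegative integrands. Finally, a continuous, strictly positive density is bounded below on compacts, which gives the parenthetical remark in the statement.
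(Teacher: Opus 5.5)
Your proposal is correct and takes essentially the same approach as the paper: apply Tonelli after conditioning on the components other than $X_i$, use a local lower bound on $f_{X_i}$ near the point where the aggregate vanishes, and invoke the divergence of $\int_{0<|w|\le\delta}dw/|w|$, with $|X_j|$ fixed and positive in (a) and $|x|\ge|z|-\delta>0$ kept away from zero in (b). As you observe yourself, the uniformity over a compact set $K$ in (b) is unnecessary; the paper argues pointwise for each fixed $z\neq0$ and then integrates over the event $\{Z_i\neq0\}$, which has positive probability.
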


\begin{remark}
\label{rem:sharp_conditions_R_i_not_L1}
Both conditions are sharp. If $Z_i \equiv 0$ then $S = X_i$ almost surely and $R_i = 1\in L^\infty$; by independence, $Z_i\equiv0$ forces each $X_j$, $j \neq i$, to be almost surely constant with the constants summing to zero, so part~(b) applies in every non-degenerate situation. Likewise $X_j \equiv 0$, which gives $R_j=0$, is the only way part~(a) can fail.
\end{remark}

\begin{corollary}
\label{cor:all_full_support}
If $X_1,\ldots,X_n$ are independent, each with a continuous and strictly positive density on $\RR$, then $\e\left[|R_i|\right]=\infty$ for every $i\in N$.
\end{corollary}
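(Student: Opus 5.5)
The plan is to derive Corollary~\ref{cor:all_full_support} directly from Proposition~\ref{prop:R_not_L1}, with Remark~\ref{rem:sharp_conditions_R_i_not_L1} used only to make sure no degenerate case slips through. Because every component has a continuous and strictly positive density on $\RR$, each one satisfies the hypothesis of Proposition~\ref{prop:R_not_L1}. A continuous, strictly positive function attains a strictly positive minimum on every compact set, so each $f_{X_j}$ is bounded away from zero on compacts.

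I will fix an arbitrary $i\in N$ and apply part~(b) of Proposition~\ref{prop:R_not_L1} with this $i$ as the distinguished index. This requires checking $Z_i=\sum_{j\ne i}X_j\not\equiv 0$.

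\begin{itemize}
\item Since $n\ge 2$, there is some $j\neq i$.
\item By independence, $Z_i$ is a convolution of laws that includes the absolutely continuous law of $X_j$.
\item A convolution with an absolutely continuous law is absolutely continuous, so $Z_i$ is absolutely continuous.
\item An absolutely continuous variable has no atoms, so in particular $Z_i$ has no atom at $0$, and $Z_i\not\equiv0$.
\end{itemize}

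Part~(b) then yields $\e[|R_i|]=\infty$. As $i$ was arbitrary, the conclusion holds for every $i\in N$.

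Alternatively, one can choose a single distinguished index $k$ and use part~(a) for all $j\ne k$, which needs only $X_j\not\equiv0$ (immediate from absolute continuity), together with part~(b) for $k$ itself.

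There is no real obstacle here. The only point needing care is confirming non-degeneracy ($Z_i\not\equiv 0$), and absolute continuity of the components settles it at once.
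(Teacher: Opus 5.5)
Your proposal is correct and matches the paper, which gives no separate proof and treats the corollary as an immediate consequence of Proposition~\ref{prop:R_not_L1}. For each $i$, absolute continuity of $Z_i$ gives $Z_i\not\equiv 0$, so part~(b) yields $\e[|R_i|]=\infty$; your alternative, using part~(a) for $j\neq k$ and part~(b) for $k$, is equally valid.
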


\begin{remark}
\label{rem:tail_integrability_scope}
Proposition~\ref{prop:R_not_L1} and Corollary~\ref{cor:all_full_support} rule out the \emph{global} allocation ($u^\ast = 0$) for components with full support. For \emph{tail} allocations with $s^{\ast} = q(u^\ast) > 0$ they are, however, immaterial: as noted below \eqref{eq:standing_integrability}, $S \ge s^\ast > 0$ on $\{S\in\D\}$, so
\[
    \e\left[|R_i| \,\mathds{1}_{\{S\in\D\}}\right]
    \le \frac{1}{s^\ast}\,\e\left[|X_i| \,\mathds{1}_{\{S\in\D\}}\right]
    <\infty,
\]
and only $\e[|X_i|\mathds{1}_{\{S\in\D\}}]<\infty$ is required. Distributions excluded from the global framework, such as the Normal and the tilted extreme stable laws, are therefore fully admissible on any tail with $s^\ast>0$.
\end{remark}

We now add distributional structure. Assume that each $X_i$ follows an additive Exponential Dispersion Model (EDM) \citep{Jorgensen1987} with unit cumulant function $A_i$, index parameter $\lambda_i>0$ and natural parameter $\theta_i$ lying in the interior of the natural parameter space $\Theta_i\subseteq\RR$; that is,
\begin{equation*}
\label{eq:EDM_MGF}
M_{X_i}(t)=\e\left[e^{tX_i}\right]=\exp\left\{\lambda_i(A_i(\theta_i+t)-A_i(\theta_i))\right\},
\end{equation*}
equivalently
\begin{equation}
\label{eq:EDM_CGF}
K_{X_i}(t)=\log M_{X_i}(t)=\lambda_i(A_i(\theta_i+t)-A_i(\theta_i)),
\end{equation}
for $t$ in a neighborhood of the origin, whence
\[
\e[X_i]=K_{X_i}^{\prime}(0)=\lambda_i A_i^{\prime}(\theta_i),
\qquad
\Var(X_i)=K_{X_i}^{\prime\prime}(0)=\lambda_i A_i^{\prime\prime}(\theta_i)>0 .
\]
Because $\theta_i$ is interior, $A_i$ is real-analytic there and $X_i$ possesses moments of all orders; in particular $\e[|X_i|]<\infty$ always holds, so the first condition in \eqref{eq:standing_integrability} is automatic. The consistency condition \eqref{eq:standing_consistency} is void unless $0\in\D$, and holds trivially when the components are non-negative, since then $\{S=0\}\subseteq\{X_i=0\}$, and when $S$ is continuous. The second condition in \eqref{eq:standing_integrability} is automatic on any tail with $s^\ast>0$ by Remark~\ref{rem:tail_integrability_scope}, whereas for global allocations it excludes the EDMs with full real support, such as the Normal and those generated by extreme stable laws \citep{Eaton1971}, by Proposition~\ref{prop:R_not_L1}.

Two results below, Theorems~\ref{thm:EDM_coincide} and~\ref{thm:EDM_monotone_IR}, use in addition that the model generated by $A_i$ is infinitely divisible, which for an additive EDM is equivalent to its index set being all of $(0,\infty)$ \citep{Jorgensen1997}. This covers every family we consider --- Normal, Poisson, gamma, inverse Gaussian and the remaining Tweedie models, as well as the generalized hyperbolic secant family --- and excludes only the models with index set $\lambda_0\NN$, such as the binomial.

\begin{theorem}
\label{thm:EDM_coincide}
Suppose that $A_i = A$ for all $i \in N$, that the family generated by $A$ is either infinitely divisible or supported on a half-line, that $\e[|R_i|] < \infty$ for all $i \in N$, that \eqref{eq:standing_consistency} holds with $u^\ast=0$, and that $\e[S]>0$.
\begin{itemize}
\item[(1)] If $\theta_i=\theta$ for all $i\in N$, then, with $c_i=r_i$, $\Phi_{i,u}=\Psi_{i,u}=r_i:=\lambda_i/\sum_{j\in N}\lambda_j$ for all $u\in[0,1)$ and all $i\in N$.
\item[(2)] If $\theta_i\neq \theta_j$ for at least one pair, then $\Phi_{i,u}=\Psi_{i,u}$ for all $u\in[0,1)$ and all $i\in N$ if and only if $A$ generates a scaled Poisson family, that is,
\begin{equation}
\label{eq:scaled_Poisson_cumulant}
A(\theta)=\frac{a}{b}\,e^{b\theta}+\textup{const},\qquad ab>0,\ b\neq 0 ,
\end{equation}
in which case $X_i$ takes values in $b\,\NN_0$ with mean $\lambda_i a e^{b\theta_i}$, and
\[
c_i=r_i=\frac{\lambda_ie^{b\theta_i}}{\sum_{j\in N}\lambda_je^{b\theta_j}}.
\] The requirement $\e[S]>0$ forces $a,b>0$, so that the components are non-negative.
\end{itemize}
\end{theorem}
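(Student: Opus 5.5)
The route is through Corollary~\ref{cor:global_version_0}. By part~(i), equality of the two allocations at every level is equivalent to $g_i(S)=r_i$ almost surely, together with the proviso on $c_i$. By part~(ii), with the MGF version available since every $X_j$ has a cumulant generating function near the origin, this reduces for independent components to the cumulant identity $K_{X_i}(t)=r_iK_S(t)$ on a neighbourhood $(-\delta,\delta)$. Using \eqref{eq:EDM_CGF} with $A_i=A$, the identity reads
\begin{equation*}
\lambda_i\bigl(A(\theta_i+t)-A(\theta_i)\bigr)=r_i\sum_{j\in N}\lambda_j\bigl(A(\theta_j+t)-A(\theta_j)\bigr),\qquad |t|<\delta .
\end{equation*}
The hypothesis ``infinitely divisible or supported on a half-line'' is what licenses the converse direction in Corollary~\ref{cor:global_version_0}(ii). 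In the infinitely divisible case $I_0=\RR$. In the half-line case the MGF is finite near $0$, so the stated MGF criterion applies directly. I would also check that $\e[S]>0$ makes $Q(u)>0$, so that Theorem~\ref{thm:tail_allocations_coincide} is in force.

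For part (1), set $\theta_j=\theta$. Then $K_S=(\sum_j\lambda_j)(A(\theta+\cdot)-A(\theta))$, so the identity holds with $r_i=\lambda_i/\sum_j\lambda_j$. Taking $c_i=r_i$ satisfies the proviso and $\sum_i c_i=1$. Part~(i) of the corollary then gives $\Phi_{i,u}=\Psi_{i,u}=r_i$.

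For part (2), sufficiency is a direct check. With $A(\theta)=\tfrac ab e^{b\theta}+\mathrm{const}$, one has $K_{X_j}(t)=\lambda_j\tfrac ab e^{b\theta_j}(e^{bt}-1)$. Every $K_{X_j}$ is then proportional to the same function $e^{bt}-1$, so the identity holds with $r_i=\lambda_ie^{b\theta_i}/\sum_j\lambda_je^{b\theta_j}$. This is a Poisson law with mean $\lambda_j\tfrac ab e^{b\theta_j}$, scaled by $b$, so its support is $b\NN_0$. Its mean is $\lambda_j a e^{b\theta_j}$, and $ab>0$ gives a positive Poisson rate. Requiring $\e[S]>0$ then forces $a>0$, hence $b>0$.

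For necessity, assume equality for every $i$. Differentiate the cumulant identity in $t$ for each $i$ and sum over $i$. Since $\sum_i r_i=1$, this shows that every function $t\mapsto A'(\theta_i+t)$ is proportional to one common function $G(t)$. Choose $\theta_i\neq\theta_k$ and put $d=\theta_k-\theta_i\neq0$. Then $A'(x+d)=\rho A'(x)$ for all $x$ near $\theta_i$, with constant $\rho=A'(\theta_k)/A'(\theta_i)$. I expect this to be the main obstacle: the goal is to show that $A'$ is an exponential, which the functional equation alone does not deliver. My first attempt is to apply the same argument to the second derivatives. This gives $A''(x+d)=\rho'A''(x)$, and comparing with the derivative of the first relation yields $\rho'=\rho$.

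The key step is to use that each $K_{X_i}$ is proportional to the single function $F(t)=\sum_j\lambda_j(A(\theta_j+t)-A(\theta_j))$. Every cumulant of order $m$ therefore satisfies $\lambda_iA^{(m)}(\theta_i)=r_iF^{(m)}(0)$. Taking the ratio for two indices gives $A^{(m)}(\theta_k)/A^{(m)}(\theta_i)=c$, the same constant for all $m\ge1$. Translating by $t$ extends this to $A^{(m)}(\theta_k+t)=cA^{(m)}(\theta_i+t)$. I would then use real-analyticity of $A$ at $\theta_i$: the Taylor coefficients of $A'$ about $\theta_k$ are $c$ times those about $\theta_i$. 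Combined with $A'(x+d)=cA'(x)$, this gives $A''(x)/A'(x)$ periodic with period $d$ and also constant along the expansion. The route I would try is to show $\log A'$ is affine. Write $\psi=\log A'$, so that $\psi^{(m)}(\theta_k+t)=\psi^{(m)}(\theta_i+t)$ for all $m\ge1$ near $0$. A fallback is to use the variance function: the relation implies $V(\mu)\propto\mu$ on the relevant mean range, and the variance function characterises the EDM, so $V(\mu)=b\mu$ identifies the scaled Poisson family. I would push this variance-function characterisation, via Jørgensen, as the cleanest closing step.
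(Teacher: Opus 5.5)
Part~(1) and the sufficiency half of part~(2) are correct and agree with the paper. The necessity half of part~(2) has a genuine gap, at exactly the step you flag.

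The relations $A'(x+d)=cA'(x)$ and $A''(x+d)=cA''(x)$ are the right starting point, but neither of your closing devices extracts anything further from them. The identities $A^{(m)}(\theta_k+t)=cA^{(m)}(\theta_i+t)$ for all $m\ge1$ are, by real-analyticity, just the Taylor expansion of the same functional equation. Likewise, $\psi^{(m)}(\theta_k+t)=\psi^{(m)}(\theta_i+t)$ says only that $\psi'=A''/A'$ is $d$-periodic.

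The variance-function fallback fails for the same reason. Putting $\mu=A'(x)$ into the two relations yields only $V(c\mu)=cV(\mu)$, i.e.\ $V(\mu)/\mu$ is invariant under $\mu\mapsto c\mu$. That is the same periodicity in logarithmic coordinates, not $V(\mu)\propto\mu$.

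In fact no argument using only the functional equation, analyticity and convexity can close the proof. Take $A'(x)=e^{\beta x}\bigl(1+\epsilon\sin(2\pi x/d)\bigr)$ with $\beta>0$ and $\epsilon>0$ small. It is real-analytic, satisfies $A''>0$ and $A'(x+d)=e^{\beta d}A'(x)$, and is not exponential. What rules out such periodic factors is that $A$ is the cumulant function of a positive measure with additional structure.

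This is where the hypothesis ``infinitely divisible or supported on a half-line'' enters. It is not what licenses the converse in Corollary~\ref{cor:global_version_0}(ii). Every additive EDM with interior natural parameter has a moment generating function near the origin, so the cumulant criterion applies regardless.

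The paper closes the gap in two ways.
\begin{itemize}
\item[(a)] \emph{Infinite divisibility.} The L\'evy--Khintchine representation gives $A''(\theta)=\int e^{\theta x}\eta(\mathrm{d}x)$ with $\eta\ge0$ non-null. The second relation then reads $\int e^{\theta x}(e^{dx}-c)\,\eta(\mathrm{d}x)=0$ on an interval. Uniqueness of the Laplace transform forces $\eta$ onto the single point $b=d^{-1}\log c$, so $A''$ is a constant multiple of $e^{b\theta}$. Integrating and substituting back into the first relation excludes $b=0$ and kills the additive constant, giving $A'=ae^{b\theta}$.
\item[(b)] \emph{Half-line support.} The first relation is iterated towards the unbounded end of $\Theta$, giving $A'=e^{\beta\theta}e^{P(\theta)}$ with $P$ periodic. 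The asymptotics as $\theta\to-\infty$ of the Laplace transform of the generating measure and of its derivative show that $e^{P}$ has a limit. A periodic function with a limit is constant, so $A'$ is exponential.
\end{itemize}
You need one of these arguments, or some other use of the positivity of the underlying measure, to finish. With it, the rest of your outline goes through.
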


Within a common additive EDM family, Theorem~\ref{thm:EDM_coincide} shows that the coincidence of the two allocations across all levels is governed by the natural parameters. Under homogeneity the allocations coincide for every EDM, whereas under heterogeneity the scaled Poisson family is the only model for which equality can hold. The mechanism behind both statements is the constancy of $g_i$ on $\mathfrak{S}_0$ guaranteed by Theorem~\ref{thm:tail_allocations_coincide}: in the homogeneous case
\[
g_i(s)=\frac{\lambda_i}{\sum_{j\in N}\lambda_j},
\]
while in the scaled Poisson case
\[
g_i(s)=\frac{\lambda_i e^{b\theta_i}}{\sum_{j \in N} \lambda_j e^{b\theta_j}},
\]
irrespective of the heterogeneity of the $\theta_j$. Two comments on the hypotheses are in order. The requirement $\e[S]>0$ is what makes $\Phi_{i,u}$ well defined at every level, since $\mathrm{ES}_u(S)\ge\e[S]>0$; it is automatic for non-negatively supported EDMs. The two structural hypotheses enter in different ways. Infinite divisibility converts the functional equation satisfied by $A^{\prime}$ into a statement about the canonical measure of the L\'{e}vy--Khintchine representation; a support bounded on one side achieves the same end by an entirely elementary route, through the behaviour of the tilted mean at the unbounded edge of the natural parameter space. Together they cover every additive EDM considered here: the non-negatively supported families fall under the second condition, and the families supported on all of $\RR$ that we use --- Normal, generalized hyperbolic secant, and the tilted extreme stable laws --- are infinitely divisible. We know of no additive EDM outside their union, and of none violating the conclusion.

Theorem~\ref{thm:EDM_coincide} is a global statement. On a tail domain, by contrast, constancy of $g_i$ on $\D$ alone suffices for coincidence, and this can occur for any EDM. If, for instance, each $X_i$ is Normal with zero mean and variance $\sigma_i^2$, corresponding to $A(\theta)=\theta^2/2$, $\lambda_i=\sigma_i^2$ and $\theta_i=0$, then
\[
g_i(s)=\frac{\lambda_i}{\sum_{j\in N}\lambda_j}=\frac{\sigma_i^2}{\sum_{j\in N}\sigma_j^2}
\]
for every $s$, so that $\Phi_{i,u}=\Psi_{i,u}$ for all $u\in[u^\ast,1)$ and any $u^\ast>1/2$, for which $s^\ast>0$; the global allocation, on the other hand, is undefined by Proposition~\ref{prop:R_not_L1}. To investigate equality, and more generally dominance, for an arbitrary collection of additive EDMs, in which $A_i$, $\theta_i$ and $\lambda_i$ may all differ, Theorems~\ref{thm:tail_allocations_coincide} and~\ref{thm:allocations_dominance_tail} instruct us to determine the constancy, respectively the monotonicity, of $g_i$. We therefore derive first an exact and then an approximate expression for it.

\begin{proposition}
\label{prop:E[X_i|S]_formula}
Let each $X_i$, $i\in N$, follow an arbitrary additive EDM and let $f_S(\cdot;\bm{\theta})$, $\bm{\theta}=(\theta_1,\ldots,\theta_n)$, denote the density or mass function of $S$ with respect to the dominating measure. Then, for $\PP_S$-almost every $s$,
\begin{equation}
\label{eq:EDM_E[X_i|S]_exact}
\e[X_i|S=s]=\lambda_iA^{\prime}_i(\theta_i)+\frac{\partial \log f_S(s;\bm{\theta})}{\partial \theta_i}
=\e[X_i]+\frac{\partial \log f_S(s;\bm{\theta})}{\partial \theta_i}.
\end{equation}
Moreover, for $s$ in the interior of the convex hull of the support of $S$, the saddle-point approximation gives
\begin{equation}
\label{eq:EDM_E[X_i|S]_approx}
\e[X_i|S=s]= K_{X_i}^{\prime}\left(\th(s)\right)\bigl(1+\mathcal{O}(\Lambda^{-1})\bigr)
=\lambda_iA^{\prime}_i\left(\theta_i+\th(s)\right)\bigl(1+\mathcal{O}(\Lambda^{-1})\bigr),
\end{equation}
where $\th(s)$ is the unique solution of $K_S^{\prime}(\th)=s$, $\Lambda:=\sum_{j\in N}\lambda_j$, and the error is uniform on compact subsets of the interior of the mean domain as $\Lambda\to\infty$ with $s/\Lambda$, $\lambda_j/\Lambda$ and $\bm{\theta}$ fixed.
\end{proposition}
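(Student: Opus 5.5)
The exact identity \eqref{eq:EDM_E[X_i|S]_exact} follows from a score-function argument for the exponential family in $\theta_i$. The saddle-point statement \eqref{eq:EDM_E[X_i|S]_approx} then follows by applying Daniels' expansion to $f_S$ and differentiating it in $\theta_i$.

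\textbf{Exact formula.} Let $\nu_i$ be the dominating measure of $X_i$. Write its density as
\[
f_{X_i}(x;\theta_i)=c_i(x)\,e^{\theta_i x-\lambda_iA_i(\theta_i)},
\]
where the carrier $c_i(x)$ does not depend on $\theta_i$. This form follows from \eqref{eq:EDM_CGF}. By independence,
\[
f_S(s;\bm{\theta})=\int f_{X_i}(x;\theta_i)\,f_{Z_i}(s-x)\,\nu_i(dx),
\]
where $Z_i=S-X_i$ and $f_{Z_i}$ does not involve $\theta_i$.

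Since $\theta_i$ is interior, we may differentiate under the integral sign. Domination comes from the finiteness of $M_{X_i}$ on a neighbourhood of $0$, which bounds $|x|e^{(\theta_i\pm\epsilon)x}$ integrably. Using $\partial_{\theta_i}f_{X_i}=(x-\lambda_iA_i'(\theta_i))f_{X_i}$, we get
\[
\partial_{\theta_i}f_S(s;\bm{\theta})=\int x\,f_{X_i}(x)f_{Z_i}(s-x)\,\nu_i(dx)-\lambda_iA_i'(\theta_i)\,f_S(s;\bm{\theta}).
\]
Dividing by $f_S(s;\bm\theta)$, which is nonzero for $\PP_S$-a.e.\ $s$, and recognising $\int x f_{X_i}f_{Z_i}/f_S=\e[X_i\mid S=s]$ gives \eqref{eq:EDM_E[X_i|S]_exact}.

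For the "almost every" qualifier, I would verify the convolution-kernel version of the conditional expectation against test functions via Fubini. The interchange of derivative and integral should be justified in $L^1(\nu_S)$, so that the pointwise identity holds off a null set.

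\textbf{Approximation.} The cumulant generating function of $S$ is
\[
K_S(t)=\sum_j\lambda_j\bigl(A_j(\theta_j+t)-A_j(\theta_j)\bigr).
\]
It is of order $\Lambda$ when $\lambda_j/\Lambda$ is held fixed, and it is strictly convex. So $\th(s)$ solving $K_S'(\th)=s$ exists uniquely for $s/\Lambda$ in the interior of the mean domain. Daniels' expansion gives
\[
\log f_S(s;\bm\theta)=K_S(\th)-\th s-\tfrac12\log\bigl(2\pi K_S''(\th)\bigr)+\mathcal{O}(\Lambda^{-1}).
\]
In the lattice case the same expansion holds with the usual span factor.

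Next differentiate in $\theta_i$. The envelope theorem removes the dependence through $\th$, because $\partial_{\th}(K_S-\th s)=0$. Since $\partial_{\theta_i}K_S(t)=\lambda_i(A_i'(\theta_i+t)-A_i'(\theta_i))$, the leading term gives
\[
\partial_{\theta_i}\log f_S=\lambda_iA_i'(\theta_i+\th)-\lambda_iA_i'(\theta_i)+(\text{remaining terms}).
\]
Adding $\e[X_i]=\lambda_iA_i'(\theta_i)$ yields $\lambda_iA_i'(\theta_i+\th)=K_{X_i}'(\th)$. This quantity is of order $\Lambda$.

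The remaining terms are the derivative of $-\tfrac12\log K_S''(\th)$ and the derivative of the $\mathcal{O}(\Lambda^{-1})$ remainder. Both are $\mathcal{O}(1)$. For the first, note that $K_S''$ scales with $\Lambda$, and both $\partial_{\theta_i}\th$ and the direct derivative are bounded, since $\partial_{\theta_i}\th=-\lambda_iA_i''/K_S''=\mathcal{O}(1)$. Relative to the leading term of order $\Lambda$, these corrections give the factor $1+\mathcal{O}(\Lambda^{-1})$. Uniformity on compacts follows from the continuity of all quantities in $(s/\Lambda,\bm\theta)$.

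\textbf{Main obstacle.} The delicate step is justifying that the Daniels expansion may be differentiated in $\theta_i$ with uniform error. Daniels' remainder is an asymptotic series whose coefficients are smooth functions of the standardized cumulants at $\th$. I would therefore argue via the Edgeworth-type integral representation of $f_S$, which is analytic in $\bm\theta$. Alternatively, one can use Cauchy estimates on a small complex neighbourhood of $\theta_i$ where the expansion remains uniform. Either route shows that the derivative of the remainder is still $\mathcal{O}(\Lambda^{-1})$.
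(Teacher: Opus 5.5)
Your proposal follows the paper's proof essentially step for step. The exact identity comes from differentiating the convolution $f_S=\int f_{X_i}f_{Z_i}$ in $\theta_i$. The approximation uses Daniels' expansion, with the envelope argument removing the implicit dependence through $\th$, an $\mathcal{O}(1)$ bound on the $\log K_S''(\th)$ term, and Cauchy estimates for the remainder.

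Two small remarks. First, your $\partial_{\theta_i}\th=-\lambda_iA_i''(\theta_i+\th)/K_S''(\th)$ is the correct expression. The paper's extra $-A_i''(\theta_i)$ term is spurious, though harmless for the order count. Second, there is a caveat you share with the paper. Your step ``this quantity is of order $\Lambda$'' turns the additive form $\e[X_i\mid S=s]=K_{X_i}'(\th(s))+\mathcal{O}(1)$ into the relative form $1+\mathcal{O}(\Lambda^{-1})$ only where $A_i'(\theta_i+\th)$ stays bounded away from zero. That can fail for EDMs supported on $\RR$: for the hyperbolic secant family $A'=\tan$ vanishes at $\theta_i+\th=0$, where the $\mathcal{O}(1)$ correction is generically non-zero, so there the argument delivers only the additive form.
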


\begin{corollary}
\label{cor:EDM_E[R_i|S]}
Following Proposition~\ref{prop:E[X_i|S]_formula}, the exact conditional risk share is
\begin{equation}
\label{eq:EDM_E[R_i|S]_exact}
g_i(s)=\frac{1}{s}\left(\lambda_iA^{\prime}_i(\theta_i)+\frac{\partial \log f_S(s;\bm{\theta})}{\partial \theta_i}\right)\mathds{1}_{\{s\neq 0\}}+c_i\mathds{1}_{\{s= 0\}},
\end{equation}
and, for $s \neq 0$, its saddle-point proxy is
\begin{equation}
\label{eq:EDM_E[R_i|S]_approx}
\gh_i(s)= \frac{K_{X_i}^{\prime}\left(\th(s)\right)}{s}=\frac{K_{X_i}^{\prime}\left(\th(s)\right)}{K_{S}^{\prime}\left(\th(s)\right)}=\frac{\lambda_iA^{\prime}_i\left(\theta_i+\th(s)\right)}{\sum_{j\in N}\lambda_jA^{\prime}_j\left(\theta_j+\th(s)\right)},
\end{equation}
so that $g_i(s)=\gh_i(s)+\mathcal{O}(\Lambda^{-1})$ in the regime of Proposition~\ref{prop:E[X_i|S]_formula}.
\end{corollary}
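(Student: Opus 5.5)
The statement to prove is Corollary~\ref{cor:EDM_E[R_i|S]}. The plan is to derive it directly from Proposition~\ref{prop:E[X_i|S]_formula} by combining it with identity \eqref{eq:consistency_identity}. For the exact formula \eqref{eq:EDM_E[R_i|S]_exact}, work on $\{S\neq 0\}$. There $R_i=X_i/S$ and $S$ is $\sigma(S)$-measurable, so $g_i(s)=\e[X_i\mid S=s]/s$ for $\PP_S$-almost every $s\neq0$. Pulling $1/S$ out of the conditional expectation is legitimate because $R_i$ is integrable on the relevant event, by \eqref{eq:standing_integrability}, or globally where that is assumed. Equivalently, this is \eqref{eq:consistency_identity} divided by $s$. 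Substituting \eqref{eq:EDM_E[X_i|S]_exact} gives the first term.

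On $\{S=0\}$ the definition of $R_i$ gives $R_i=c_i$ identically. Hence, whenever $\PP(S=0)>0$, $g_i(0)=c_i$; if $\PP(S=0)=0$, the value at $0$ is immaterial up to $\PP_S$-null sets. This yields the indicator decomposition. The only care needed is to phrase everything $\PP_S$-almost everywhere, since the score term $\partial_{\theta_i}\log f_S$ is defined only where $f_S>0$.

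For the saddle-point proxy, divide \eqref{eq:EDM_E[X_i|S]_approx} by $s$. By definition of $\th(s)$ we have $s=K_S'(\th(s))$, and by independence $K_S=\sum_j K_{X_j}$. Then \eqref{eq:EDM_CGF} gives $K_{X_j}'(t)=\lambda_jA_j'(\theta_j+t)$, which produces the three equal expressions in \eqref{eq:EDM_E[R_i|S]_approx}.

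The main obstacle is converting the relative error into an additive one: we need $g_i(s)=\gh_i(s)\,(1+\mathcal{O}(\Lambda^{-1}))=\gh_i(s)+\gh_i(s)\,\mathcal{O}(\Lambda^{-1})$, with $\gh_i$ bounded in the stated regime. To show this, note that with $s/\Lambda$, $\lambda_j/\Lambda$ and $\bm\theta$ fixed, $\th(s)$ is fixed. It solves $\sum_j(\lambda_j/\Lambda)A_j'(\theta_j+\th)=s/\Lambda$, which is independent of $\Lambda$. Therefore $\gh_i(s)=(\lambda_i/\Lambda)A_i'(\theta_i+\th)/(s/\Lambda)$ is a fixed finite number, and it stays uniformly bounded on compact subsets of the interior of the mean domain, where $s/\Lambda$ is bounded away from $0$. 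The ratio is unaffected by the common scaling in $\Lambda$, so the $\mathcal{O}(\Lambda^{-1})$ uniformity carries over from Proposition~\ref{prop:E[X_i|S]_formula}.
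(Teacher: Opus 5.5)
Your proposal is correct and follows the paper's route; the paper gives no separate proof and treats the corollary as immediate from Proposition~\ref{prop:E[X_i|S]_formula}, exactly as you do: divide by $s\neq0$ via \eqref{eq:consistency_identity}, read off $g_i(0)=c_i$ from the definition of $R_i$, and rewrite the proxy through $s=K_S^{\prime}(\th(s))=\sum_{j\in N}\lambda_jA_j^{\prime}(\theta_j+\th(s))$. Two small refinements: the additive error also follows directly by dividing the intermediate form $\e[X_i\mid S=s]=\lambda_iA_i^{\prime}(\theta_i+\th(s))+\mathcal{O}(1)$ from the Proposition's proof by $s\asymp\Lambda$, and its uniformity requires $s/\Lambda$ to stay in compact sets excluding the origin (automatic for non-negative support, but not for families on all of $\RR$, where $\gh_i$ has a pole at $0$; cf.\ Remark~\ref{rem:ghat_boundary_treatment}), so this is an extra restriction rather than a consequence of compactness in the mean domain.
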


Expression \eqref{eq:EDM_E[R_i|S]_approx} has a transparent reading. Let $\PP_{\th}$ denote the measure obtained by tilting every component exponentially by $\th$, which within an additive EDM simply shifts each natural parameter from $\theta_j$ to $\theta_j+\th$; the saddle-point $\th(s)$ is precisely the tilt under which the aggregate has mean $s$. Then
\[
\gh_i(s)=\frac{\e_{\th(s)}[X_i]}{\e_{\th(s)}[S]},
\]
the share of component $i$ in the mean of the tilted collection of risks. Two structural features follow at once: $\gh_i$ inherits the full-allocation property $\sum_{i\in N}\gh_i(s)=1$ from the exact $g_i$, and $\gh_i$ depends on the model only through the ratios of the tilted mean functions. We record next the one case in which the approximation is superfluous because the exact computation is immediate.

\begin{corollary}
\label{cor:EDM_homogeneous}
If $A_i=A$ and $\theta_i=\theta$ for all $i\in N$, and if the arbitrary constant is set to $c_i=\lambda_i/\sum_{j\in N}\lambda_j$, then the exact conditional share \eqref{eq:EDM_E[R_i|S]_exact} reduces to
\[
g_i(s)=\frac{\lambda_i}{\sum_{j\in N}\lambda_j},
\]
which is the constancy condition underlying Theorem~\ref{thm:EDM_coincide}--(1).
\end{corollary}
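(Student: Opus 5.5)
Take the exact formula \eqref{eq:EDM_E[R_i|S]_exact} of Corollary~\ref{cor:EDM_E[R_i|S]}, specialize it to the homogeneous model, and read off a constant. By Proposition~\ref{prop:E[X_i|S]_formula}, for $\PP_S$-almost every $s$ one has $\e[X_i\mid S=s]=\lambda_iA'(\theta)+\partial_{\theta_i}\log f_S(s;\bm\theta)$, evaluated at $\bm\theta=(\theta,\ldots,\theta)$. The task therefore reduces to computing the partial derivative of $\log f_S$ with respect to one natural parameter at the diagonal point.

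The first step is the structure of $f_S(s;\bm\theta)$ for general $\bm\theta$. Each $X_j$ has density $f_{X_j}(x;\theta_j)=c(x;\lambda_j)\exp\{\theta_jx-\lambda_jA(\theta_j)\}$ with respect to a common dominating measure. The density of $S$ is then the convolution
\[
f_S(s;\bm\theta)=\int_{x_1+\cdots+x_n=s}\prod_{j\in N}c(x_j;\lambda_j)\exp\Bigl\{\sum_{j\in N}\theta_jx_j-\lambda_jA(\theta_j)\Bigr\}.
\]
Differentiating under the integral in $\theta_i$ gives
\[
\partial_{\theta_i}\log f_S=\e[X_i\mid S=s]-\lambda_iA'(\theta_i),
\]
which is how \eqref{eq:EDM_E[X_i|S]_exact} arises. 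This is circular, so I would instead use the reproductive property directly.

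At $\bm\theta=(\theta,\ldots,\theta)$ the tilt factor $e^{\theta s}$ is common to all components. Hence the joint law of $(X_1,\ldots,X_n)$ given $S=s$ does not depend on $\theta$ and coincides with the $\theta=0$ (or any reference) conditional law. By additivity of the EDM, $S$ follows the EDM with parameters $(\theta,\Lambda)$, where $\Lambda=\sum_j\lambda_j$.

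The cleanest route to $\e[X_i\mid S=s]$ is then an exchangeability argument through infinite divisibility or rational approximation of the $\lambda_j$. A more direct route is to verify the global identity \eqref{eq:global_cf} of Corollary~\ref{cor:global_version_0}. For $t$ near $0$,
\[
\e[X_ie^{tS}]=K_{X_i}'(t)\,M_S(t)=\lambda_iA'(\theta+t)\,M_S(t),
\qquad
\e[Se^{tS}]=\Lambda A'(\theta+t)\,M_S(t).
\]
Thus $\e[X_ie^{tS}]=(\lambda_i/\Lambda)\,\e[Se^{tS}]$ on a neighbourhood of $0$. Since both sides are analytic in $t$, analytic continuation extends this to the characteristic-function version. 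Uniqueness of Fourier transforms of finite signed measures then gives $\e[X_i\mid S]=(\lambda_i/\Lambda)\,S$ almost surely.

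Dividing by $s$ for $s\neq0$ yields $g_i(s)=\lambda_i/\Lambda$. At $s=0$, the prescribed choice $c_i=\lambda_i/\Lambda$ makes $g_i(0)=c_i=\lambda_i/\Lambda$ as well. Hence $g_i$ is identically $\lambda_i/\Lambda$, which is the constancy required in Theorem~\ref{thm:EDM_coincide}--(1).

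Equivalently, the formula $\partial_{\theta_i}\log f_S=(\lambda_i/\Lambda)s-\lambda_iA'(\theta)$ at the diagonal can be substituted into \eqref{eq:EDM_E[R_i|S]_exact} to get the same result.

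The main obstacle is justifying the passage from the moment-generating-function identity to the conditional expectation. This needs either analyticity of $t\mapsto\e[X_ie^{tS}]$ on a complex strip, or a direct density argument. I would handle it by noting that $\theta$ is interior, so $M_{X_j}$ is finite near $0$. The signed measures $B\mapsto\e[X_i\mathds{1}_{\{S\in B\}}]$ and $B\mapsto(\lambda_i/\Lambda)\,\e[S\mathds{1}_{\{S\in B\}}]$ then have equal Laplace transforms on an open interval, and are therefore equal.
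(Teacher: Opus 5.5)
Your proof is correct, but it follows a different route from the paper's. The paper develops the observation you record and then set aside. At a common $\theta$ the joint density of $(X_i,S)$ is a $\theta$-free function of $(x,s)$ times $e^{\theta s-\Lambda A(\theta)}$, so $\psi(s):=\e[X_i\mid S=s]$ does not depend on $\theta$. Then $\e_\theta[\psi(S)-(\lambda_i/\Lambda)S]=\lambda_iA'(\theta)-\lambda_iA'(\theta)=0$ for every $\theta\in\operatorname{int}(\Theta)$, and completeness of the full natural exponential family of $S$ gives $\psi(s)=\lambda_is/\Lambda$. You instead hold $\theta$ fixed and vary the transform argument. Independence gives $\e[X_ie^{tS}]=K_{X_i}'(t)M_S(t)$ and $\e[Se^{tS}]=K_S'(t)M_S(t)$, and $K_{X_i}'=(\lambda_i/\Lambda)K_S'$ then equates the Laplace transforms of the finite signed measures $B\mapsto\e[X_i\mathds{1}_{\{S\in B\}}]$ and $B\mapsto(\lambda_i/\Lambda)\e[S\mathds{1}_{\{S\in B\}}]$ near the origin. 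The two arguments are dual forms of one principle: tilting every component by $t$ turns $\e_\theta$ into $\e_{\theta+t}$, and completeness of a natural exponential family is exactly uniqueness of the Laplace transform. The paper's version exposes the structural reason: $S$ is sufficient for the common $\theta$, so the conditional law carries no parameter. Yours uses only independence, finiteness of the moment generating functions near $0$, and proportionality of the cumulant generating functions. It therefore applies verbatim whenever $K_{X_i}=r_iK_S$, including the heterogeneous scaled Poisson case of Theorem~\ref{thm:EDM_coincide}--(2), and it is in effect the mechanism the paper uses for Theorem~\ref{thm:EDM_coincide}--(1) via Corollary~\ref{cor:global_version_0}(ii). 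One caution: keep the signed-measure uniqueness step self-contained, as in your last paragraph, rather than routing it through Corollary~\ref{cor:global_version_0}. That corollary assumes $\e[|R_j|]<\infty$, which fails for Normal components by Proposition~\ref{prop:R_not_L1}, whereas the statement covers every EDM. Your argument needs only $\e[|X_i|e^{tS}]<\infty$ for $t$ near $0$, which independence and the interiority of $\theta$ supply. The remark about an exchangeability or rational-approximation argument is unnecessary and can be dropped.
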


The saddle-point approximation is particularly well suited to EDMs. The aggregate is itself a convolution of independent additive EDMs, so its cumulant generating function is explicit and smooth, and the exponential tilting on which the saddle-point is built leaves the family invariant. The approximation is known to be accurate not only in the centre but also far in the tails, typically outperforming normal or moment-based expansions \citep{BarndorffNielsen1994,Reid1988}, and by \eqref{eq:EDM_E[X_i|S]_exact} the same order of accuracy propagates to the conditional mean, which is obtained from $\log f_S$ by a single differentiation. It is convenient to give the induced allocations a name.

\begin{definition}
\label{def:saddlepoint_allocations}
For $u\in[u^\ast,1)$ with $Q(u)\neq0$, the \emph{saddle-point allocations} $\phih_{i,u}$ and $\psih_{i,u}$ are obtained from Definition~\ref{def:allocations_paradigms} upon replacing $g_i$ by its proxy $\gh_i$ of \eqref{eq:EDM_E[R_i|S]_approx}.
\end{definition}

\begin{remark}
\label{rem:saddlepoint_transfer}
Since $\phih_{i,u}$ and $\psih_{i,u}$ are weighted averages of $\gh_i\circ q$ over $(u,1)$ with weights $q\ge s^\ast>0$ and $1$ respectively, Corollary~\ref{cor:EDM_E[R_i|S]} gives
\[
\sup_{u\in[u^\ast,1)}\bigl|\Phi_{i,u}-\phih_{i,u}\bigr|\ \vee\ \sup_{u\in[u^\ast,1)}\bigl|\Psi_{i,u}-\psih_{i,u}\bigr|
\ \le\ \sup_{s\in\D}\bigl|g_i(s)-\gh_i(s)\bigr| .
\]
The ordering criteria derived below from $\gh_i$ therefore determine the ordering of the true allocations up to an error of the same order as the saddle-point error itself, uniformly in the threshold. They determine it exactly whenever $\gh_i=g_i$, which happens in the Gaussian case of Example~\ref{ex:Normal_tail} and in the scaled Poisson case of Theorem~\ref{thm:EDM_coincide}--(2), and, as Proposition~\ref{prop:gamma_exact} shows, the criteria also reproduce the exact ordering for gamma risks, for which the saddle-point density is not exact.
\end{remark}

\begin{definition}
\label{def:EDM_tail_domain}
The saddle-point $\th\equiv\th(s)$ solves $K_S^{\prime}(\th)=s$ and is unique for any $s$ in the interior of the range of $K_S^{\prime}$, which, under steepness, is the interior of the convex hull of the support of $S$. For $u^\ast\in[0,1)$ with tail domain $\D$, the \emph{cumulant tail domain} is
\begin{equation*}
\label{eq:T_effective_domain_def}
 \T:=
  \th\left(\D\setminus \{s_{\min},s_{\max}\}\right).
\end{equation*}
Since
\begin{equation*}
\label{eq:T_effective_domain_0}
 \mathfrak{T}_{0}=\dom\left(K_S^{\prime}\right)=\bigcap_{j\in N}\bigl(\inf\Theta_j-\theta_j,\ \sup\Theta_j-\theta_j\bigr)
 =\Bigl(\max_{j\in N}\bigl(\inf\Theta_j-\theta_j\bigr),\ \min_{j\in N}\bigl(\sup\Theta_j-\theta_j\bigr)\Bigr),
\end{equation*} 
it holds more generally that
\begin{equation*}
\label{eq:T_effective_domain_general}
 \T\subseteq\Bigl(\max_{j\in N}\bigl(\inf\Theta_j-\theta_j\bigr),\ \min_{j\in N}\bigl(\sup\Theta_j-\theta_j\bigr)\Bigr).
\end{equation*} 
\end{definition}

\begin{remark}
\label{rem:T_D_bijection}
Implicit differentiation of $K_S^{\prime}(\th(s))=s$ gives
\[
\th^{\prime}(s) = \frac{1}{K_S^{\prime\prime}\left(\th(s)\right)}>0 ,
\] 
so $\th$ is an increasing bijection between $\D\setminus\{s_{\min},s_{\max}\}$ and $\T$; in particular
\begin{equation*}
\label{eq:T_effective_domain}
 \T=
  \begin{cases}
    \left(\th\left(s^{\ast}\right),\th\left(s_{\max}\right)\right) & \text{if }F_S(s^{\ast})=u^{\ast},\\[4pt]
    \left[\th\left(s^{\ast}\right),\th\left(s_{\max}\right)\right) & \text{if }F_S(s^{\ast})>u^{\ast},
  \end{cases}
\end{equation*}
where
\[
\th\left(s_{\min}\right)=\max_{j\in N}\bigl(\inf\Theta_j-\theta_j\bigr)\quad\text{and}\quad\th\left(s_{\max}\right)=\min_{j\in N}\bigl(\sup\Theta_j-\theta_j\bigr).
\]
\end{remark}

\begin{proposition}
\label{prop:EDM_gi_monotonicity}
Fix $u^\ast\in[0,1)$ with $s^\ast>0$. Then $\gh_i$ is non-decreasing (non-increasing) on $\D\setminus \{s_{\min},s_{\max}\}$ if and only if 
\begin{equation}
\label{eq:EDM_gi_monotonicity_K}
\frac{K_S^{\prime}(t)}{K_S^{\prime\prime}(t)} \geq(\leq) \frac{K_{X_i}^{\prime}(t)}{K_{X_i}^{\prime\prime}(t)},
\end{equation}
or equivalently
\begin{equation}
\label{eq:EDM_gi_monotonicity_A}
\frac{\sum_{j \in N} \lambda_j A^{\prime}_j(\theta_j + t)}{\sum_{j \in N} \lambda_j A^{\prime\prime}_j(\theta_j + t)}\ge (\le)\frac{A^{\prime}_i(\theta_i + t)}{A^{\prime\prime}_i(\theta_i + t)},
\end{equation}
holds for all $t\in\T$.
\end{proposition}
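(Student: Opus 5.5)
The natural route is to change variables from $s$ to the saddle-point $t=\th(s)$. By Remark~\ref{rem:T_D_bijection}, $\th$ is a strictly increasing bijection from $\D\setminus\{s_{\min},s_{\max}\}$ onto $\T$ with $\th'(s)=1/K_S''(\th(s))>0$. Hence $\gh_i$ is non-decreasing (non-increasing) on $\D\setminus\{s_{\min},s_{\max}\}$ if and only if the function
\[
G(t):=\frac{K_{X_i}'(t)}{K_S'(t)}
\]
is non-decreasing (non-increasing) on $\T$, since $\gh_i=G\circ\th$ by \eqref{eq:EDM_E[R_i|S]_approx}. Both cumulant generating functions are real-analytic on $\T$, because each $\theta_j+t$ lies in the interior of $\Theta_j$ by Definition~\ref{def:EDM_tail_domain}. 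So $G$ is differentiable there, and monotonicity on the interval $\T$ is equivalent to a sign condition on $G'$ holding throughout $\T$.

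Next I compute
\[
G'(t)=\frac{K_{X_i}''(t)K_S'(t)-K_{X_i}'(t)K_S''(t)}{K_S'(t)^2}.
\]
The sign of the numerator must then be converted into \eqref{eq:EDM_gi_monotonicity_K} by dividing by $K_{X_i}''(t)K_S''(t)>0$. Both second derivatives are positive because they are tilted variances, namely $\lambda_iA_i''(\theta_i+t)>0$ and the sum of such terms. This gives $G'(t)\ge0$ if and only if $K_S'/K_S''\ge K_{X_i}'/K_{X_i}''$. I also need $K_S'(t)\neq0$, so that $G$ is defined and the denominator is harmless. This follows from $s^\ast>0$: for $t\in\T$, $K_S'(t)=s$ for some $s\in\D$, and $s\ge s^\ast>0$. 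The case $s=s^\ast$ occurs only in the closed-endpoint case, and there too it is positive.

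The equivalent form \eqref{eq:EDM_gi_monotonicity_A} is then pure substitution. From \eqref{eq:EDM_CGF}, $K_{X_i}'(t)=\lambda_iA_i'(\theta_i+t)$ and $K_{X_i}''(t)=\lambda_iA_i''(\theta_i+t)$, and $K_S=\sum_jK_{X_j}$ by independence. The factor $\lambda_i$ cancels on the right-hand side.

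The main obstacle is the endpoint bookkeeping. The equivalence between monotonicity on the interval $\T$ and the derivative sign condition requires $\T$ to be an interval, which holds by Remark~\ref{rem:T_D_bijection}. The condition must hold for all $t\in\T$, including the closed endpoint $\th(s^\ast)$ when $F_S(s^\ast)>u^\ast$. At that endpoint I will argue by continuity of $G'$: the derivative condition on the interior implies monotonicity on the closure within $\T$, and conversely monotonicity forces the one-sided derivative at the endpoint to have the correct sign. Everything else is routine once the positivity of $K_S'$ and of the second derivatives is secured.
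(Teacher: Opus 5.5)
Your proposal is correct and follows the paper's proof. In both, the sign of the derivative reduces to that of $K_S'K_{X_i}''-K_{X_i}'K_S''$, this expression is divided by the positive product of the tilted variances, and the increasing bijection $\th$ transports the condition to $\T$. The only differences are cosmetic: you compose through $\th$ first and differentiate $K_{X_i}'/K_S'$ in the tilt variable, whereas the paper differentiates $\gh_i$ in $s$ via the chain rule; you are also more explicit about the derivative--monotonicity equivalence at a closed endpoint of $\T$.
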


\begin{remark}
\label{rm:EDM_gi_quasiconcave_convex}
If the difference of the two sides of \eqref{eq:EDM_gi_monotonicity_K}, equivalently of \eqref{eq:EDM_gi_monotonicity_A}, changes sign at most once on $\T$, from positive to negative (negative to positive), then $\gh_i$ is quasi-concave (quasi-convex) on $\D\setminus\{s_{\min},s_{\max}\}$ in the sense of Definition~\ref{def:quasi_concave}.
\end{remark}

\begin{remark}
\label{rem:ghat_boundary_treatment}
The restriction $s^\ast>0$ is not merely technical: if $0\in\D$ and $K_{X_i}^{\prime}(t_0)\neq0$ at the tilt $t_0$ solving $K_S^{\prime}(t_0)=0$, then $\gh_i$ has a pole at the origin, whereas the exact $g_i$ is bounded there. Similarly, condition \eqref{eq:EDM_gi_monotonicity_K} governs $\gh_i$ only on the open part of the tail domain, so that the endpoints $s_{\min}$ and $s_{\max}$ must be inspected separately. This matters for compactly supported models with an atom at an endpoint, where the exact value of $g_i$ has to be examined directly.
\end{remark}

\begin{corollary}
\label{cor:EDM_allocations_monotonicity_general}
Fix $u^\ast\in[0,1)$ with $s^\ast>0$ and let $\T$ be its cumulant tail domain. Suppose $\gh_i$ is quasi-concave (quasi-convex) on $\D\setminus\{s_{\min},s_{\max}\}$, as characterized in Remark~\ref{rm:EDM_gi_quasiconcave_convex}. Then $\phih_{i,u}\ge(\le)\psih_{i,u}$ for all $u\in[u^\ast,1)$ if and only if \eqref{eq:EDM_gi_monotonicity_K}, equivalently \eqref{eq:EDM_gi_monotonicity_A}, holds on $\T$. By Remark~\ref{rem:saddlepoint_transfer}, the same ordering holds for $\Phi_{i,u}$ and $\Psi_{i,u}$ up to the saddle-point error, uniformly in $u$.
\end{corollary}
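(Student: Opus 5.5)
The corollary is obtained by applying Theorem~\ref{thm:allocations_dominance_tail} with $\gh_i$ in place of $g_i$. The remark after Theorem~\ref{thm:tail_allocations_coincide} notes that its argument uses only \eqref{eq:L1_triplet}. The same is true of the proof of Theorem~\ref{thm:allocations_dominance_tail}, which runs through the covariance identity of Lemma~\ref{lem:cov_representation}, Chebyshev's integral inequality and the unimodality argument. So the first step is to confirm that the proxy $\hat h(v):=\gh_i(q(v))$ satisfies $\hat h,\,q\hat h\in L^1(u^\ast,1)$. Since $s^\ast>0$, we have $q\ge s^\ast>0$ on $(u^\ast,1)$, and $\gh_i(s)=K_{X_i}^{\prime}(\th(s))/s$ on $\D\setminus\{s_{\min},s_{\max}\}$. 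This gives $|q\hat h|=|K_{X_i}^{\prime}(\th(q))|$ and $|\hat h|\le |q\hat h|/s^\ast$. Integrability of $K_{X_i}^{\prime}\circ\th\circ q$ over the tail should follow from the tilted-mean interpretation: $K_{X_i}^{\prime}(\th(s))$ is the mean of $X_i$ under the tilt making $E[S]=s$. It is therefore dominated in absolute value by a quantity that grows at most linearly in $s$ along the tail, since all $K_{X_j}^{\prime}(\th)$ are non-decreasing in $\th$ and sum to $s$. Together with $q\in L^1$, this gives the integrability. The endpoints $s_{\min},s_{\max}$ carry at most atoms; where they matter, $\gh_i$ is extended there by the boundary treatment of Remark~\ref{rem:ghat_boundary_treatment}. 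This changes $\hat h$ only on a null set or on a single atom, which is handled as in the theorem.

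Once integrability is in place, the identity of Lemma~\ref{lem:cov_representation} holds verbatim:
\[
\phih_{i,u}-\psih_{i,u}=\frac{\Cov_u\bigl(q(V),\hat h(V)\bigr)}{\e_u[q(V)]},
\]
and $\e_u[q(V)]\ge s^\ast>0$, so $Q(u)>0$ for all $u$. Theorem~\ref{thm:allocations_dominance_tail}, in its $h$-form, then says the following. Under quasi-concavity (quasi-convexity) of $\gh_i$ on the tail, which is assumed here, $\phih_{i,u}\ge(\le)\psih_{i,u}$ for all $u\in[u^\ast,1)$ if and only if $\gh_i$ is non-decreasing (non-increasing) on the tail. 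By Proposition~\ref{prop:EDM_gi_monotonicity}, that monotonicity is equivalent to \eqref{eq:EDM_gi_monotonicity_K}, equivalently \eqref{eq:EDM_gi_monotonicity_A}, holding on $\T$. The parametrization is legitimate because, by Remark~\ref{rem:T_D_bijection}, $\th$ is an increasing bijection of $\D\setminus\{s_{\min},s_{\max}\}$ onto $\T$.

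The last sentence of the corollary is a direct citation of Remark~\ref{rem:saddlepoint_transfer}. Both $|\Phi_{i,u}-\phih_{i,u}|$ and $|\Psi_{i,u}-\psih_{i,u}|$ are bounded uniformly in $u$ by $\sup_{\D}|g_i-\gh_i|=\mathcal{O}(\Lambda^{-1})$, so the sign of $\Phi_{i,u}-\Psi_{i,u}$ agrees with that of $\phih_{i,u}-\psih_{i,u}$ up to twice this error.

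The main obstacle is the first step. We must justify that the proof of Theorem~\ref{thm:allocations_dominance_tail} depends on $g_i$ only through $h$ and \eqref{eq:L1_triplet}, and not through its probabilistic meaning as a conditional expectation. We must also settle the endpoints $s_{\min},s_{\max}$, where $\gh_i$ is not defined by the saddle-point equation. I would handle the endpoints by noting that $s_{\min}<s^\ast$ is excluded from the tail. At $s_{\max}$, if it is an atom, I would assign $\gh_i(s_{\max})$ its monotone limit; this affects neither the monotonicity criterion nor the integrals.
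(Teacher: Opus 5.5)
Your proposal is correct and is essentially the paper's proof. The paper's proof is exactly this step: apply Theorem~\ref{thm:allocations_dominance_tail} with $\gh_i$ in place of $g_i$, which is legitimate because that theorem uses only the integrability \eqref{eq:L1_triplet} of the composed function and its shape, and combine it with Proposition~\ref{prop:EDM_gi_monotonicity}; your linear-growth bound on $K_{X_i}^{\prime}(\th(s))$ along the tail, which makes $\gh_i\circ q$ bounded and $q\,(\gh_i\circ q)$ integrable, supplies the integrability check that the paper only asserts.
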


Condition \eqref{eq:EDM_gi_monotonicity_A} compares the mean-to-variance ratio of a single component with that of the collection of risks, both evaluated at the common tilt $t$. When all components are drawn from one family, this comparison can be resolved by a single scalar function.

\begin{theorem}
\label{thm:EDM_allocations_monotonicity_Ai=A}
Fix $u^\ast\in[0,1)$ with $s^\ast>0$, let $\T$ be its cumulant tail domain, and suppose $A_j=A$ for all $j\in N$. Set $\theta_{\min}=\min_{j\in N}\theta_j$, $\theta_{\max}=\max_{j\in N}\theta_j$ and
\[
\iota(\theta):=\frac{A^{\prime}(\theta)}{A^{\prime\prime}(\theta)},\qquad I(t):=\iota\bigl(\theta_{\min}+t\bigr),
\]
and assume that $\iota$ is monotone on $\Theta^{\ast}:=\bigl(\theta_{\min}+\inf\T,\ \theta_{\max}+\sup\T\bigr)$, the range of natural parameters visited by the components as the tilt ranges over $\T$. Then:
\begin{enumerate}
\item[\textup{(i)}] if $\iota$ is non-decreasing on $\Theta^{\ast}$ and $\theta_i=\theta_{\min}$, then $\gh_i$ is non-decreasing and $\phih_{i,u}\ge \psih_{i,u}$ for all $u\in[u^\ast,1)$, while if $\theta_i=\theta_{\max}$ the reverse inequalities hold;
\item[\textup{(ii)}] if $\iota$ is non-increasing on $\Theta^{\ast}$, the two conclusions in \textup{(i)} are interchanged.
\end{enumerate}
\end{theorem}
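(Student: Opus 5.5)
The plan is to reduce the statement to the pointwise criterion of Proposition~\ref{prop:EDM_gi_monotonicity} and then use the \emph{if} part of Theorem~\ref{thm:allocations_dominance_tail}; the latter needs no quasi-concavity, so monotonicity of $\gh_i$ is enough. With $A_j=A$, condition \eqref{eq:EDM_gi_monotonicity_A} reads
\[
\frac{\sum_{j\in N}\lambda_jA'(\theta_j+t)}{\sum_{j\in N}\lambda_jA''(\theta_j+t)}\ \ge(\le)\ \iota(\theta_i+t),\qquad t\in\T .
\]
The left side is a weighted average of the values $\iota(\theta_j+t)$, with positive weights $w_j(t)=\lambda_jA''(\theta_j+t)\big/\sum_{k}\lambda_kA''(\theta_k+t)$. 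This holds because $A'(\theta_j+t)=\iota(\theta_j+t)\,A''(\theta_j+t)$, and $A''>0$ on the interior of the natural parameter space since the variance is positive. Hence
\[
\frac{K_S'(t)}{K_S''(t)}=\sum_{j\in N}w_j(t)\,\iota(\theta_j+t).
\]

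Next, for every $t\in\T$ and every $j$, the point $\theta_j+t$ lies in $[\theta_{\min}+t,\theta_{\max}+t]\subseteq\Theta^\ast$. This uses the bounds on $\T$ from Definition~\ref{def:EDM_tail_domain}, which ensure that each tilted parameter is interior to $\Theta$, so $\iota$ is evaluated where it is assumed monotone. If $\iota$ is non-decreasing on $\Theta^\ast$, then $\iota(\theta_{\min}+t)\le\iota(\theta_j+t)\le\iota(\theta_{\max}+t)$ for all $j$. A convex combination of these values therefore lies between $I(t)=\iota(\theta_{\min}+t)$ and $\iota(\theta_{\max}+t)$.

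Case (i), $\theta_i=\theta_{\min}$: the weighted average is at least $\iota(\theta_i+t)$, so the ``$\ge$'' version of \eqref{eq:EDM_gi_monotonicity_A} holds on all of $\T$. By Proposition~\ref{prop:EDM_gi_monotonicity}, $\gh_i$ is non-decreasing on $\D\setminus\{s_{\min},s_{\max}\}$. If $\theta_i=\theta_{\max}$, the inequality reverses and $\gh_i$ is non-increasing. Case (ii) is the same argument with all inequalities on $\iota$ reversed, which swaps the two conclusions.

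To pass from monotonicity of $\gh_i$ to $\phih_{i,u}\ge(\le)\psih_{i,u}$, I would apply the \emph{if} part of Theorem~\ref{thm:allocations_dominance_tail} with $h=\gh_i\circ q$. As the paper notes after Theorem~\ref{thm:tail_allocations_coincide}, that argument (the covariance identity of Lemma~\ref{lem:cov_representation} together with Chebyshev's integral inequality) uses only \eqref{eq:L1_triplet}. So it suffices to check that $\gh_i\circ q$ and $q\,\gh_i\circ q$ are integrable on $(u^\ast,1)$, and that $\int_{u^\ast}^1q>0$, which follows from $s^\ast>0$.

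Integrability is where I expect the main obstacle, though a mild one. For non-negative families, $\gh_i\in[0,1]$ because it is a ratio of tilted means with like signs, so $|q\,\gh_i\circ q|\le |q|$. In general, $q\,\gh_i(q)=K_{X_i}'(\th(q))$, and I would bound this by monotonicity of $\gh_i$ together with $q\ge s^\ast>0$. Since $\gh_i$ is monotone, it is bounded on $\D$ by its value at $s^\ast$ and its limit at $s_{\max}$. Whenever that limit is finite, boundedness yields both integrability conditions via $\e[|S|\mathds{1}_{\{S\in\D\}}]<\infty$.

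Finally, the endpoints $s_{\min},s_{\max}$ (Remark~\ref{rem:ghat_boundary_treatment}) carry $\PP_S$-mass only in lattice or compact cases. There, $\gh_i$ is defined by continuous extension, which preserves monotonicity.
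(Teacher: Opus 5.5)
Your proposal is correct and follows the same route as the paper. Both arguments write $K_S'(t)/K_S''(t)$ as an average of the $\iota(\theta_j+t)$ with positive weights $\lambda_jA''(\theta_j+t)$, compare it with $\iota(\theta_i+t)$ using the monotonicity of $\iota$, invoke Proposition~\ref{prop:EDM_gi_monotonicity}, and finish with the sufficiency part of Theorem~\ref{thm:allocations_dominance_tail}, which needs no quasi-concavity.

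Your integrability paragraph goes beyond the paper, which applies that theorem to $\gh_i$ without checking \eqref{eq:L1_triplet}. The caveat ``whenever that limit is finite'' can be dropped. Each $K_{X_j}'$ is increasing on $\T$, so the limits $m_j:=\lim_{\tau\downarrow\inf\T}K_{X_j}'(\tau)$ are finite, because their sum is $s^\ast$. It follows that $m_i\le K_{X_i}'(t)\le K_S'(t)-\sum_{j\neq i}m_j$, and since $s\ge s^\ast>0$ this gives $|\gh_i|\le 1+\sum_{j}|m_j|/s^\ast$ on $\D$.
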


Theorem~\ref{thm:EDM_allocations_monotonicity_Ai=A} is stated for the proxy $\gh_i$. In the leading non-trivial case of gamma risks, the criterion can be verified directly on the exact conditional share, and it turns out to be exactly right even though the saddle-point density itself is not.

\begin{proposition}
\label{prop:gamma_exact}
Let $X_j$, $j\in N$, be independent gamma risks with shape $\alpha_j>0$ and rate $\beta_j>0$; this is the Tweedie EDM with $p=2$, $A(\theta)=-\log(-\theta)$, $\lambda_j=\alpha_j$ and $\theta_j=-\beta_j$. Then the conditional law of $\bm{R}=(R_1,\ldots,R_n)$ given $S=s$ has density proportional to $\prod_{j\in N} r_j^{\alpha_j-1}\exp\{-s\sum_{j\in N}\beta_jr_j\}$ on the unit simplex, and, for every $s>0$,
\begin{equation}
\label{eq:gamma_gi_derivative}
g_i^{\prime}(s)=-\Cov\Bigl(R_i,\sum_{j\in N}\beta_jR_j\Bigm| S=s\Bigr)
=-\sum_{j\neq i}(\beta_j-\beta_i)\,\Cov\bigl(R_i,R_j\bigm| S=s\bigr).
\end{equation}
Consequently, if $n=2$ then $g_i^{\prime}(s)=(\beta_j-\beta_i)\Var(R_i\mid S=s)$ for $j\neq i$, so that $\Phi_{i,u}\ge\Psi_{i,u}$ for all $u\in[0,1)$ if and only if $\beta_i\le\beta_j$. If $n\ge3$ and the conditional shares are pairwise non-positively correlated, which holds when the component densities are log-concave, that is $\alpha_j\ge1$, by the negative dependence properties of multivariate reverse rule densities \citep{Karlin1980,JoagDev1983}, then $\beta_i=\min_{j\in N}\beta_j$ implies $\Phi_{i,u}\ge\Psi_{i,u}$ and $\beta_i=\max_{j\in N}\beta_j$ implies $\Phi_{i,u}\le\Psi_{i,u}$, for all $u\in[0,1)$.
\end{proposition}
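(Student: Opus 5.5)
We prove Proposition~\ref{prop:gamma_exact} in three stages: derive the conditional law of $\bm{R}$ given $S$, differentiate $g_i$ as an exponential-family mean in the parameter $s$, and then apply Theorem~\ref{thm:allocations_dominance_tail} with $u^\ast=0$. The integrability conditions \eqref{eq:standing_integrability} hold globally because $0\le R_i\le 1$, and \eqref{eq:standing_consistency} is void because $\PP(S=0)=0$.

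\textbf{Step 1: conditional law of $\bm{R}$ given $S$.} The joint density of $\XX$ is $\prod_j \beta_j^{\alpha_j}x_j^{\alpha_j-1}e^{-\beta_jx_j}/\Gamma(\alpha_j)$. Change variables to $(s,r_1,\ldots,r_{n-1})$ with $x_j=sr_j$, where $r_n=1-\sum_{j<n}r_j$; the Jacobian is $s^{n-1}$. The joint density of $(S,R_1,\ldots,R_{n-1})$ is then proportional to
\[
s^{\sum_j\alpha_j-1}\prod_j r_j^{\alpha_j-1}\exp\Bigl\{-s\sum_j\beta_jr_j\Bigr\}.
\]
For fixed $s$ this is an integrable function on the simplex, so normalizing gives the stated conditional density. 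It can be written as $p_s(\bm r)=Z(s)^{-1}\mu(d\bm r)\,e^{-sT(\bm r)}$, where $\mu$ is the fixed Dirichlet-type measure and $T=\sum_j\beta_jr_j$. In the parameter $s$ this is a one-parameter exponential family. The resulting version of $g_i(s)=\e[R_i\mid S=s]$ is smooth in $s>0$.

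\textbf{Step 2: differentiate $g_i$.} Write $g_i(s)=\int r_i e^{-sT}d\mu/\int e^{-sT}d\mu$. Differentiation under the integral sign is justified because $T$ and $r_i$ are bounded on the simplex. This gives
\[
g_i'(s)=-\e[R_iT\mid S=s]+\e[R_i\mid S=s]\,\e[T\mid S=s]=-\Cov(R_i,T\mid S=s).
\]
Since $\sum_jR_j=1$, we have $\Cov(R_i,\sum_jR_j\mid S=s)=0$. Subtracting $\beta_i$ times this from the covariance gives the second form, $-\sum_{j\neq i}(\beta_j-\beta_i)\Cov(R_i,R_j\mid S=s)$.

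\textbf{Step 3: the case $n=2$.} Here $R_j=1-R_i$, so $\Cov(R_i,R_j\mid S=s)=-\Var(R_i\mid S=s)$ and $g_i'(s)=(\beta_j-\beta_i)\Var(R_i\mid S=s)$. The conditional variance is strictly positive because the conditional law of $R_i$ has a density on $(0,1)$. Hence $g_i$ is strictly increasing, strictly decreasing, or constant according as $\beta_i<\beta_j$, $\beta_i>\beta_j$, or $\beta_i=\beta_j$; in every case $g_i$ is monotone, hence quasi-concave.

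By Theorem~\ref{thm:allocations_dominance_tail}, which requires $\int_0^1q>0$ and holds since $\e[S]>0$, we have $\Phi\ge\Psi$ throughout iff $g_i$ is non-decreasing, i.e.\ iff $\beta_i\le\beta_j$. The one subtle direction is necessity when $\beta_i>\beta_j$. Then $g_i$ is strictly decreasing and non-constant, so it is not non-decreasing, and the theorem yields failure of dominance at some $u$. Alternatively, Theorem~\ref{thm:tail_allocations_coincide} together with the reverse dominance gives strict $\Phi<\Psi$ somewhere.

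\textbf{Step 4: the case $n\ge3$.} This is the main obstacle, because it requires the negative-dependence input. For $\alpha_j\ge1$, the conditional density $\prod_j r_j^{\alpha_j-1}e^{-s\beta_jr_j}$ restricted to the simplex is a product of log-concave functions conditioned on a fixed sum. By the Efron/Karlin--Rinott results on multivariate reverse rule densities (condition on a sum of independent PF$_2$ variables), the resulting vector is negatively associated, so $\Cov(R_i,R_j\mid S=s)\le0$ for $i\ne j$.

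The plan is to view $(sR_1,\ldots,sR_n)$ given $S=s$ as independent variables $Y_j$ with log-concave densities $y^{\alpha_j-1}e^{-\beta_jy}$, conditioned on $\sum Y_j=s$. This is exactly the setting of \citet{JoagDev1983}.

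With these covariances non-positive:

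\begin{itemize}
\item If $\beta_i=\min_j\beta_j$, every term $-(\beta_j-\beta_i)\Cov(R_i,R_j\mid S=s)$ is $\ge0$. Hence $g_i'\ge0$, $g_i$ is non-decreasing, and the sufficiency half of Theorem~\ref{thm:allocations_dominance_tail}, which needs no quasi-concavity, gives $\Phi_{i,u}\ge\Psi_{i,u}$ for all $u$.
\item If $\beta_i=\max_j\beta_j$, the same argument gives $g_i'\le0$ and therefore $\Phi_{i,u}\le\Psi_{i,u}$ for all $u$.
\end{itemize}
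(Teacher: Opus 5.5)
Your proposal is correct and follows the paper's proof essentially step for step. You use the change of variables $x_j=sr_j$ to obtain the exponential-family conditional law on the simplex, differentiate $g_i$ under the integral (justified by boundedness on the compact simplex) to get $-\Cov(R_i,T\mid S=s)$, use $\sum_j R_j=1$ for the second form, and then apply Theorem~\ref{thm:allocations_dominance_tail} with $u^\ast=0$: in both directions for $n=2$ (monotone, hence quasi-concave), and only its sufficiency half for $n\ge 3$. Your extra checks — the standing conditions via $0\le R_i\le 1$ and $\PP(S=0)=0$, and the non-positive correlations obtained from Joag-Dev--Proschan negative association of independent PF$_2$ variables given their sum — only make explicit what the paper leaves implicit or cites.
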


Since $\theta_j=-\beta_j$, the component with the smallest rate is the one with the largest natural parameter, and Proposition~\ref{prop:gamma_exact} states that this component, which is the heaviest-tailed and therefore drives the aggregate, is charged more by the severity-weighted allocation. This is exactly the conclusion delivered by Theorem~\ref{thm:EDM_allocations_monotonicity_Ai=A} for the Tweedie family at $p=2$, as the next example records.

\begin{example}
\label{ex:Tweedie_I(t)}
For the Tweedie family,
\[
A(\theta)=\begin{cases}
\frac{\gamma-1}{\gamma}\left(\frac{\theta}{\gamma-1}\right)^{\gamma},\quad &\text{for } p\neq 1,2,
\\
-\log(-\theta),\quad &\text{for } p=2,
\\
e^{\theta},\quad&\text{for }p=1,
\end{cases}
\]
with $\gamma=(p-2)/(p-1)$, differentiation gives $A^{\prime}(\theta)=(\theta/(\gamma-1))^{\gamma-1}$ and $A^{\prime\prime}(\theta)=(\theta/(\gamma-1))^{\gamma-2}$ for $p\neq1$, so that
\[
\iota(\theta)=\begin{cases}
(1-p)\,\theta,\quad &\text{for } p\neq 1,
\\
1,\quad&\text{for }p=1,
\end{cases}
\qquad\text{and}\qquad
I(t)=\iota(\theta_{\min}+t).
\]
In every case $\iota$ is monotone on all of $\Theta$, so the hypothesis of Theorem~\ref{thm:EDM_allocations_monotonicity_Ai=A} is met for any tail domain.

For $p\le 0$, which comprises the Normal at $p=0$ and the tilted extreme stable laws for $p<0$, $\iota$ is non-decreasing; these models are excluded from the global framework by Proposition~\ref{prop:R_not_L1} but admissible on any tail with $s^\ast>0$ by Remark~\ref{rem:tail_integrability_scope}. There, $\phih_{i,u}\ge\psih_{i,u}$ for the component with $\theta_i=\theta_{\min}$ and the reverse for $\theta_i=\theta_{\max}$. The Normal case is treated exactly in Example~\ref{ex:Normal_tail}. No Tweedie models exist for $0<p<1$. For $p=1$ we recover the Poisson model, for which $\iota$ is constant and, by Theorem~\ref{thm:EDM_coincide}--(2), the two allocations coincide exactly at every level. For $p>1$, $\iota$ is non-increasing, so the component with the largest (smallest) natural parameter has $\phih_{i,u}\ge(\le)\psih_{i,u}$ for all $u\in[u^\ast,1)$; at $p=2$ this is the exact conclusion of Proposition~\ref{prop:gamma_exact}. For $1<p<2$ the aggregate has an atom at the origin, so a threshold with $s^\ast>0$ must be used, in line with Remark~\ref{rem:ghat_boundary_treatment}.
\end{example}

The Normal family illustrates every feature of the theory at once: the global allocation is undefined, the tail allocations are well defined, the saddle-point proxy is exact, and the resulting criterion is explicit.

\begin{example}
\label{ex:Normal_tail}
Let $X_i\sim N(\mu_i,\sigma_i^2)$ for all $i\in N$, which is the Tweedie model at $p=0$ with $A(\theta)=\theta^2/2$, $\lambda_i=\sigma_i^2$ and $\theta_i=\mu_i/\sigma_i^2$, so that $S\sim N(\mu_S,\sigma_S^2)$ with $\mu_S=\sum_{j\in N}\mu_j$ and $\sigma_S^2=\sum_{j\in N}\sigma_j^2$. Fix $u^\ast$ with $s^\ast=q(u^\ast)>0$, so that $\D=(s^\ast,\infty)$. Remark~\ref{rem:tail_integrability_scope} supplies $\e[|R_i|\mathds{1}_{\{S\in\D\}}]<\infty$, and the continuity of $S$ makes \eqref{eq:standing_consistency} vacuous.

\medskip\noindent\textit{Exact conditional share.}
Since $(X_i,S)$ is jointly Normal with $\Cov(X_i,S)=\sigma_i^2$, linear regression gives $\e[X_i\mid S=s]=\mu_i+\beta_i(s-\mu_S)$ with $\beta_i:=\sigma_i^2/\sigma_S^2$, so that for all $s\in\D\subset(0,\infty)$
\begin{equation}
\label{eq:Normal_gi}
    g_i(s)
    = \beta_i+\frac{\mu_i-\beta_i\mu_S}{s}.
\end{equation}
The proxy \eqref{eq:EDM_E[R_i|S]_approx} is exact here: the quadratic cumulant generating functions $K_{X_i}(t)=\mu_it+\sigma_i^2t^2/2$ and $K_S(t)=\mu_St+\sigma_S^2t^2/2$ give the explicit saddle-point $\th(s)=(s-\mu_S)/\sigma_S^2$ and
\[
    \gh_i(s)=\frac{K^\prime_{X_i}\left(\th(s)\right)}{s}
    = \frac{\mu_i+\sigma_i^2\,\th(s)}{s}
    = \frac{\mu_i+\beta_i(s-\mu_S)}{s}
    = g_i(s),
\]
because for a quadratic cumulant generating function the saddle-point density approximation is itself exact.

\medskip\noindent\textit{Tail coincidence.}
By Theorem~\ref{thm:tail_allocations_coincide}, $\Phi_{i,u}=\Psi_{i,u}$ for all $u\in[u^\ast,1)$ if and only if $g_i$ is constant on $\D$, which by \eqref{eq:Normal_gi} happens if and only if
\begin{equation}
\label{eq:Normal_coincidence_condition}
    \mu_i = \beta_i\mu_S
    = \frac{\sigma_i^2}{\sigma_S^2}\,\mu_S
    \iff \theta_i=\frac{\mu_S}{\sigma_S^2},
\end{equation}
that is, the mean and the variance of $X_i$ represent the same fraction of those of $S$; then $r_i=\beta_i$ and $\sum_{i\in N}r_i=1$ automatically. Two natural sub-cases arise: the zero-mean case $\mu_i=0$ for all $i$, in which \eqref{eq:Normal_coincidence_condition} holds for every $i$ and $r_i$ is the variance share of $X_i$; and the mean--variance proportional case $\mu_i/\mu_S=\sigma_i^2/\sigma_S^2$ with $\mu_S\neq0$.

\medskip\noindent\textit{Allocation ordering.}
When \eqref{eq:Normal_coincidence_condition} fails, $g_i$ is strictly monotone on $\D$, since
\[
    \gh_i^{\prime}(s)=g_i^{\prime}(s)
    = -\frac{\mu_i-\beta_i\mu_S}{s^2}
\]
has constant sign there, and a strictly monotone function is both quasi-concave and quasi-convex, so Theorem~\ref{thm:allocations_dominance_tail} applies in both directions: $\Phi_{i,u}\ge\Psi_{i,u}$ for all $u\in[u^\ast,1)$ if and only if $\mu_i\le\beta_i\mu_S$, and $\Phi_{i,u}\le\Psi_{i,u}$ for all such $u$ if and only if $\mu_i\ge\beta_i\mu_S$. In the first case component $i$ contributes less to the aggregate mean than its variance share $\beta_i$ would suggest; as the threshold rises, $g_i(s)$ increases towards $\beta_i$, and the severity-weighted allocation, which puts more weight on the larger tail outcomes, charges it more. Finally, since $\mu_j=\theta_j\sigma_j^2$, the conditions $\mu_i<(>)\beta_i\mu_S$ read
\[
\theta_i<(>)\sum_{j\in N}\frac{\sigma_j^2}{\sigma_S^2}\,\theta_j ,
\]
comparisons of $\theta_i$ with a variance-weighted average of the natural parameters. They are therefore weaker than, and implied by, the extremal conditions $\theta_i=\theta_{\min}$ and $\theta_i=\theta_{\max}$ of Theorem~\ref{thm:EDM_allocations_monotonicity_Ai=A}, which is consistent with $\iota$ being non-decreasing for $p=0$.
\end{example}

The Tweedie models have $\iota$ monotone on the whole of $\Theta$. The next example shows that the localization to a tail domain is not a formality: there are families supported on $\RR$ for which $\iota$ is monotone only beyond a threshold, and that threshold is itself located in the right tail.

\begin{example}
\label{ex:hyperbolic_secant}
Let each $X_j$ follow a Generalized Hyperbolic Secant distribution, supported on $\RR$, with cumulant function $A(\theta)=-\log(\cos\theta)$ and natural parameter space $\Theta=(-\pi/2,\pi/2)$. From $A^{\prime}(\theta)=\tan\theta$ and $A^{\prime\prime}(\theta)=\sec^{2}\theta$ we obtain
\[
    \iota(\theta)=\tfrac12\sin(2\theta),\qquad I(t)=\tfrac12\sin\bigl(2(\theta_{\min}+t)\bigr),\qquad I^{\prime}(t)=\cos\bigl(2(\theta_{\min}+t)\bigr),
\]
and by Definition~\ref{def:EDM_tail_domain} the cumulant generating function of $S$ is finite on $\mathfrak{T}_0=(-\pi/2-\theta_{\min},\,t_{\max})$ with $t_{\max}=\pi/2-\theta_{\max}$. Thus $\iota$ increases up to $\pi/4$ and decreases thereafter, and the switching tilt is $t^{\ast}=\pi/4-\theta_{\min}$.

Assume $\theta_{\max}-\theta_{\min}<\pi/4$, which is exactly the condition $t^{\ast}<t_{\max}$ and bounds the heterogeneity of the natural parameters. Then $t^\ast$ lies in the right tail of the aggregate. Indeed,
\[
    K_S^{\prime}(t^{\ast}) = \sum_{j\in N} \lambda_j \tan\Bigl(\tfrac{\pi}{4} + \theta_j - \theta_{\min}\Bigr)\ \ge\ \sum_{j\in N} \lambda_j\ >\ 0,
\]
because $\theta_j-\theta_{\min}\in[0,\pi/4)$ places each argument in $[\pi/4,\pi/2)$, where the tangent is at least one. Since $K_S^{\prime}$ is strictly increasing, $t^{\ast}>t_0$, the unique root of $K_S^{\prime}(t_0)=0$, so choosing $u^\ast$ with $q(u^\ast)=s^\ast:=K_S^{\prime}(t^{\ast})>0$ gives a legitimate tail threshold, with $\T=(t^\ast,t_{\max})$ and the integrability of Remark~\ref{rem:tail_integrability_scope} in force.

On this tail domain, $\Theta^{\ast}=(\pi/4,\pi/2)$: for $t\in\T$ and any $j\in N$ we have $\theta_j+t>\theta_{\min}+t^{\ast}=\pi/4$ and $\theta_j+t<\theta_{\max}+t_{\max}=\pi/2$. Since $\iota^{\prime}(\theta)=\cos(2\theta)<0$ on $(\pi/4,\pi/2)$, $\iota$ is strictly decreasing on $\Theta^{\ast}$, and Theorem~\ref{thm:EDM_allocations_monotonicity_Ai=A}--(ii) gives $\phih_{i,u}\ge\psih_{i,u}$ for all $u\in[u^\ast,1)$ when $\theta_i=\theta_{\max}$, with the reverse inequality when $\theta_i=\theta_{\min}$. Globally, by contrast, no such conclusion is available, because $\iota$ is not monotone on $\mathfrak{T}_0$.
\end{example}

Example~\ref{ex:Tweedie_I(t)} shows that $\iota$ is non-increasing for every Tweedie model with non-negative support, and Example~\ref{ex:hyperbolic_secant} that this may fail for models supported on the whole line. The next theorem explains the pattern: for non-negatively supported infinitely divisible EDMs the monotonicity is a structural consequence of the L\'{e}vy--Khintchine representation, and the scaled Poisson family is the unique boundary case.
\begin{theorem}
\label{thm:EDM_monotone_IR}
Suppose $A_j = A$ for all $j \in N$ and that the additive EDM generated by $A$ is infinitely divisible with support contained in $[0,\infty)$; then $s_{\max}=\infty$ and
\begin{equation}
\label{eq:logconvex}
  A^{\prime}(\theta)\,A^{\prime\prime\prime}(\theta) \geq A^{\prime\prime}(\theta)^2
  \qquad \textup{for all } \theta \in \operatorname{int}(\Theta),
\end{equation}
or equivalently $\iota=A^{\prime}/A^{\prime\prime}$ is non-increasing on $\operatorname{int}(\Theta)$; in particular $I$ is non-increasing on $\mathfrak{T}_0$. Equality holds throughout \eqref{eq:logconvex} if and only if
\begin{equation}
\label{eq:scaled_Poisson_unit_cumulant}
  A(\theta) = c\,\bigl(e^{\theta x_0}-1\bigr)
  \qquad\text{for some } c,\,x_0>0,
\end{equation}
that is, the EDM belongs to the positively scaled Poisson family with jump size $x_0$ i.e. Theorem~\ref{thm:EDM_coincide}--(2) with $b=x_0$ and $a=cx_0$. In this case $\iota\equiv 1/x_0$ is constant; every other member of the class satisfies \eqref{eq:logconvex} strictly, with $\iota$ strictly decreasing.
\end{theorem}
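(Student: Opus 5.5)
The plan is to use the Lévy--Khintchine representation of the infinitely divisible law generated by $A$, and then apply Cauchy--Schwarz. Infinite divisibility means the index set is $(0,\infty)$. Support in $[0,\infty)$ then means that for each $\lambda$ the law with cumulant $\lambda(A(\theta+t)-A(\theta))$ is infinitely divisible on the half-line. Such a law is a drift $\gamma\ge 0$ plus a pure-jump part with a Lévy measure $\nu$ on $(0,\infty)$ satisfying $\int\min(1,x)\,\nu(dx)<\infty$. Working at a fixed interior $\theta_0$ and absorbing the tilt, I would write, for $\theta\in\operatorname{int}(\Theta)$,
\[
A^{\prime}(\theta)=\gamma+\int_{(0,\infty)}x\,e^{(\theta-\theta_0)x}\,\nu(dx),\qquad A^{(k)}(\theta)=\int_{(0,\infty)}x^{k-1}e^{(\theta-\theta_0)x}\,\nu(dx),\quad k=2,3.
\]
Differentiation under the integral is justified because the integrals converge on the interior of $\Theta$, which is where the moment generating function is finite. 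Unbounded support gives $s_{\max}=\infty$. This follows because a non-degenerate infinitely divisible law on $[0,\infty)$ cannot have bounded support: a degenerate law is excluded since $A''>0$, and convolution powers $\lambda\to\infty$ of a non-degenerate component push the support upward without bound.

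The inequality \eqref{eq:logconvex} comes from Cauchy--Schwarz applied to the finite measure $\mu_\theta(dx)=e^{(\theta-\theta_0)x}\nu(dx)$:
\[
A''(\theta)^2=\Bigl(\int x\,\mu_\theta(dx)\Bigr)^2\le\int\mu_\theta(dx)\int x^2\,\mu_\theta(dx),
\]
where the first factor on the right is finite as long as $\nu$ has finite mass. To reach $A'A'''$ I would instead split $x=x^{1/2}\cdot x^{3/2}$, which gives $\bigl(\int x\,d\mu_\theta\bigr)^2\le\int x^{0}\cdot$\ldots. The correct pairing is $x^2=(x^{1/2})(x^{3/2})$, so
\[
\Bigl(\int x^2\,d\mu_\theta\Bigr)^2\le\int x\,d\mu_\theta\int x^3\,d\mu_\theta .
\]
Here $A''=\int x\,d\mu_\theta$ and $A'''=\int x^2\,d\mu_\theta$, while $A'-\gamma=\int x\,\mu_\theta(dx)$. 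I would recheck the indexing by writing $\kappa_k=\int x^k\,d\mu_\theta$: then $A'=\gamma+\kappa_1$, $A''=\kappa_2$ and $A'''=\kappa_3$. Cauchy--Schwarz with $x^2=x^{1/2}x^{3/2}$ yields $\kappa_2^2\le\kappa_1\kappa_3\le(\gamma+\kappa_1)\kappa_3=A'A'''$, which is \eqref{eq:logconvex}. The equivalence with monotonicity of $\iota$ is the computation $\iota'=(A''^2-A'A''')/A''^2$. The statement for $I$ on $\mathfrak{T}_0$ is immediate, since $\theta_{\min}+t$ stays in $\operatorname{int}(\Theta)$ by Definition~\ref{def:EDM_tail_domain}.

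For the equality case, suppose equality holds at some $\theta$. Then two things must happen: $\gamma\kappa_3=0$, forcing $\gamma=0$ because $\kappa_3>0$ for non-degenerate $\nu$; and equality in Cauchy--Schwarz, forcing $x^{1/2}\propto x^{3/2}$ $\mu_\theta$-a.e. The latter means $\nu$ is concentrated on a single point $x_0>0$, say $\nu=c'\delta_{x_0}$. Integrating gives $A(\theta)=c(e^{\theta x_0}-1)+\text{const}$ with $c>0$; the additive constant is irrelevant to the EDM and can be normalised away. This is the scaled Poisson form of Theorem~\ref{thm:EDM_coincide}--(2) with $b=x_0$ and $a=cx_0$, and a direct computation gives $\iota\equiv1/x_0$. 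Conversely, for every other member, strict Cauchy--Schwarz or $\gamma>0$ gives strict inequality at every $\theta$, so $\iota'<0$ everywhere and $\iota$ is strictly decreasing. One point needs checking here: equality ``throughout'' versus at a single $\theta$ makes no difference, because the support of $\mu_\theta$ does not depend on $\theta$.

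The main obstacle is the first step: justifying rigorously that the canonical Lévy--Khintchine triple is common to all indices $\lambda$ and is exponentially tilted consistently with the EDM's natural parameter. This amounts to $\nu_{\theta}(dx)=e^{(\theta-\theta_0)x}\nu_{\theta_0}(dx)$ and an unchanged drift. I would argue it by differentiating the cumulant $\lambda(A(\theta_0+t)-A(\theta_0))$ in $t$ and invoking uniqueness of the Lévy--Khintchine representation of the tilted law. The case of infinite Lévy mass is harmless for this argument, since only the moments $\kappa_1,\kappa_2,\kappa_3$ enter and these are finite on the interior.
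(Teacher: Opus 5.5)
Your argument is essentially the paper's proof: the L\'evy--Khintchine representation with drift $\gamma\ge0$ and L\'evy measure on $(0,\infty)$, the identities $A'=\gamma+\kappa_1$, $A''=\kappa_2$, $A'''=\kappa_3$, Cauchy--Schwarz with the pairing $x^{1/2},x^{3/2}$ followed by $\gamma\ge0$, and the equality case (point-mass L\'evy measure and zero drift, conditions independent of $\theta$) all match the paper. Two slips need fixing. Your first display should read $A^{(k)}(\theta)=\int x^{k}e^{(\theta-\theta_0)x}\,\nu(dx)$, as your later $\kappa_k$ bookkeeping correctly has it. And $s_{\max}=\infty$ should be derived, as in the paper, directly from $\nu\neq0$ (forced by $A''=\kappa_2>0$): letting $\lambda\to\infty$ says nothing about the support at a fixed index.
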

\begin{remark}
\label{rem:logconvex_equiv}
Inequality \eqref{eq:logconvex} admits two equivalent readings. First, since $(\log A^{\prime})^{\prime\prime}=\bigl(A^{\prime}A^{\prime\prime\prime}-(A^{\prime\prime})^2\bigr)/(A^{\prime})^2$, it says that the mean function $\mu(\theta)=A^{\prime}(\theta)$ is log-convex in the canonical parameter. Second, writing $\mu=A^{\prime}(\theta)$ and $A^{\prime\prime}(\theta)=V(\mu)$ for the unit variance function, the chain rule gives $A^{\prime\prime\prime}=V^{\prime}(\mu)V(\mu)$ and \eqref{eq:logconvex} becomes
\[
    \mu\,V^{\prime}(\mu)\geq V(\mu)\qquad\textup{for all admissible }\mu>0,
\]
that is, the index of dispersion $V(\mu)/\mu$ is non-decreasing. For the Tweedie family, $V(\mu)=\mu^p$, this reduces to $p\ge1$, which is precisely the non-negatively supported Tweedie subclass of Example~\ref{ex:Tweedie_I(t)}. In both readings equality characterizes the scaled Poisson: $\mu(\theta)$ is log-linear and $V(\mu)/\mu$ is constant, equal to the jump size $x_0$.
\end{remark}
\begin{corollary}
\label{cor:ordering_nonneg_ID}
Under the assumptions of Theorem~\ref{thm:EDM_monotone_IR}, fix $u^\ast\in[0,1)$ with $s^\ast>0$. Then, by Theorem~\ref{thm:EDM_allocations_monotonicity_Ai=A}--(ii):
\begin{enumerate}
  \item[(i)] if $\theta_i=\theta_{\max}$, then $\phih_{i,u}\geq\psih_{i,u}$ for all $u\in[u^\ast,1)$;
  \item[(ii)] if $\theta_i=\theta_{\min}$, then $\phih_{i,u}\leq\psih_{i,u}$ for all $u\in[u^\ast,1)$;
  \item[(iii)] if the $\theta_j$ are not all equal, then $\phih_{i,u}=\psih_{i,u}$ for all $u\in[u^\ast,1)$ and all $i\in N$ if and only if the EDM is the scaled Poisson family \eqref{eq:scaled_Poisson_unit_cumulant}, in which case the equality is exact, $\gh_i=g_i=r_i$, and Theorem~\ref{thm:EDM_coincide}--(2) applies at every level.
\end{enumerate}
\end{corollary}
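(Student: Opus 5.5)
The plan is to derive all three parts from Theorem~\ref{thm:EDM_allocations_monotonicity_Ai=A}--(ii), Theorem~\ref{thm:EDM_monotone_IR} and Corollary~\ref{cor:EDM_allocations_monotonicity_general}. The standing hypotheses there are $A_j=A$ for all $j$, a threshold with $s^\ast>0$, and monotonicity of $\iota$ on $\Theta^\ast$.

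\emph{Parts (i) and (ii).} Theorem~\ref{thm:EDM_monotone_IR} says that $\iota=A'/A''$ is non-increasing on all of $\operatorname{int}(\Theta)$. To apply Theorem~\ref{thm:EDM_allocations_monotonicity_Ai=A}--(ii), we need $\Theta^\ast\subseteq\operatorname{int}(\Theta)$. This follows from Definition~\ref{def:EDM_tail_domain}. There $\T\subseteq(\max_j(\inf\Theta-\theta_j),\min_j(\sup\Theta-\theta_j))$, so
\[
\theta_{\min}+\inf\T\ge\inf\Theta \quad\text{and}\quad \theta_{\max}+\sup\T\le\sup\Theta ,
\]
and hence the open interval $\Theta^\ast$ lies in $\operatorname{int}(\Theta)$. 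Theorem~\ref{thm:EDM_allocations_monotonicity_Ai=A}--(ii) then interchanges the conclusions of its part (i). For $\theta_i=\theta_{\max}$, $\gh_i$ is non-decreasing and $\phih_{i,u}\ge\psih_{i,u}$. For $\theta_i=\theta_{\min}$, the reverse inequality holds.

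\emph{Part (iii), sufficiency.} Suppose the EDM is scaled Poisson, \eqref{eq:scaled_Poisson_unit_cumulant}. Then $A'(\theta+t)=cx_0e^{x_0\theta}e^{x_0t}$, so the factor $e^{x_0 t}$ cancels in \eqref{eq:EDM_E[R_i|S]_approx}. This gives
\[
\gh_i\equiv \frac{\lambda_ie^{x_0\theta_i}}{\sum_j\lambda_je^{x_0\theta_j}}=r_i ,
\]
a constant, so $\phih_{i,u}=\psih_{i,u}=r_i$ at every level. Exactness, $g_i=r_i$, comes from the constancy of $g_i$ established for this family in Theorem~\ref{thm:EDM_coincide}--(2). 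Alternatively, it follows directly: conditionally on $S$, the independent Poisson counts are multinomial with these cell probabilities, so $\e[X_i\mid S]=r_iS$.

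\emph{Part (iii), necessity.} This is the main step. Suppose $\phih_{i,u}=\psih_{i,u}$ for all $u\ge u^\ast$ and all $i$. Apply Theorem~\ref{thm:tail_allocations_coincide} with $\gh_i$ in place of $g_i$; the remark after it says the argument needs only $h,qh\in L^1$, which holds since $\gh_i$ is bounded on the tail. This shows $\gh_i$ is a.s.\ constant on $\D$, and by continuity constant on $\D\setminus\{s_{\min},s_{\max}\}$. Since $\th$ is a bijection onto $\T$, the ratio $A'(\theta_i+t)/\sum_j\lambda_jA'(\theta_j+t)$ is constant for $t\in\T$ and every $i$. Hence every ratio $A'(\theta_i+t)/A'(\theta_k+t)$ is constant on $\T$. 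Differentiating in $t$ gives $\iota(\theta_i+t)=\iota(\theta_k+t)$ for all $t\in\T$. Now pick $\theta_k\neq\theta_i$, which is possible since the $\theta_j$ are not all equal. As $\iota$ is non-increasing, it is then constant on the nondegenerate interval $[\theta_{\min}+t,\theta_{\max}+t]$ for each $t\in\T$. The union of these intervals over $t\in\T$ is a nonempty open interval. By real-analyticity of $A$ on $\operatorname{int}(\Theta)$, $\iota$ is constant on all of $\operatorname{int}(\Theta)$, so equality holds throughout \eqref{eq:logconvex}. The equality case of Theorem~\ref{thm:EDM_monotone_IR} then identifies the scaled Poisson family. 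Equivalently, it follows from the statement that every other member has $\iota$ strictly decreasing, which would contradict constancy on an interval.

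The delicate point is this passage from constancy of $\gh_i$ to constancy of $\iota$ on a full interval and then to global equality via analyticity. It must use that $\T$ is a nondegenerate interval, which holds because $s^\ast<s_{\max}=\infty$.
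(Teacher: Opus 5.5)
\textbf{Where you match the paper and where you differ.} Parts (i), (ii) and the sufficiency half of (iii) follow the paper. Your check that $\Theta^\ast\subseteq\operatorname{int}(\Theta)$ via Definition~\ref{def:EDM_tail_domain} spells out what the paper only asserts.

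For necessity in (iii) your route differs slightly. You use constancy of $\gh_i$ and $\gh_k$ for two indices, then differentiate $\log\bigl(A'(\theta_i+t)/A'(\theta_k+t)\bigr)$ to obtain $\iota(\theta_i+t)=\iota(\theta_k+t)$. The paper instead takes the single index with $\theta_i=\theta_{\max}$. It reads equality in \eqref{eq:EDM_gi_monotonicity_A} as saying that a positively weighted average of the $\iota(\theta_j+t)$ equals their minimum, which forces all of them to coincide. The paper's version proves slightly more: equality for the one component with the largest natural parameter already forces the scaled Poisson family. Yours is more elementary. Both conclude from the strictness clause of Theorem~\ref{thm:EDM_monotone_IR}, so your analytic-continuation step is unnecessary.

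\textbf{The step that fails.} The problem is your sentence ``$\gh_i$ is a.s.\ constant on $\D$, and by continuity constant on $\D\setminus\{s_{\min},s_{\max}\}$.'' Theorem~\ref{thm:tail_allocations_coincide}, applied to $\gh_i$, gives constancy only $\PP_S$-almost surely. Continuity extends this only to the closure of the support of $S$ inside $\D$.

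For lattice aggregates covered by the hypotheses this closure is a discrete set. An example is negative binomial components, $A(\theta)=-\log(1-e^{\theta})$, which are infinitely divisible and supported on $\NN_0$. Then you know nothing about $\gh_i$ between atoms, and the differentiation in $t$ is unjustified. The paper's proof makes the same identification tacitly when it invokes Theorem~\ref{thm:tail_allocations_coincide}, but your explicit justification is wrong.

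\textbf{How to repair it.} Replace interval-wise constancy by strict monotonicity. If the family is not scaled Poisson, Theorem~\ref{thm:EDM_monotone_IR} makes $\iota>0$ strictly decreasing. Then for $\theta_i=\theta_{\max}>\theta_k$ the map $t\mapsto\log\bigl(A'(\theta_i+t)/A'(\theta_k+t)\bigr)$ has derivative $1/\iota(\theta_i+t)-1/\iota(\theta_k+t)>0$. Hence $\gh_i/\gh_k$ is strictly increasing on $\D\setminus\{s_{\min},s_{\max}\}$, and it cannot take the same value at two support points of $S$. Since $s_{\max}=\infty$, $\D$ contains infinitely many such points.

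Equivalently, in the paper's single-index form, \eqref{eq:EDM_gi_monotonicity_A} is strict for $\theta_i=\theta_{\max}$, so $\gh_i$ itself is strictly increasing. Either way you get the contradiction without ever needing constancy on a whole interval.
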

\section{Comonotonicity}
\label{sec:comonotone_dependence}

We now move to the opposite end of the dependence spectrum and compare $\Phi_{i,u}$ with $\Psi_{i,u}$ when the components of $\XX$ are perfectly positively dependent. A random vector $\XX$ is \emph{comonotonic} if there exists $U \sim \mathcal{U}(0,1)$ such that $X_i = q_{X_i}(U)$ almost surely for each $i \in N$, where $q_{X_i}$ denotes the quantile function of $X_i$. The fundamental property of comonotonic sums \citep{Dhaene2002} is that the quantile function of $S$, denoted $q_S\equiv q$ in this section, satisfies
\begin{equation}
\label{eq:comonotone_quantile_sum}
  q_S(u) = \sum_{i \in N} q_{X_i}(u)
  \qquad \forall\, u \in (0,1).
\end{equation}
Throughout the section we retain the tail-integrability and consistency conditions \eqref{eq:standing_integrability} and \eqref{eq:standing_consistency} of Section~\ref{sec:setup}, and fix $u^\ast\in[0,1)$ with tail domain $\D$.

In contrast with Section~\ref{sec:Independent_EDMs}, where the conditional risk share had to be reached through a saddle-point expansion, comonotonicity delivers it in closed form. The reason is structural: perfect dependence collapses the conditional distribution of $X_i$ given $S$ to a point mass.

\begin{proposition}
\label{prop:comonotone_gi}
Suppose $\XX$ is comonotonic. Then $X_i=q_{X_i}\bigl(F_S(S)\bigr)$ almost surely, so that $X_i$ is an almost surely non-decreasing function of the aggregate and
\begin{equation}
\label{eq:comonotone_gi_exact}
  g_i(s)
  = \frac{q_{X_i}(F_S(s))}{s}\,\mathds{1}_{\{s \neq 0\}}
        + c_i\,\mathds{1}_{\{s = 0\}}
\end{equation}
for $\PP_S$-almost every $s\in\D$. Equivalently, in terms of $h:=g_i\circ q_S$ on $(u^\ast,1)$,
\begin{equation}
\label{eq:h_formula}
  h(u)
  = \frac{q_{X_i}(u)}{q_S(u)}\,
        \mathds{1}_{\{q_S(u) \neq 0\}}
        + c_i\,\mathds{1}_{\{q_S(u) = 0\}} .
\end{equation}
Moreover, if $\PP(S=0)>0$, the consistency condition \eqref{eq:standing_consistency} holds if and only if $q_{X_i}(F_S(0))=0$ for every $i\in N$.
\end{proposition}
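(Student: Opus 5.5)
The plan is to work directly with the comonotonic representation $X_j=q_{X_j}(U)$, $j\in N$, and $S=q_S(U)$ almost surely, which follows from \eqref{eq:comonotone_quantile_sum}. The crux is to show that $q_{X_i}(U)=q_{X_i}\bigl(F_S(q_S(U))\bigr)$. This identity gives $X_i=q_{X_i}(F_S(S))$, and once $X_i$ is a Borel function of $S$ the formula for $g_i$ is immediate.

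\emph{Key deterministic lemma.} Fix $u\in(0,1)$, put $s=q_S(u)$ and $b=F_S(s)$. We have $b\ge u$ by \eqref{eq:quantile_galois}, since $q_S(u)\le s$. Again by \eqref{eq:quantile_galois}, every $v\in[u,b]\cap(0,1)$ satisfies $q_S(v)\le s$. Monotonicity gives $q_S(v)\ge q_S(u)=s$. Hence $q_S$ is constant and equal to $s$ on $[u,b]\cap(0,1)$. By \eqref{eq:comonotone_quantile_sum}, $\sum_{j\in N}q_{X_j}$ is therefore constant on this interval. A finite sum of non-decreasing functions can be constant on an interval only if each summand is constant there, so every $q_{X_j}$ is constant on $[u,b]\cap(0,1)$. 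In particular $q_{X_i}(F_S(q_S(u)))=q_{X_i}(u)$ whenever $b<1$.

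The case $b=1$ is the step I expect to need care. It occurs exactly when $s=s_{\max}$ is an atom. There I would either adopt the convention that $q_{X_i}(1)$ is the common constant value of $q_{X_i}$ on $[u,1)$, which coincides with the upper endpoint, or simply check that the identity is needed only up to $\PP$-null sets. I expect this to be the main technical obstacle, since atoms of $S$ and flat stretches of $q_S$ are precisely where $F_S(q_S(u))\neq u$. Everything else is bookkeeping.

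\emph{Conditional share.} Applying the lemma at $u=U$ gives $X_i=q_{X_i}(F_S(S))$ almost surely. The function $q_{X_i}\circ F_S$ is Borel and non-decreasing, being a composition of non-decreasing maps, so $X_i$ is an almost surely non-decreasing function of the aggregate. It follows that $R_i$ is $\sigma(S)$-measurable, with $R_i=q_{X_i}(F_S(S))/S$ on $\{S\neq0\}$ and $R_i=c_i$ on $\{S=0\}$. Consequently $\e[R_i\mathds{1}_{\{S\in\D\}}\mid S]=R_i\mathds{1}_{\{S\in\D\}}$, which yields \eqref{eq:comonotone_gi_exact} for $\PP_S$-almost every $s\in\D$.

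For \eqref{eq:h_formula}, I would substitute $s=q_S(u)$ with $u\in(u^\ast,1)$. Then $q(u)\in\D$, as noted after Definition~\ref{def:Tail_event_SDomain}, and the lemma gives $q_{X_i}(F_S(q_S(u)))=q_{X_i}(u)$ for every such $u$. Lemma~\ref{lem:tail_transfer}(b) guarantees that the choice of version of $g_i$ affects $h$ only on a Lebesgue-null set.

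\emph{Consistency.} On $\{S=0\}$ we have $X_i=q_{X_i}(F_S(0))$, a constant. Hence
\[
\e[X_i\mathds{1}_{\{S=0\}}]=q_{X_i}(F_S(0))\,\PP(S=0).
\]
When $\PP(S=0)>0$, this vanishes for every $i$ if and only if $q_{X_i}(F_S(0))=0$ for every $i\in N$. This gives the final equivalence with \eqref{eq:standing_consistency}; the condition is relevant only when $0\in\D$.
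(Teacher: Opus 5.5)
Your proof is correct and follows the paper's argument. You show that $q_S$ is constant on $[u,F_S(q_S(u))]$, where the paper uses the level set $I_s$ and the identity $q_S(F_S(q_S(u)))=q_S(u)$; this forces every $q_{X_j}$ to be constant there, which gives $X_i=q_{X_i}(F_S(S))$, and the formulas for $g_i$, for $h$ and the consistency criterion follow exactly as you describe.

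In the case $F_S(q_S(u))=1$, only your first option works, and it is the paper's convention $q_{X_i}(1):=\lim_{v\to1^-}q_{X_i}(v)$. The null-set option fails because the event $\{F_S(S)=1\}=\{S=s_{\max}\}$ then has positive probability and cannot be discarded.
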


Two comments are in order. First, \eqref{eq:comonotone_gi_exact} is exact and requires no smoothness: the conditional risk share is the ratio of the marginal to the aggregate quantile, read at the level $F_S(s)$. Second, in line with Remark~\ref{rem:regular_version_gi} we adopt \eqref{eq:comonotone_gi_exact} as the version of $g_i$ on $\D$; since $q_S$ is non-decreasing and maps $(u^\ast,1)$ into $\D$, Lemma~\ref{lem:tail_transfer}(b)--(c) transfers constancy and monotonicity between $g_i$ on $\D$ and the quantile ratio $q_{X_i}/q_S$ on $(u^\ast,1)$, both understood almost everywhere. The last statement of Proposition~\ref{prop:comonotone_gi} shows that the consistency condition is not vacuous under comonotonicity: when the aggregate has an atom at the origin, it forces every component to vanish there, which is exactly what makes the quantile ratio in \eqref{eq:h_formula} an indeterminate rather than an infinite form. The exact representation also gives both allocations in closed form. Writing $\mathrm{ES}_u(X_i):=\frac{1}{1-u}\int_u^1q_{X_i}(v)\,dv$ for the Expected Shortfall of the $i$-th component, \eqref{eq:comonotone_quantile_sum} yields at once the classical comonotonic additivity $\sum_{i\in N}\mathrm{ES}_u(X_i)=\mathrm{ES}_u(S)$.

\begin{proposition}
\label{prop:comonotone_allocations}
Suppose $\XX$ is comonotonic and let $u\in[u^\ast,1)$ with $\mathrm{ES}_u(S)\neq0$. Then
\begin{equation}
\label{eq:comonotone_allocations}
  \Phi_{i,u}=\frac{\mathrm{ES}_u(X_i)}{\mathrm{ES}_u(S)}
  =\frac{\int_u^1 q_{X_i}(v)\,dv}{\int_u^1 q_S(v)\,dv}
  \qquad\text{and}\qquad
  \Psi_{i,u}=\frac{1}{1-u}\int_u^1\frac{q_{X_i}(v)}{q_S(v)}\,dv ,
\end{equation}
and consequently, by \eqref{eq:cov_representation},
\begin{equation}
\label{eq:comonotone_cov}
  \Phi_{i,u}-\Psi_{i,u}
  =\frac{\Cov_u\bigl(q_S(V),\,q_{X_i}(V)/q_S(V)\bigr)}{\mathrm{ES}_u(S)},
  \qquad V\sim\mathcal{U}(0,1).
\end{equation}
\end{proposition}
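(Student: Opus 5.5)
The plan is to substitute the closed-form conditional share of Proposition~\ref{prop:comonotone_gi} into Definition~\ref{def:allocations_paradigms}, and then read off the covariance identity from Lemma~\ref{lem:cov_representation}. Along the quantile curve we set $h(v)=g_i(q_S(v))$ for $v\in(u^\ast,1)$. By \eqref{eq:h_formula}, $h(v)=q_{X_i}(v)/q_S(v)$ wherever $q_S(v)\neq0$, and $h(v)=c_i$ otherwise. The integrability facts \eqref{eq:L1_triplet} already guarantee that $q_S$, $h$ and $q_S h$ lie in $L^1(u^\ast,1)$, so every integral below is finite.

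For $\Phi_{i,u}$, the key observation is that $q_S(v)h(v)=q_{X_i}(v)$ for almost every $v\in(u,1)$.
\begin{itemize}
\item Where $q_S(v)\neq0$, this is immediate from \eqref{eq:h_formula}.
\item On the set $\{v:q_S(v)=0\}$, both sides are zero. The left side vanishes because $q_S(v)=0$. For the right side, suppose this set has positive Lebesgue measure. Then $S$ has an atom at $0$, and the set is contained, up to endpoints, in the interval $\{v: q_S(v)=0\}=(F_S(0^-),F_S(0)]$.
\end{itemize}
On that interval $q_{X_i}(v)=0$ needs justification, and this is the step I expect to be the main (if modest) obstacle. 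The argument runs as follows. If $0\in\D$ and the atom is charged, the consistency part of Proposition~\ref{prop:comonotone_gi} gives $q_{X_i}(F_S(0))=0$ for every $j$. Moreover, each $q_{X_j}$ is non-decreasing, and the $q_{X_j}$ sum to $q_S=0$ on the interval by \eqref{eq:comonotone_quantile_sum}. Hence each $q_{X_j}$ is non-positive before the right endpoint and equals zero at it; since they are non-decreasing and sum to zero on the whole interval, each must be identically zero there. If instead $0\notin\D$, then $q_S$ cannot vanish on $(u^\ast,1)$, because $q_S$ maps this interval into $\D$.

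With $q_Sh=q_{X_i}$ a.e. established, we get $\int_u^1 q_S h\,dv=\int_u^1 q_{X_i}\,dv$. Dividing by $\int_u^1 q_S\,dv=(1-u)\mathrm{ES}_u(S)\neq0$ yields the first formula in \eqref{eq:comonotone_allocations}. The second equality there is just the definition of $\mathrm{ES}_u(X_i)$.

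For $\Psi_{i,u}$, we use $\Psi_{i,u}=\frac{1}{1-u}\int_u^1 h$. The same null-set argument shows that $\{q_S=0\}\cap(u,1)$ either has measure zero or is an interval on which $q_{X_i}=0$. In the latter case, the ratio $q_{X_i}/q_S$ is read as $c_i$ there, which matches \eqref{eq:h_formula}; otherwise the ratio is defined almost everywhere. Either way, the second formula in \eqref{eq:comonotone_allocations} holds, with the ratio interpreted as in \eqref{eq:h_formula}.

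Finally, \eqref{eq:comonotone_cov} follows by substituting $h=q_{X_i}/q_S$ into \eqref{eq:cov_representation} and using $\e_u[q_S(V)]=\frac{1}{1-u}\int_u^1 q_S=\mathrm{ES}_u(S)$.
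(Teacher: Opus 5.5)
Your proposal is correct and follows the paper's proof. You establish $q_S\,h=q_{X_i}$ almost everywhere on $(u^\ast,1)$ from \eqref{eq:h_formula}, use the consistency part of Proposition~\ref{prop:comonotone_gi} to handle $\{q_S=0\}$, read both formulas off Definition~\ref{def:allocations_paradigms}, and obtain the covariance identity from \eqref{eq:cov_representation}; your case analysis on $\{q_S=0\}$ simply spells out what the paper compresses into one clause (and the consistency statement there should read $q_{X_j}(F_S(0))=0$ for every $j$).
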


Identity \eqref{eq:comonotone_allocations} displays the thesis of the paper in its purest form. Under comonotonicity the Expected Shortfall is additive, so $\Phi_{i,u}$ is the \emph{ratio of the tail averages} of the two quantile curves, whereas $\Psi_{i,u}$ is the \emph{tail average of their ratio}. The first identity has an interpretation of independent interest: the marginal, Euler-type allocation of the Expected Shortfall collapses to the \emph{stand-alone} proportional allocation, in which every component is charged its own Expected Shortfall normalized by the aggregate one. No such reduction is available for $\Psi_{i,u}$, which remains a genuine average of realized risk shares. The two agree only when the ratio is flat, and \eqref{eq:comonotone_cov} shows that their difference is governed by the covariance between the aggregate quantile and the quantile share along the tail. One structural consequence is immediate from \eqref{eq:comonotone_quantile_sum}: the quantile shares satisfy $\sum_{j\in N}q_{X_j}(u)/q_S(u)=1$, so they cannot all be non-decreasing on $(u^\ast,1)$ unless every one of them is constant. Dominance is therefore intrinsically relative: whatever a component gains under the severity-weighted allocation is lost by the others.

\begin{corollary}
\label{cor:comonotone_coincidence}
Suppose $\XX$ is comonotonic and $\mathrm{ES}_u(S)\neq0$ for all $u\in[u^\ast,1)$. Then $\Phi_{i,u} = \Psi_{i,u}$ for all $u \in [u^\ast, 1)$ if and only if there is a constant $r_i\in\RR$ with
\begin{equation}
\label{eq:comonotone_constancy}
  q_{X_i}(u) = r_i\,q_S(u)
  \qquad \textup{for all } u \in (u^\ast, 1),
\end{equation}
in which case $\Phi_{i,u}=\Psi_{i,u}=r_i$ and, if in addition $0\in\D$ with $\PP(S=0)>0$, necessarily $c_i=r_i$. Condition \eqref{eq:comonotone_constancy} says that on the tail the component is a fixed fraction of the aggregate; it holds in particular when $X_j \stackrel{d}{=} \lambda_j X$ for a common random variable $X$ and constants $\lambda_j > 0$, and then $r_i = \lambda_i / \sum_{j \in N} \lambda_j$.
\end{corollary}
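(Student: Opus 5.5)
The plan is to reduce Corollary~\ref{cor:comonotone_coincidence} to Theorem~\ref{thm:tail_allocations_coincide}, using the explicit conditional share of Proposition~\ref{prop:comonotone_gi}. First, $\mathrm{ES}_u(S)\neq0$ is the same as $Q(u)\neq0$, so the hypotheses of Theorem~\ref{thm:tail_allocations_coincide} hold. That theorem says $\Phi_{i,u}=\Psi_{i,u}$ for all $u\in[u^\ast,1)$ if and only if $g_i(S)=r_i$ almost surely on $\D$. By Lemma~\ref{lem:tail_transfer}(b), applied to $\phi=g_i-r_i$, this holds if and only if $h=g_i\circ q_S=r_i$ almost everywhere on $(u^\ast,1)$. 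By \eqref{eq:h_formula}, $h(u)=q_{X_i}(u)/q_S(u)$ wherever $q_S(u)\neq0$, and $h(u)=c_i$ where $q_S(u)=0$. Once this is settled, $\Phi_{i,u}=\Psi_{i,u}=r_i$ is already part of Theorem~\ref{thm:tail_allocations_coincide}.

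Next I would show that "$h=r_i$ a.e." is equivalent to \eqref{eq:comonotone_constancy} holding at every $u$. For the forward direction, on $\{q_S\neq0\}$ we get $q_{X_i}=r_iq_S$ a.e. On the set $\{q_S=0\}\cap(u^\ast,1)$, which is an interval because $q_S$ is monotone, the consistency statement in Proposition~\ref{prop:comonotone_gi} gives $q_{X_i}(F_S(0))=0$. Since $q_{X_i}$ and $q_S$ are non-decreasing and $\sum_j q_{X_j}=q_S=0$ there, each $q_{X_j}$ is constant on that interval, so it vanishes there and $q_{X_i}=0=r_iq_S$. If this interval has positive length, $h=c_i$ on it, and $h=r_i$ a.e. then forces $c_i=r_i$. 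This is the extra clause of the statement.

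The step I expect to need the most care is upgrading "almost everywhere" to "for all $u\in(u^\ast,1)$". Both $q_{X_i}$ and $q_S$ are left-continuous on $(0,1)$. Two left-continuous functions that agree a.e. on an open interval agree everywhere on it: each point is a left limit of points of agreement, which are dense. So $q_{X_i}=r_iq_S$ on all of $(u^\ast,1)$. The converse direction is immediate: if \eqref{eq:comonotone_constancy} holds, then $h=r_i$ wherever $q_S\neq0$. When $0\in\D$ with $\PP(S=0)>0$, the zero set of $q_S$ has positive length, and the requirement $c_i=r_i$ is exactly what is needed for $h=r_i$ a.e. Otherwise that set is Lebesgue-null and $c_i$ plays no role.

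Finally, for the example, suppose $X_j\eqd\lambda_jX$ with $\lambda_j>0$. Positive scaling commutes with quantiles, so $q_{X_j}=\lambda_jq_X$. Then \eqref{eq:comonotone_quantile_sum} gives $q_S=\bigl(\sum_j\lambda_j\bigr)q_X$, hence $q_{X_i}=r_iq_S$ with $r_i=\lambda_i/\sum_j\lambda_j$.
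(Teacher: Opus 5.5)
Your proposal is correct and follows essentially the same route as the paper. Theorem~\ref{thm:tail_allocations_coincide} together with Lemma~\ref{lem:tail_transfer}(b) reduces coincidence to $h=r_i$ almost everywhere, which \eqref{eq:h_formula} and the consistency clause of Proposition~\ref{prop:comonotone_gi} turn into $q_{X_i}=r_iq_S$ almost everywhere; left-continuity then upgrades this to every $u\in(u^\ast,1)$, and the converse relies on the choice $c_i=r_i$ exactly as in the paper. Your explicit argument that every $q_{X_j}$ is constant, hence zero, on $\{q_S=0\}\cap(u^\ast,1)$ just spells out what the paper says holds ``trivially by Proposition~\ref{prop:comonotone_gi}''.
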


Two consequences deserve mention. First, $r_i\ge0$ unless $q_S$ is constant on $(u^\ast,1)$: a negative $r_i$ would make $q_{X_i}=r_iq_S$ both non-decreasing and non-increasing there. Second, taking $u^\ast=0$ in \eqref{eq:comonotone_constancy} gives $q_{X_i}=r_iq_S$ on all of $(0,1)$ and hence $X_i=r_iS$ almost surely; the two allocations therefore agree globally, for every component, precisely when $\XX$ is a comonotonic scale family, the sharpest analogue of the homogeneous case of Theorem~\ref{thm:EDM_coincide}--(1). On a tail domain the requirement is strictly weaker: only the tail quantiles need be proportional, and the marginals are unrestricted below $u^\ast$.

\begin{corollary}
\label{cor:comonotone_monotonicity}
Suppose $\XX$ is comonotonic, $\int_{u^\ast}^1 q_S(v)\,dv > 0$, and $g_i$ is $\PP_S$-almost surely quasi-concave (quasi-convex) on $\D$. Then $\Phi_{i,u} \geq (\leq) \Psi_{i,u}$ for all $u \in [u^\ast, 1)$ if and only if
\begin{equation}
\label{eq:comonotone_monotonicity}
  u \mapsto \frac{q_{X_i}(u)}{q_S(u)}
  \quad \textup{is non-decreasing (non-increasing) on }
  (u^\ast, 1).
\end{equation}
At every $u$ at which $q_{X_i}$ and $q_S$ are differentiable with $q_{X_i}^{\prime}(u) > 0$ and $q_S^{\prime}(u) > 0$, the monotonicity in \eqref{eq:comonotone_monotonicity} is equivalent to
\begin{equation}
\label{eq:comonotone_logquantile}
  \frac{q_S(u)}{q_S^{\prime}(u)}
  \geq (\leq)
  \frac{q_{X_i}(u)}{q_{X_i}^{\prime}(u)} ,
\end{equation}
that is, at every such $u$ with $q_{X_i}(u)>0$, to $\bigl(\log q_{X_i}\bigr)^{\prime}(u)\geq(\leq)\bigl(\log q_S\bigr)^{\prime}(u)$: the component's quantile grows proportionally faster (slower) than the aggregate's throughout the tail. If in addition $0\in\D$ and $\PP(S=0)>0$, the unique choice of $c_i$ compatible with the monotonicity at $s=0$ is
\[
  c_i
  = \lim_{s \to 0}
        \frac{q_{X_i}(F_S(s))}{s}
  = \lim_{u \to F_S(0)}
        \frac{q_{X_i}(u)}{q_S(u)},
\]
whenever this limit exists, and $\sum_{i \in N} c_i = 1$ holds automatically by \eqref{eq:comonotone_quantile_sum}.
\end{corollary}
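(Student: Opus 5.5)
The plan is to reduce the corollary to Theorem~\ref{thm:allocations_dominance_tail}, using the closed-form conditional share of Proposition~\ref{prop:comonotone_gi}, and then treat the two refinements (the derivative criterion and the choice of $c_i$) separately.

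\emph{Main equivalence.} The hypotheses of Theorem~\ref{thm:allocations_dominance_tail} hold: $\int_{u^\ast}^1 q_S>0$, and $g_i$ is $\PP_S$-a.s.\ quasi-concave (quasi-convex) on $\D$. Hence $\Phi_{i,u}\ge(\le)\Psi_{i,u}$ for all $u\in[u^\ast,1)$ if and only if $g_i$ is $\PP_S$-a.s.\ non-decreasing (non-increasing) on $\D$.
\begin{enumerate}
\item[(a)] By Lemma~\ref{lem:tail_transfer}(c), this is equivalent to $h=g_i\circ q_S$ being non-decreasing (non-increasing) on $(u^\ast,1)$ outside a Lebesgue-null set $L$.
\item[(b)] By \eqref{eq:h_formula}, $h=q_{X_i}/q_S$ wherever $q_S\neq0$.
\item[(c)] It remains to pass from ``outside a null set'' to ``everywhere'' on $(u^\ast,1)$. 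Since $q_{X_i}$ and $q_S$ are left-continuous, the ratio is left-continuous at every $u$ with $q_S(u)\neq0$. Each such $u$ is a limit from the left of points of $(u^\ast,1)\setminus L$, because null sets contain no intervals. Monotonicity of a left-continuous function on a co-null set therefore extends to all points where $q_S\neq0$.
\item[(d)] The set $\{q_S=0\}$ is an interval, since $q_S$ is monotone. If it has positive length, the consistency condition (Proposition~\ref{prop:comonotone_gi}) forces $q_{X_i}=0$ there, so the ratio is $0/0$. In that case I would read \eqref{eq:comonotone_monotonicity} with the convention that $h=c_i$ on this interval, as \eqref{eq:h_formula} prescribes.
\end{enumerate}
This bookkeeping around $q_S=0$ is the step I expect to be most delicate. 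I would state it through the regular version of Remark~\ref{rem:regular_version_gi} rather than fight the indeterminate form.

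\emph{Derivative criterion.} At a point $u$ where both quantile functions are differentiable with positive derivatives and $q_S(u)\neq0$, the quotient rule gives
\[
\Bigl(\frac{q_{X_i}}{q_S}\Bigr)'(u)=\frac{q_{X_i}'q_S-q_{X_i}q_S'}{q_S^2}.
\]
Its sign is that of $q_{X_i}'q_S-q_{X_i}q_S'$. Dividing by $q_{X_i}'q_S'>0$ turns this into the sign of $q_S/q_S'-q_{X_i}/q_{X_i}'$, which is \eqref{eq:comonotone_logquantile}. When additionally $q_{X_i}(u)>0$ (and hence $q_S(u)>0$ on the relevant range), dividing instead by $q_{X_i}q_S$ yields the log-derivative comparison $(\log q_{X_i})'\ge(\le)(\log q_S)'$.

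I read this pointwise equivalence as concerning the local sign of the derivative, as stated. I will not claim that it recovers global monotonicity without absolute continuity, beyond noting that this holds when both quantile functions are absolutely continuous on compact subintervals.

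\emph{Choice of $c_i$.} Here Remark~\ref{rem:ci_choice_monotonicity} applies directly.
\begin{enumerate}
\item[(a)] Monotonicity across $s=0$ forces $g_i(0^-)\le c_i\le g_i(0^+)$, so $c_i$ equals the common limit whenever $\lim_{s\to0}g_i(s)$ exists.
\item[(b)] Substituting \eqref{eq:comonotone_gi_exact} gives the first expression, $c_i=\lim_{s\to0}q_{X_i}(F_S(s))/s$.
\item[(c)] Reparametrizing $s=q_S(u)$, which is monotone, gives the second expression, $c_i=\lim_{u\to F_S(0)}q_{X_i}(u)/q_S(u)$.
\item[(d)] The identity $\sum_i c_i=1$ follows by summing the ratios $q_{X_j}/q_S$. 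They sum to one by \eqref{eq:comonotone_quantile_sum}, and this identity passes to the limit.
\end{enumerate}
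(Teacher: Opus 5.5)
Your proposal is correct and follows the paper's own proof: you apply Theorem~\ref{thm:allocations_dominance_tail}, then Lemma~\ref{lem:tail_transfer}(c) with \eqref{eq:h_formula}, the quotient rule for \eqref{eq:comonotone_logquantile}, and Remark~\ref{rem:ci_choice_monotonicity} for $c_i$, and your left-continuity upgrade from almost-everywhere to everywhere monotonicity (the device the paper uses in Corollary~\ref{cor:comonotone_coincidence}) and your handling of $\{q_S=0\}$ make explicit what the paper leaves implicit. One small slip: $q_{X_i}(u)>0$ does not imply $q_S(u)>0$, since other components may be negative, so the logarithmic form needs $q_S(u)>0$ as a separate assumption; the paper's own phrase ``positive product $q_S(u)q_{X_i}(u)$'' also tacitly presumes it.
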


Condition \eqref{eq:comonotone_logquantile} is a comparison of tail heaviness at matched probability levels: a component whose quantile function accelerates faster than the aggregate's captures an increasing share of the pool as the threshold rises, and is therefore charged more by the severity-weighted allocation. The next two examples make this explicit, first for a location-scale family, where the quantile ratio is a M\"{o}bius transform and the discrepancy can be computed in closed form, then for heterogeneous Pareto tails, where the full classification requires the analysis of a log-partition function.

\begin{example}
\label{ex:comonotone_location_scale}
Let each $X_j$ belong to the same location-scale family, that is $X_j \stackrel{d}{=} \mu_j + \sigma_j X$ for constants $\mu_j \in \RR$, $\sigma_j > 0$ and a common integrable random variable $X$. Their comonotonic sum is $S = \mu_S + \sigma_S X$ with $\mu_S = \sum_{j \in N} \mu_j$ and $\sigma_S = \sum_{j \in N} \sigma_j$, the comonotonic aggregate scale, which is distinct from the independent-sum scale $(\sum_{j\in N} \sigma_j^2)^{1/2}$. Since $q_{X_j}(u) = \mu_j + \sigma_j\,q_X(u)$, \eqref{eq:comonotone_quantile_sum} gives $q_S(u) = \mu_S + \sigma_S\,q_X(u)$. Fix $u^\ast$ with $s^\ast = q_S(u^\ast) > 0$. By Proposition~\ref{prop:comonotone_gi}, for $s\in\D$,
\begin{equation*}
\label{eq:locscale_gi}
  g_i(s)
  =
  \frac{\mu_i + \sigma_i\,q_X(F_S(s))}
       {\mu_S + \sigma_S\,q_X(F_S(s))},
  \qquad\text{that is}\qquad
  h(u)=\frac{\mu_i + \sigma_i\,x}{\mu_S + \sigma_S\,x}\Big|_{x=q_X(u)},
\end{equation*}
a M\"{o}bius transformation in $x$ whose derivative,
\[
  \frac{d}{dx}\!\left(\frac{\mu_i + \sigma_i\,x}{\mu_S + \sigma_S\,x}\right)
  =
  \frac{\sigma_i\,\mu_S - \mu_i\,\sigma_S}{(\mu_S + \sigma_S\,x)^2},
\]
has the denominator $q_S(u)^2>0$ on $\D$. Hence $h$ is monotone on $(u^\ast,1)$, with the direction fixed once and for all by the sign of $\sigma_i\mu_S - \mu_i\sigma_S$, and $g_i$ is simultaneously quasi-concave and quasi-convex, so that Corollary~\ref{cor:comonotone_monotonicity} applies in both directions.

The discrepancy is in fact available in closed form. Decomposing $q_{X_i}=(\sigma_i/\sigma_S)q_S+(\mu_i-\sigma_i\mu_S/\sigma_S)$ and inserting this in \eqref{eq:comonotone_allocations},
\[
  \Phi_{i,u}=\frac{\sigma_i}{\sigma_S}+\frac{\sigma_S\mu_i-\sigma_i\mu_S}{\sigma_S}\cdot\frac{1}{\mathrm{ES}_u(S)},
  \qquad
  \Psi_{i,u}=\frac{\sigma_i}{\sigma_S}+\frac{\sigma_S\mu_i-\sigma_i\mu_S}{\sigma_S}\cdot\frac{1}{\mathrm{HS}_u(S)},
\]
where $\mathrm{HS}_u(S):=\bigl(\frac{1}{1-u}\int_u^1 q_S(v)^{-1}dv\bigr)^{-1}$ denotes the tail harmonic mean of the aggregate. Subtracting,
\begin{equation}
\label{eq:locscale_discrepancy}
  \Phi_{i,u}-\Psi_{i,u}
  =\frac{\sigma_i\mu_S-\sigma_S\mu_i}{\sigma_S}
   \left(\frac{1}{\mathrm{HS}_u(S)}-\frac{1}{\mathrm{ES}_u(S)}\right),
\end{equation}
and the bracket is non-negative by the harmonic--arithmetic mean inequality, strictly positive unless $q_S$ is almost everywhere constant on $(u,1)$. Since the harmonic, geometric and arithmetic tail means are ordered, $\mathrm{HS}_u(S)\le\mathrm{GTE}_u(S)\le\mathrm{ES}_u(S)$, the discrepancy is driven by the very spread between tail means that separates the two risk measures in the first place. Three cases follow at once, for every $u\in[u^\ast,1)$:
\begin{enumerate}
  \item[(i)] \emph{Coincidence.} $\Phi_{i,u} = \Psi_{i,u}$ if and only if $\mu_i/\sigma_i = \mu_S/\sigma_S$, in which case $g_i\equiv r_i=\sigma_i/\sigma_S$, the scale share of $X_i$.
  \item[(ii)] \emph{Dominance.} $\Phi_{i,u} \geq \Psi_{i,u}$ if and only if $\mu_i/\sigma_i \le \mu_S/\sigma_S$, strictly whenever the inequality between the ratios is strict and $q_S$ is non-degenerate on $(u,1)$.
  \item[(iii)] \emph{Reverse dominance.} $\Phi_{i,u} \leq \Psi_{i,u}$ if and only if $\mu_i/\sigma_i \ge \mu_S/\sigma_S$.
\end{enumerate}
The criterion is a comparison of location-to-scale ratios, and it is worth contrasting it with the independent Gaussian case of Example~\ref{ex:Normal_tail}, where the corresponding criterion compared $\mu_i/\sigma_i^2$ with $\mu_S/\sigma_S^2$. Both compare the component's mean with what its dispersion share alone would predict; perfect dependence aggregates dispersion linearly, $\sigma_S=\sum_{j\in N}\sigma_j$, whereas independence aggregates it in quadratic mean. The family covers, among others, the Normal, Laplace, logistic and Student-$t$ with more than one degree of freedom.
\end{example}

\begin{example}
\label{ex:comonotone_Pareto}
Let $X_j \sim \mathrm{Pareto}(\alpha_j, \sigma_j)$ with quantile function $q_{X_j}(u) = \sigma_j(1-u)^{-1/\alpha_j}$, $\sigma_j > 0$ and $\alpha_j > 1$, joined comonotonically for $j \in N$. By \eqref{eq:comonotone_quantile_sum},
\[
  q_S(u) = \sum_{j \in N} \sigma_j(1-u)^{-1/\alpha_j}.
\]
Since $q_S(u) \geq \sum_{j\in N}\sigma_j > 0$ for all $u \in (0,1)$, the requirement $s^\ast > 0$ holds for every $u^\ast\in[0,1)$, integrability follows from $\alpha_j > 1$, and $c_i$ plays no role. By Proposition~\ref{prop:comonotone_gi},
\[
  g_i(s)
  =
  \frac{\sigma_i\,(1 - F_S(s))^{-1/\alpha_i}}
       {\sum_{j \in N}
        \sigma_j\,(1 - F_S(s))^{-1/\alpha_j}},
  \qquad s \in \D.
\]

\medskip\noindent\textit{Monotonicity analysis.}
Substituting $y = -\log(1-u)>0$, an increasing reparametrization of $(0,1)$ onto $(0,\infty)$, the quantile share becomes
\[
  h = \frac{\sigma_i\,e^{y/\alpha_i}}
             {\sum_{j \in N}\sigma_j\,e^{y/\alpha_j}} ,
\]
so that $h$ is non-decreasing in $u$ if and only if $\frac{d}{dy}\log h=1/\alpha_i-w(y)\ge0$, where
\begin{equation*}
\label{eq:Pareto_weighted_mean}
  w(y)
  :=
  \frac{d}{dy}\log\Bigl(\sum_{j\in N}\sigma_{j}e^{y/\alpha_j}\Bigr)
  =\sum_{j \in N} p_j(y)\,\frac{1}{\alpha_j},
  \qquad
  p_j(y) = \frac{\sigma_j\,e^{y/\alpha_j}}
                 {\sum_{k \in N}\sigma_k\,e^{y/\alpha_k}} .
\end{equation*}
Thus $w(y)$ is the $p(y)$-weighted average of the reciprocal tail indices, and it is the derivative of the log-partition function of the exponential family $\{p(y)\}_{y>0}$; its own derivative is therefore the corresponding weighted variance,
\begin{equation}
\label{eq:w_derivative}
  w^{\prime}(y)
  = \sum_{j\in N}p_j(y)\Bigl(\frac{1}{\alpha_j}-w(y)\Bigr)^{2}
  \geq 0,
\end{equation}
with equality if and only if all $\alpha_j$ are equal. Hence $w$ is non-decreasing on $[0,\infty)$, strictly so when the $\alpha_j$ are not all equal, and
\[
  w(0)
  =
  \frac{\sum_{j \in N}\sigma_j/\alpha_j}
       {\sum_{j \in N}\sigma_j}
  \leq w(y) \leq w(\infty)
  = \max_{j \in N}\frac{1}{\alpha_j}
  = \frac{1}{\alpha_{\min}},
  \qquad \alpha_{\min}:=\min_{j\in N}\alpha_j .
\]
Writing $y^\ast = -\log(1-u^\ast)\ge0$, the sign of $\frac{d}{dy}\log h$ on $(y^\ast,\infty)$ is therefore governed by the position of $1/\alpha_i$ relative to the non-decreasing function $w$, and exactly one of the following four mutually exclusive cases occurs.

\begin{enumerate}
  \item \emph{Coincidence.} $h$ is constant on $(u^\ast,1)$, and by Corollary~\ref{cor:comonotone_coincidence} $\Phi_{i,u}=\Psi_{i,u}$ for all $u\in[u^\ast,1)$, if and only if $w\equiv1/\alpha_i$ there, which by \eqref{eq:w_derivative} happens if and only if all tail indices are equal, $\alpha_j=\alpha$ for all $j\in N$. The common value is then the scale share $r_i = \sigma_i / \sum_{j \in N}\sigma_j$.

  \item \emph{Dominance} ($\Phi_{i,u} \geq \Psi_{i,u}$ for all $u \in [u^\ast, 1)$). Since $w\le w(\infty)=1/\alpha_{\min}$ with $w(y)\to1/\alpha_{\min}$, the inequality $1/\alpha_i \geq w(y)$ holds for all $y \geq y^\ast$ if and only if $\alpha_i = \alpha_{\min}$, that is, if and only if $X_i$ is among the heaviest-tailed components. Notably, this criterion does not depend on the threshold $u^\ast$. In this case the quantile share increases to
  \[
  \lim_{u\to1^-}h(u)=\frac{\sigma_i}{\sum_{j:\,\alpha_j=\alpha_{\min}}\sigma_j},
  \]
  the limit that Section~\ref{sec:allocation_convergence} shows both allocations inherit.

  \item \emph{Reverse dominance} ($\Phi_{i,u} \leq \Psi_{i,u}$ for all $u \in [u^\ast, 1)$). Since $w(y^\ast)$ is the minimum of $w$ on $(y^\ast,\infty)$, the inequality $1/\alpha_i \leq w(y)$ holds for all $y\ge y^\ast$ if and only if
  \[
    \frac{1}{\alpha_i}
    \leq w(y^\ast)
    =
    \frac{\sum_{j \in N}
          \tfrac{\sigma_j}{\alpha_j}\,
          (1-u^\ast)^{-1/\alpha_j}}
         {\sum_{j \in N}
          \sigma_j\,(1-u^\ast)^{-1/\alpha_j}} ,
  \]
  which for $u^\ast = 0$ reduces to $1/\alpha_i \leq w(0)=\sum_{j\in N}(\sigma_j/\alpha_j)/\sum_{j\in N}\sigma_j$. Unlike case 2, this criterion tightens as the threshold rises, since $w(y^\ast)$ increases with $u^\ast$.

  \item \emph{Quasi-concavity with an interior mode.} If $w(y^\ast) < 1/\alpha_i < 1/\alpha_{\min}$, then $w$ is continuous and strictly increasing, so there is a unique $y_0 > y^\ast$ with $w(y_0) = 1/\alpha_i$, equivalently
  \[
    \sum_{j \in N}\sigma_j\!\left(
      \frac{1}{\alpha_j} - \frac{1}{\alpha_i}
    \right)\!e^{y_0/\alpha_j} = 0 .
  \]
  With $u_0 = 1 - e^{-y_0} \in (u^\ast, 1)$ and $s_0 = q_S(u_0)$, the share $h$ is strictly increasing on $(u^\ast, u_0)$ and strictly decreasing on $(u_0, 1)$, so $g_i$ is quasi-concave on $\D$ with interior mode $s_0$. Consequently $\Phi_{i,u} \leq \Psi_{i,u}$ for all $u \in [u_0, 1)$, by the sufficiency part of Theorem~\ref{thm:allocations_dominance_tail} applied at the threshold $u_0$; whereas for every threshold $u^{\prime}<u_0$ the necessity part rules out $\Phi_{i,u}\ge\Psi_{i,u}$ holding simultaneously for all $u\in[u^{\prime},1)$. Below $u_0$ the shape of $g_i$ alone no longer decides: the sign of $\Phi_{i,u}-\Psi_{i,u}$ is that of the covariance in \eqref{eq:comonotone_cov}, which weighs the increasing stretch of $h$ against the decreasing one and may perfectly well remain negative at every such level.
\end{enumerate}
In cases 3 and 4 the component is not among the heaviest-tailed ones, so $p_i(y)\to0$ and $h(u)\to0$ as $u\to1^-$: the tail allocation is asymptotically absorbed by the components with the smallest tail index, a limit that Section~\ref{sec:allocation_convergence} shows both $\Phi_{i,u}$ and $\Psi_{i,u}$ inherit. The classification is exhaustive: case 1 covers homogeneous tail indices, and when the $\alpha_j$ differ, exactly one of cases 2, 3 and 4 applies according to whether $1/\alpha_i$ lies above, below, or strictly between the two endpoints $w(y^\ast)$ and $w(\infty)=1/\alpha_{\min}$. The engine behind the trichotomy is the weighted variance formula \eqref{eq:w_derivative}: the log-partition structure makes $w$ monotone, hence it crosses any fixed level $1/\alpha_i$ at most once, which is precisely the unimodality that Theorem~\ref{thm:allocations_dominance_tail} requires.
\end{example}
\section{Limiting Behavior}\label{sec:allocation_convergence}

Sections~\ref{sec:Independent_EDMs} and~\ref{sec:comonotone_dependence} compared the two allocations at a fixed threshold. We now let the threshold recede to the right endpoint of the support and ask whether $\Phi_{i,u}$ and $\Psi_{i,u}$ ultimately agree on the component's asymptotic share of the aggregate risk. The question is not a technicality: capital rules are calibrated at levels such as $u=0.99$ or beyond, where the two paradigms are often used interchangeably.

The answer rests on a single structural asymmetry, already visible in the weights of Section~\ref{sec:setup}. Because $\Phi_{i,u}$ weights tail scenarios by severity while $\Psi_{i,u}$ treats them alike, $\Phi_{i,u}$ turns out to be an \emph{average of the family} $\{\Psi_{i,v}:v\in[u,1)\}$ against a probability measure, whereas the inverse relation expresses $\Psi_{i,u}$ through $\{\Phi_{i,v}:v\in[u,1)\}$ against a \emph{signed} measure. Averaging smooths, so the first relation transmits convergence unconditionally; the second may amplify oscillations, and convergence travels back only under an additional condition. This is the classical Abelian--Tauberian pattern \citep{Bingham1987}, and the following lemma makes both relations exact. Throughout, $\mathrm{ES}_u(S)=\frac{1}{1-u}\int_u^1q(v)\,dv$ and $\VaR_u[S]=q(u)$ are as in Section~\ref{sec:setup}, and we write
\[
  W_u:=\frac{\mathrm{ES}_u(S)}{\VaR_u[S]}
      =\frac{\int_u^1 q(v)\,dv}{(1-u)\,q(u)}\ \ge\ 1
\]
for the tail-to-quantile multiplier of the aggregate, defined whenever $q(u)>0$.

\begin{lemma}
\label{lem:two_representations}
Fix $u^\ast\in[0,1)$ with $s_{\max}>0$, let the standing conditions \eqref{eq:standing_integrability}--\eqref{eq:standing_consistency} hold on $\D$, and set $h:=g_i\circ q$. Let $u_1\in[u^\ast,1)$ be such that $q(u)>0$ for all $u\in(u_1,1)$. Then, for every $u\in(u_1,1)$:
\begin{enumerate}
\item[\textup{(a)}] \emph{(Abelian representation.)} With $H(v):=\int_v^1h(t)\,dt=(1-v)\Psi_{i,v}$,
\begin{equation}
\label{eq:IBP_identity}
  \int_u^1 q(v)h(v)\,dv
  = q(u)(1-u)\,\Psi_{i,u}
        + \int_u^1(1-v)\Psi_{i,v}\,dq(v),
\end{equation}
and consequently
\begin{equation}
\label{eq:Phi_weighted_avg}
  \Phi_{i,u}
  =
  \frac{q(u)(1-u)\,\Psi_{i,u}
        +\int_u^1(1-v)\Psi_{i,v}\,dq(v)}
       {\int_u^1 q(v)\,dv}
  =\int_{[u,1)}\Psi_{i,v}\,\mu_u(\mathrm{d}v),
\end{equation}
where $\mu_u$ is the probability measure on $[u,1)$ placing mass $q(u)(1-u)/\int_u^1q$ at the point $u$ and mass $(1-v)\,dq(v)/\int_u^1q$ on $(u,1)$.
\item[\textup{(b)}] \emph{(Tauberian representation.)} With $\nu_u(\mathrm{d}v):=\frac{1}{1-u}\bigl(\int_v^1q(t)\,dt\bigr)\,\mathrm{d}\bigl(-1/q(v)\bigr)$, a non-negative measure on $(u,1)$ of total mass $W_u-1$,
\begin{equation}
\label{eq:Psi_nonconvex_avg}
    \Psi_{i,u} = W_u\,\Phi_{i,u} - \int_u^1 \Phi_{i,v}\,\nu_u(\mathrm{d}v).
\end{equation}
\end{enumerate}
\end{lemma}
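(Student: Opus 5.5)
Both identities come from a Lebesgue--Stieltjes integration by parts on $(u,1)$, using that $q$ is non-decreasing, left-continuous and positive there. Together with \eqref{eq:L1_triplet}, this makes $q$ a function of locally bounded variation, so $dq$ and $d(-1/q)$ are non-negative Borel measures on $(u,1)$.

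For (a), set $H(v)=\int_v^1h(t)\,dt$. This function is absolutely continuous with $H(1^-)=0$ and $H(v)=(1-v)\Psi_{i,v}$ by Definition~\ref{def:allocations_paradigms}. Fix $u<b<1$. The product rule for a continuous function of bounded variation times a BV function gives
\[
\int_u^b q(v)h(v)\,dv=-\int_u^b q\,dH=q(u)H(u)-q(b)H(b)+\int_{(u,b)}H(v)\,dq(v).
\]
No atom correction arises, because $H$ is continuous. The left endpoint contributes $q(u)H(u)$ only, since $q$ is left-continuous at $u$. Letting $b\to1^-$, the left side tends to $\int_u^1qh$ because $qh\in L^1$.

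The step I expect to be the main obstacle is showing that the boundary term $q(b)H(b)$ vanishes. The problem is that $h$ may change sign, so the remaining Stieltjes integral converges only conditionally unless this is handled carefully. My plan is to use the bound
\[
|q(b)H(b)|\le \int_b^1 q(t)|h(t)|\,dt ,
\]
which holds because $q$ is non-decreasing and positive on $(b,1)$. The right side tends to $0$ since $qh\in L^1$. The integral $\int_{(u,b)}H\,dq$ then converges as $b\to1^-$, and for absolute convergence I will use
\[
\int_u^1 (1-v)|\Psi_{i,v}|\,dq(v)\le\int_u^1\Bigl(\int_v^1|h|\Bigr)dq(v),
\]
and evaluate the right side by Tonelli as $\int_u^1|h(t)|\,(q(t)-q(u))\,dt<\infty$. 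This gives \eqref{eq:IBP_identity}.

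Dividing by $\int_u^1q>0$ then gives \eqref{eq:Phi_weighted_avg}. To see that $\mu_u$ is a probability measure, I apply the same integration by parts with $h\equiv1$:
\[
\int_u^1q=q(u)(1-u)+\int_u^1(1-v)\,dq(v),
\]
where the boundary term vanishes because $(1-b)q(b)\le\int_b^1q\to0$. Both pieces of $\mu_u$ are non-negative because $q(u)>0$ and $dq\ge0$.

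For (b), set $G(v):=\int_v^1q(t)h(t)\,dt=\Phi_{i,v}\int_v^1q$. I integrate by parts again, writing $h=qh\cdot(1/q)$, so that $\int_u^1h=-\int_u^1(1/q)\,dG$. The function $G$ is continuous and $1/q$ is left-continuous of bounded variation, so
\[
\int_u^1 h=\frac{G(u)}{q(u)}-\lim_{b\to1^-}\frac{G(b)}{q(b)}-\int_u^1G(v)\,d\bigl(-1/q(v)\bigr).
\]
The limit vanishes because $|G(b)|/q(b)\le\int_b^1|h|\to0$, again since $q$ is non-decreasing. Dividing by $1-u$ turns the first term into $W_u\Phi_{i,u}$, and the last term becomes $\int\Phi_{i,v}\,\nu_u(dv)$, which is \eqref{eq:Psi_nonconvex_avg}. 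Absolute integrability of $G$ against $d(-1/q)$ follows by the same Tonelli argument. Finally, the total mass of $\nu_u$ is obtained by taking $h\equiv1$, which gives $\Phi=\Psi=1$ and hence $1=W_u-\nu_u((u,1))$. Non-negativity of $\nu_u$ holds because $-1/q$ is non-decreasing where $q>0$.
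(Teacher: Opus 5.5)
Your route is the paper's own. You integrate by parts against $dq$ in (a) and against $d(-1/q)$ in (b), you kill the boundary terms the same way, and you use $h\equiv1$ for the normalizations. Substituting $h\equiv1$ into (b) to get the mass of $\nu_u$ replaces the paper's extra integration by parts, and your Tonelli check of absolute integrability is a welcome addition. Two steps, however, fail as written.

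The first is the left endpoint in (a). For the left-continuous non-decreasing $q$ one has $dq([a,b))=q(b)-q(a)$, hence $dq(\{u\})=q(u+)-q(u)$. This is positive whenever $F_S$ is flat at level $u$, that is, when the support of $S$ has a gap just above $q(u)$. The correct identity is $\int_u^b qh\,dv=q(u)H(u)-q(b)H(b)+\int_{[u,b)}H\,dq$. Pairing the boundary value $q(u)$ with the open interval $(u,b)$ drops the term $\bigl(q(u+)-q(u)\bigr)H(u)$. Left-continuity at $u$ is exactly why the boundary value is $q(u)$, and hence why the atom at $u$ must be kept, not a reason to discard it. For example, take $S$ uniform on $\{1,2\}$ and $u=1/2$. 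Then $\int_u^1qh=2H(u)$, whereas your formula gives $H(u)$ and your $\mu_u$ has total mass $1/2$. So the Stieltjes integral in \eqref{eq:IBP_identity} must be taken over $[u,1)$, and ``mass on $(u,1)$'' in the statement has to be read as including the atom of $dq$ at $u$. The same holds for $\nu_u$: your boundary term $G(u)/q(u)$ already forces the atom $1/q(u)-1/q(u+)$ of $d(-1/q)$ at $u$ to be counted.

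The second is the boundary term in (b). The bound $|G(b)|/q(b)\le\int_b^1|h|$ runs the wrong way: $q(t)\ge q(b)$ for $t\ge b$ gives $\int_b^1q|h|\ge q(b)\int_b^1|h|$, not the reverse. Argue instead, as the paper does, that $|G(b)|\le\int_b^1q|h|\to0$ because $qh\in L^1$, while $q(b)\ge q(u)>0$. With these two repairs your argument is complete.
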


The two representations say that $\Phi_{i,u}$ is obtained from the family $\{\Psi_{i,v}\}_{v\ge u}$ by genuine averaging, whereas recovering $\Psi_{i,u}$ from $\{\Phi_{i,v}\}_{v\ge u}$ requires a coefficient $W_u\ge1$ on the current value and a compensating negative mass $W_u-1$ on the future ones. The size of that negative mass is exactly the excess of the Expected Shortfall over the Value-at-Risk,
\[
  W_u-1=\frac{\mathrm{ES}_u(S)-\VaR_u[S]}{\VaR_u[S]}=\frac{\Delta(u)}{(1-u)\,q(u)}
  =\frac{\int_{q(u)}^{\infty}\Fbar_S(t)\,dt}{(1-u)\,q(u)},
\]
where $\Fbar_S=1-F_S$, the last equality following from Lemma~\ref{lem:tail_transfer}(a) applied to $\phi\equiv1$ and $\phi(s)=s$ together with the identity $\e[S\mathds{1}_{\{S>x\}}]=x\Fbar_S(x)+\int_x^{\infty}\Fbar_S(t)\,dt$, and
with $\Delta$ the dispersion functional that already decided the equality dichotomy in the proof of Theorem~\ref{thm:tail_allocations_coincide}. Everything in this section follows from this one quantity.

\begin{theorem}\label{thm:allocation_limits}
Fix $u^\ast \in [0,1)$ with $s_{\max} > 0$, suppose the standing conditions of Section~\ref{sec:setup} hold on $\D$, and let $\ell \in \RR$. Then, as $u \to 1^-$:
\begin{enumerate}
  \item[\textup{(1)}] \emph{(Abelian.)} $\bigl|\Phi_{i,u}-\ell\bigr|\le\sup_{v\in[u,1)}\bigl|\Psi_{i,v}-\ell\bigr|$; in particular, if $\lim_{u\to1^-}\Psi_{i,u}=\ell$ then $\lim_{u\to1^-}\Phi_{i,u}=\ell$.

  \item[\textup{(2)}] \emph{(Tauberian, aggregate condition.)} For $u$ close enough to $1$,
  \begin{equation}
  \label{eq:tauberian_bounds}
    \bigl|\Psi_{i,u}-\ell\bigr|\le(2W_u-1)\sup_{v\in[u,1)}\bigl|\Phi_{i,v}-\ell\bigr|,
    \qquad
    \bigl|\Psi_{i,u}-\Phi_{i,u}\bigr|\le 2\,(W_u-1)\sup_{v\in[u,1)}\bigl|\Phi_{i,v}\bigr| .
  \end{equation}
  Consequently:
  \begin{enumerate}
  \item[\textup{(2a)}] if the multiplier stays bounded, $\limsup_{u\to1^-}W_u<\infty$, then $\lim_{u\to1^-}\Phi_{i,u}=\ell$ if and only if $\lim_{u\to1^-}\Psi_{i,u}=\ell$;
  \item[\textup{(2b)}] if the multiplier flattens, $W_u\to1$, and $\Phi_{i,\cdot}$ is bounded near $1$, as is automatic for non-negative components, then $\Phi_{i,u}-\Psi_{i,u}\to0$ whether or not either allocation converges.
  \end{enumerate}
    The multiplier is bounded whenever $s_{\max}<\infty$, whenever $S$ belongs to the maximum domain of attraction of the Gumbel law, and whenever it belongs to that of the Fr\'{e}chet law with index $\alpha>1$, where $\limsup_{u\to1^-}W_u\le\alpha/(\alpha-1)$, with convergence to $\alpha/(\alpha-1)$ when $F_S$ is continuous; it flattens to one in the first two cases but not in the third.

  \item[\textup{(3)}] Some condition of this kind is needed: Example~\ref{ex:oscillating_counterexample} exhibits a non-negative vector $\XX$, with $W_u\to\infty$, for which $\Phi_{i,u}\to\ell$ while $\Psi_{i,u}$ oscillates.
\end{enumerate}
\end{theorem}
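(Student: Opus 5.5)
The argument rests on the two representations of Lemma~\ref{lem:two_representations}, applied for $u$ beyond a level $u_1$ with $q>0$ on $(u_1,1)$. Such a level exists because $s_{\max}>0$.

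For part (1), I use the Abelian representation \eqref{eq:Phi_weighted_avg}. Since $\mu_u$ is a probability measure on $[u,1)$,
\[
\Phi_{i,u}-\ell=\int_{[u,1)}(\Psi_{i,v}-\ell)\,\mu_u(\mathrm{d}v),
\]
and the supremum bound follows at once. Here $q(u)>0$ makes the atom non-negative, and $dq\ge0$ makes the continuous part non-negative. The total mass equals one by \eqref{eq:IBP_identity} with $h\equiv1$. If $\Psi_{i,u}\to\ell$, then $\sup_{v\ge u}|\Psi_{i,v}-\ell|\to0$, which gives convergence of $\Phi_{i,u}$. For $u^\ast\le u\le u_1$ the bound is not needed, because the statement concerns $u\to1^-$.

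For part (2), I start from \eqref{eq:Psi_nonconvex_avg}. Because $\nu_u$ has mass $W_u-1$, the constant $\ell$ passes through, and
\[
\Psi_{i,u}-\ell=W_u(\Phi_{i,u}-\ell)-\int_u^1(\Phi_{i,v}-\ell)\,\nu_u(\mathrm{d}v).
\]
The triangle inequality then gives the factor $W_u+(W_u-1)=2W_u-1$. For the second bound I write
\[
\Psi_{i,u}-\Phi_{i,u}=(W_u-1)\Phi_{i,u}-\int_u^1\Phi_{i,v}\,\nu_u(\mathrm{d}v),
\]
which is bounded by $2(W_u-1)\sup_{v\ge u}|\Phi_{i,v}|$.

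Claim (2a) follows by combining part (1) with the first bound, since $2W_u-1$ stays bounded. Claim (2b) follows from the second bound. For non-negative components, $g_i\in[0,1]$, provided $c_i$ is irrelevant or chosen in $[0,1]$. The weighted-average form then gives $\Phi_{i,\cdot}\in[0,1]$.

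The remaining task is to verify the multiplier claims, using $W_u-1=\int_{q(u)}^\infty\Fbar_S(t)\,dt/((1-u)q(u))$.

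\emph{Finite endpoint.} If $s_{\max}<\infty$, then $q(u)\to s_{\max}>0$. The numerator is at most $(s_{\max}-q(u))\Fbar_S(q(u))$, and $\Fbar_S(q(u))\le 1-u$, so $W_u-1\le (s_{\max}-q(u))/q(u)\to0$.

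\emph{Gumbel domain.} With auxiliary function $a$, I use $\int_x^\infty\Fbar_S\sim a(x)\Fbar_S(x)$ together with $a(x)/x\to0$. The remaining issue is relating $\Fbar_S(q(u))$ to $1-u$ when $S$ has atoms. In the Gumbel and Fréchet domains the jump ratio $\PP(S=x)/\Fbar_S(x^-)$ tends to $0$, by the characterisation of these domains. Hence $\Fbar_S(q(u))\sim1-u$, and the numerator is $o\bigl((1-u)q(u)\bigr)$ when $s_{\max}=\infty$. If $s_{\max}<\infty$, the previous case already applies.

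\emph{Fréchet domain.} I use Karamata's theorem: $\int_x^\infty\Fbar_S\sim x\Fbar_S(x)/(\alpha-1)$. This gives $W_u-1\to1/(\alpha-1)$, that is, $W_u\to\alpha/(\alpha-1)$, under continuity. In general the inequality $\Fbar_S(q(u))\le1-u$ yields the $\limsup$ bound.

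The main obstacle I expect is this last step, namely the careful handling of atoms in the domain-of-attraction arguments. I would either appeal to the standard result that these domains force $\Fbar_S(x)/\Fbar_S(x^-)\to1$, or simply settle for the $\limsup$ upper bounds, which is all that (2a) and (2b) require.

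Part (3) is delegated to Example~\ref{ex:oscillating_counterexample}. It suffices to check there that $W_u\to\infty$, which is exactly the regime in which the bounds of (2) lose control.
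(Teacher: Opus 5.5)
Your proposal is correct and follows the paper's proof essentially step for step. It uses the probability measure $\mu_u$ for (1), the signed representation \eqref{eq:Psi_nonconvex_avg} with the triangle inequality for \eqref{eq:tauberian_bounds}, the mean-excess bound $W_u\le 1+e(q(u))/q(u)$ obtained from $\Fbar_S(q(u))\le 1-u$, and Karamata's theorem for the Fr\'{e}chet case. The differences are minor: you cite the auxiliary-function property $a(x)/x\to0$ where the paper derives $e(x)=o(x)$ directly from rapid variation, and your atom argument via $\Fbar_S(x)/\Fbar_S(x^-)\to1$ is not needed for the upper bounds, though it does extend $W_u\to\alpha/(\alpha-1)$ beyond continuous $F_S$.
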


The bounds \eqref{eq:tauberian_bounds} are distribution-free, and they separate two questions easily conflated. Transmitting a limit backwards, from $\Phi_{i,u}$ to $\Psi_{i,u}$, asks only that the amplification factor stay finite, which every standard tail model provides; making the two allocations merge before any limit is reached asks for the stronger flattening $W_u\to1$, and then happens at the explicit rate $2(W_u-1)$, the asymptotic counterpart of Remark~\ref{rem:tail_discrepancy}. What is decisive is thus the behavior of the tail-to-quantile multiplier rather than the boundedness of $S$, and both conclusions are sharp. In Example~\ref{ex:oscillating_counterexample}, deferred to Appendix~\ref{app:proofs}, the multiplier diverges and the two allocations decouple completely; there $\sup_{v\ge u}|\Phi_{i,v}-\ell|$ is of exact order $W_u^{-1}$, so \textup{(2a)} cannot be relaxed beyond $W_u\sup_{v\ge u}|\Phi_{i,v}-\ell|\to0$. Example~\ref{ex:oscillating_merge}, also deferred, shows conversely that \textup{(2b)} says nothing about convergence: for an exponential aggregate the two allocations merge while both oscillate forever.

\begin{remark}
\label{rem:Phi_Psi_structure}
Lemma~\ref{lem:two_representations} isolates the asymmetry between the two paradigms. The map $\Psi\mapsto\Phi$ is an averaging, hence non-expansive in the supremum norm by Theorem~\ref{thm:allocation_limits}--(1): severity weighting can only smooth the profile of realized shares along the tail. The inverse map is not an averaging, since it assigns weight $W_u\ge1$ to the present level and the compensating negative mass $W_u-1$ to the levels beyond; it is expansive, with modulus $2W_u-1$.

This clarifies the role of the distributional assumptions used in the recent asymptotic literature. \citet{Owada2025} obtain the asymptotic equivalence of the two allocations when the risk vector lies in the Fr\'{e}chet maximum domain of attraction under multivariate regular variation, or in the Gumbel one. Theorem~\ref{thm:allocation_limits}--(2) locates where such assumptions are needed. The link between the two allocations costs little: in either domain of attraction the multiplier is bounded, and Part~\textup{(2a)} ties the two limits together with no assumption on the dependence structure. What the tail model supplies is the existence and the value of the limit, the province of multivariate regular variation, and, in the Gumbel case, the stronger conclusion of Part~\textup{(2b)}. The amplification of Example~\ref{ex:oscillating_counterexample} is excluded as soon as the multiplier is bounded and, failing that, by the monotonicity of the conditional risk share, as we show next.
\end{remark}

\begin{corollary}
\label{cor:tauberian}
Under the assumptions of Theorem~\ref{thm:allocation_limits}, suppose in addition that $g_i$ is $\PP_S$-almost surely non-decreasing, or $\PP_S$-almost surely non-increasing, on $\D$. Then
\[
  \lim_{u \to 1^-}\Phi_{i,u} = \ell
  \iff
  \lim_{u \to 1^-}\Psi_{i,u} = \ell ,
\]
without any condition on the aggregate.
\end{corollary}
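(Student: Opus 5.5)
The direction $\Psi_{i,u}\to\ell\Rightarrow\Phi_{i,u}\to\ell$ is already Theorem~\ref{thm:allocation_limits}--(1), which needs no condition on the aggregate. So only the Tauberian direction $\Phi_{i,u}\to\ell\Rightarrow\Psi_{i,u}\to\ell$ needs proof. My plan is to replace the bounded-multiplier argument by a monotone-convergence argument. Monotonicity of $g_i$ forces $\Psi_{i,\cdot}$ to be monotone in $u$, and Theorem~\ref{thm:allocations_dominance_tail} places it on one side of $\Phi_{i,\cdot}$. Together these trap $\Psi_{i,\cdot}$ so that it has a finite limit, and the Abelian part then identifies that limit as $\ell$.

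\emph{Setting up.} I treat the non-decreasing case; the non-increasing case follows by the symmetric argument, or by applying the non-decreasing case to $-h$ with the inequalities reversed. By Remark~\ref{rem:regular_version_gi}, I fix a version of $g_i$ that is non-decreasing on all of $\D$. By Lemma~\ref{lem:tail_transfer}(c), $h=g_i\circ q$ is then non-decreasing on $(u^\ast,1)$ off a Lebesgue-null set, which does not affect any of the integrals involved. Since $s_{\max}>0$, I choose $u_1\in[u^\ast,1)$ with $q>0$ on $(u_1,1)$. For every $u'\in(u_1,1)$ we then have $\int_{u'}^1q>0$, so the statements of Section~\ref{sec:setup} apply with $u'$ as threshold, as noted after Theorem~\ref{thm:tail_allocations_coincide}.

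\emph{Step 1: $\Psi_{i,\cdot}$ is monotone.} On $(u_1,1)$, $\Psi_{i,u}=\frac{1}{1-u}\int_u^1h$ is locally absolutely continuous, because $h\in L^1(u^\ast,1)$ by \eqref{eq:L1_triplet}. Differentiating gives, almost everywhere,
\[
\frac{d}{du}\Psi_{i,u}=\frac{\Psi_{i,u}-h(u)}{1-u}.
\]
This is non-negative because $h$ is non-decreasing, so its average over $(u,1)$ is at least $h(u)$. Hence $u\mapsto\Psi_{i,u}$ is non-decreasing, and $L:=\lim_{u\to1^-}\Psi_{i,u}$ exists in $(-\infty,\infty]$.

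\emph{Step 2: dominance bounds $\Psi_{i,\cdot}$ from above.} The sufficiency ("if") part of Theorem~\ref{thm:allocations_dominance_tail}, which does not need quasi-concavity, applied at the threshold $u'$ gives $\Psi_{i,u}\le\Phi_{i,u}$ for all $u\in[u',1)$. Since $\Phi_{i,u}\to\ell\in\RR$, $\Psi_{i,\cdot}$ is bounded above near $1$, so $L\le\ell<\infty$.

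\emph{Step 3: identify the limit.} Since $\Psi_{i,u}\to L\in\RR$, Theorem~\ref{thm:allocation_limits}--(1) gives $\Phi_{i,u}\to L$. Uniqueness of limits then yields $L=\ell$.

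The only step I expect to need care is Step 1: justifying the monotonicity of $\Psi_{i,\cdot}$ when $h$ is only almost-everywhere monotone, and making sure the dominance theorem is invoked at a threshold $u'$ where $\int_{u'}^1q>0$. Both points are handled by the regular version and the choice of $u_1$ described above.
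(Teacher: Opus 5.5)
Your proposal is correct and follows the paper's proof essentially step for step: the derivative formula shows $\Psi_{i,\cdot}$ is monotone, the bound $\Psi_{i,u}\le\Phi_{i,u}$ makes its monotone limit finite, and Part~(1) of Theorem~\ref{thm:allocation_limits} identifies that limit as $\ell$. The only difference is cosmetic: the paper gets $\Phi_{i,u}\ge\Psi_{i,u}$ directly from the Abelian representation \eqref{eq:Phi_weighted_avg} combined with the monotonicity of $\Psi_{i,\cdot}$, whereas you invoke the sufficiency part of Theorem~\ref{thm:allocations_dominance_tail} at a shifted threshold, which is equally valid.
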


\begin{remark}
\label{rem:tauberian_connection}
Corollary~\ref{cor:tauberian} has the shape of a classical Tauberian theorem. In the language of summability, $\Psi_{i,u}$ is a Ces\`{a}ro-type average of the conditional share along the quantile curve and $\Phi_{i,u}$ a further weighted mean of $\{\Psi_{i,v}\}_{v\ge u}$; convergence of the inner average always transmits to the outer one, whereas the converse needs a Tauberian side condition, here monotonicity of $g_i$, much as slow oscillation serves in the classical theory \citep{Bingham1987}. The two side conditions available to us are of different natures: Theorem~\ref{thm:allocation_limits}--\textup{(2a)} restricts the aggregate, through the boundedness of $\mathrm{ES}_u(S)/\VaR_u[S]$, and is indifferent to the allocation, whereas Corollary~\ref{cor:tauberian} restricts the allocation and is indifferent to the aggregate. The latter is precisely the hypothesis under which Theorem~\ref{thm:allocations_dominance_tail} orders the two allocations at every finite threshold: the monotonicity that fixes the sign of $\Phi_{i,u}-\Psi_{i,u}$ also forces it to vanish in the limit.
\end{remark}

With conditions for asymptotic equivalence in place, it remains to identify the common limit. The next corollary shows that whenever the conditional risk share itself stabilizes at the top of the support, both allocations inherit its limit.

\begin{corollary}
\label{cor:limit_identification}
Under the assumptions of Theorem~\ref{thm:allocation_limits}, suppose that $h(v)=g_i(q(v))$ converges to $\ell$ as $v\to1^-$. Then $\Psi_{i,u}\to\ell$ and $\Phi_{i,u}\to\ell$ as $u\to1^-$. The hypothesis holds if $\lim_{s\to s_{\max}^-}g_i(s)=\ell$ for the regular version of Remark~\ref{rem:regular_version_gi} and $\PP(S=s_{\max})=0$, and it holds with $\ell=g_i(s_{\max})$ if instead $S$ has an atom at $s_{\max}$.
\end{corollary}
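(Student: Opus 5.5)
The plan is to prove the first assertion by an elementary Cesàro argument, and then obtain convergence of $\Phi_{i,u}$ from Theorem~\ref{thm:allocation_limits}--(1). Since $h\in L^1(u^\ast,1)$ by \eqref{eq:L1_triplet} and $h(v)\to\ell$ as $v\to1^-$, fix $\varepsilon>0$ and choose $u_\varepsilon<1$ with $|h(v)-\ell|\le\varepsilon$ for almost every $v\in(u_\varepsilon,1)$. For $u\ge u_\varepsilon$, Definition~\ref{def:allocations_paradigms} gives
\[
|\Psi_{i,u}-\ell|=\Bigl|\frac{1}{1-u}\int_u^1\bigl(h(v)-\ell\bigr)\,dv\Bigr|\le\varepsilon,
\]
so $\Psi_{i,u}\to\ell$. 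Because $s_{\max}>0$, we have $q(u)>0$ near $1$, and $Q(u)>0$ there, so $\Phi_{i,u}$ is defined for $u$ near $1$. Part~(1) of Theorem~\ref{thm:allocation_limits} then gives $|\Phi_{i,u}-\ell|\le\sup_{v\ge u}|\Psi_{i,v}-\ell|\to0$. Equivalently, one can use the probability measure $\mu_u$ of Lemma~\ref{lem:two_representations}(a) directly.

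For the sufficient conditions, take the regular version $g_i$ of Remark~\ref{rem:regular_version_gi}, under which $h(v)=g_i(q(v))$ holds pointwise on $(u^\ast,1)$. By Lemma~\ref{lem:tail_transfer}(b), changing the version alters $h$ only on a null set, which does not affect the Cesàro argument above.

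\emph{Case without an atom at $s_{\max}$.} Suppose $\PP(S=s_{\max})=0$. We claim $q(v)<s_{\max}$ for all $v<1$ and $q(v)\to s_{\max}$ as $v\to1^-$. Monotonicity and $q(V)\eqd S$ give $q(v)\uparrow s_{\max}$. If $q(v_0)=s_{\max}$ for some $v_0<1$, then \eqref{eq:quantile_galois} would give $F_S(s)<v_0$ for $s<s_{\max}$, hence $\PP(S\ge s_{\max})\ge 1-v_0>0$, contradicting the absence of an atom (and $s_{\max}=+\infty$ is excluded since $q$ is real-valued). Hence $q(v)\to s_{\max}^-$ and $g_i(q(v))\to\ell$ by the assumed left limit.

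\emph{Case with an atom at $s_{\max}$.} Suppose $p:=\PP(S=s_{\max})>0$, so $s_{\max}<\infty$. By \eqref{eq:quantile_galois}, $q(v)=s_{\max}$ for every $v\in(1-p,1)$, since $F_S(s)\le1-p<v$ for $s<s_{\max}$. Thus $h(v)=g_i(s_{\max})$ is eventually constant, and $\ell=g_i(s_{\max})$. This value is well defined $\PP_S$-a.s. because $s_{\max}$ carries positive mass and lies in $\D$.

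The main obstacle is purely bookkeeping: making sure the pointwise limit statement about $g_i$ is legitimate despite $g_i$ being defined only up to null sets. This is exactly why the regular version and the atom dichotomy are needed.
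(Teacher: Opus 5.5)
Your proposal is correct and follows essentially the paper's own proof. The ingredients match: a uniform (Ces\`aro) tail average gives $\Psi_{i,u}\to\ell$, Theorem~\ref{thm:allocation_limits}--(1) transfers the limit to $\Phi_{i,u}$, and the sufficient conditions come from the fact that $q(v)\uparrow s_{\max}$ strictly from below when there is no atom, while $q(v)=s_{\max}$ for all $v>\PP(S<s_{\max})$ when there is one. Your derivation of these quantile facts from \eqref{eq:quantile_galois} simply spells out what the paper states briefly.
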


This criterion bridges back to the dependence structures of the two preceding sections: it requires only the endpoint behavior of the conditional risk share, which both were designed to deliver in closed form.

\begin{remark}
\label{rem:limit_examples}
Corollary~\ref{cor:limit_identification} identifies $\ell$ in the two model classes of this paper.
\begin{enumerate}
  \item[\textup{(i)}] \emph{Comonotonic components} (Section~\ref{sec:comonotone_dependence}). Here $h$ is the quantile ratio $q_{X_i}/q_S$ of Proposition~\ref{prop:comonotone_gi}, so that $\ell$ is the limit of that ratio at the top of the unit interval, whenever it exists. For the collection of Pareto risks of Example~\ref{ex:comonotone_Pareto} the entire tail allocation concentrates on the heaviest-tailed components,
  \[
  \ell = \frac{\sigma_i\,\mathds{1}_{\{\alpha_i=\alpha_{\min}\}}}{\sum_{j:\,\alpha_j=\alpha_{\min}}\sigma_j},
  \]
  and both $\Phi_{i,u}$ and $\Psi_{i,u}$ converge to it. Here $W_u\to\alpha_{\min}/(\alpha_{\min}-1)$ is bounded but does not flatten, so Theorem~\ref{thm:allocation_limits}--\textup{(2a)} already ties the two limits together; the endpoint behavior of $g_i$ delivers both directly.
  \item[\textup{(ii)}] \emph{Independent EDM components} (Section~\ref{sec:Independent_EDMs}). Applying the corollary to the saddle-point proxy $\gh_i$ and its allocations of Definition~\ref{def:saddlepoint_allocations},
  \[
  \ell=\lim_{t\to t_{\max}}\frac{K_{X_i}^{\prime}(t)}{K_S^{\prime}(t)},
  \qquad t_{\max}=\sup\Theta-\theta_{\max},
  \]
  by Remark~\ref{rem:T_D_bijection}. For the non-negatively supported Tweedie family with $p>1$ and a common cumulant function, $A^{\prime}(\theta_j+t)$ diverges as $t\to t_{\max}$ only for the indices attaining $\theta_{\max}$, so
  \[
  \ell=\frac{\lambda_i\,\mathds{1}_{\{\theta_i=\theta_{\max}\}}}{\sum_{j:\,\theta_j=\theta_{\max}}\lambda_j},
  \]
  the allocation again concentrating on the components with the heaviest tilted tails. For the extremal components, $\theta_i\in\{\theta_{\min},\theta_{\max}\}$, Theorems~\ref{thm:EDM_allocations_monotonicity_Ai=A} and~\ref{thm:EDM_monotone_IR} make $\gh_i$ monotone, so Corollary~\ref{cor:tauberian} applies to $\phih_{i,u}$ and $\psih_{i,u}$ as well; in all cases Remark~\ref{rem:saddlepoint_transfer} transfers the conclusion to $\Phi_{i,u}$ and $\Psi_{i,u}$ up to the saddle-point error.
\end{enumerate}
In both cases the limit is a degenerate allocation whenever the tail is dominated by a strict subset of the collection of risks: asymptotically, the components that do not drive the aggregate receive nothing, under either paradigm.
\end{remark}

\section{Conclusions}
\label{sec:conclusions}

Choosing between the two allocations looks at first like a matter of taste between two risk measures. It is not. One of them answers how much of the aggregate tail capital a risk component absorbs, the other how large a share of the risk pool the component is expected to represent, and these are different questions.

They are, however, questions about one and the same object. Both allocations are weighted averages of the conditional risk share $g_i(S)=\e[R_i\mid S]$, $\Phi_{i,u}$ weighting a tail scenario by its severity through $s\,\pi_u(s)$ and $\Psi_{i,u}$ treating all of them alike through $\pi_u(s)$. Their difference is therefore a covariance along the tail, and the whole qualitative theory follows from the shape of that one function: the allocations agree throughout a tail precisely when $g_i$ is flat on it and, once $g_i$ is unimodal, one dominates throughout precisely when $g_i$ is monotone there, the gap never exceeding half the tail coefficient of variation of the aggregate for non-negative risks. Normalizing and tail averaging do not commute, and how badly they fail to commute is a property of the aggregate, not of the functional applied to it.

Stated on $g_i$, the criterion becomes explicit wherever the conditional share can be computed, and our two model classes sit at opposite ends of the dependence spectrum. Among independent additive exponential dispersion models sharing a cumulant function, agreement at all levels is a knife edge: it holds when the natural parameters coincide, and otherwise only within the scaled Poisson family. Away from agreement, the ordering follows the mean-to-variance ratio of the tilted family, which is non-increasing for every non-negatively supported infinitely divisible model and exact for gamma risks. Under comonotonicity the conditional share is a ratio of quantile functions, making $\Phi_{i,u}$ a ratio of Expected Shortfalls and $\Psi_{i,u}$ a tail average of quantile ratios; the two then agree on a tail exactly when the tail quantiles are proportional, and globally exactly for a comonotonic scale family.

Deep in the tail the relation becomes Abelian--Tauberian, and a single number decides it. Since $\Phi_{i,u}$ averages $\{\Psi_{i,v}\}_{v\ge u}$ against a probability measure, convergence of the expected share always transmits to the capital contribution. The inverse relation carries a signed measure whose negative mass is the relative excess of the Expected Shortfall over the Value-at-Risk, so convergence travels back while that multiplier stays bounded, as under every standard tail model, and the two allocations merge outright, convergent or not, once it flattens to one. Only when it diverges can they part company altogether, and monotonicity of $g_i$ then restores the equivalence, the same monotonicity that orders them at every finite threshold.

The choice between the two paradigms therefore matters least where it is most often debated. Under a light-tailed aggregate they agree asymptotically; under a heavy-tailed one the choice is real, and its direction is settled by whether the component's conditional share grows with the aggregate. What deserves to be estimated, in either case, is not an allocation but $g_i$, which determines both and serves any other tail quasi-linear mean equally well.

\section*{Acknowledgements}
This work received financial support from the NORM program through NSERC grant ALLRP 580632 and Mitacs
grant IT33381. We are grateful to Harry Joe for valuable discussions and for detailed comments on an earlier draft of this paper.
\bibliographystyle{apalike}
\bibliography{References.bib}
\begin{appendices}
\section{Proofs}
\label{app:proofs}

This appendix collects the proofs of all results stated in Sections~\ref{sec:setup}--\ref{sec:allocation_convergence}, in order of appearance, together with the two examples that establish the sharpness of Theorem~\ref{thm:allocation_limits}.

\begin{proof}[\textup{\textbf{Proof of Lemma~\ref{lem:tail_transfer}}}]
Write $s_u:=q(u)$ and let $V\sim\mathcal{U}(0,1)$. By \eqref{eq:quantile_galois}, $F_S(s)<u$ for all $s<s_u$, whence $F_S(s_u)-u\le F_S(s_u)-F_S(s_u^-)=\PP(S=s_u)$ and $\theta_u\in[0,1]$. If $\pi_u(s)>0$, then either $s>s_u\ge s^\ast$, or $s=s_u$ with $F_S(s_u)>u\ge u^\ast$; in both cases $s\in\D$ up to the $\PP_S$-null set $(s_{\max},\infty)$. For (a), \eqref{eq:quantile_galois} gives $\{q(V)>s_u\}=\{V>F_S(s_u)\}$, while $q(v)=s_u$ for $v\in(u,F_S(s_u)]$. Hence
\begin{align*}
\int_u^1\phi\bigl(q(v)\bigr)\,dv
&=\bigl(F_S(s_u)-u\bigr)\phi(s_u)+\e\bigl[\phi(q(V))\,\mathds{1}_{\{q(V)>s_u\}}\bigr]\\
&=\theta_u\,\PP(S=s_u)\,\phi(s_u)+\e\bigl[\phi(S)\,\mathds{1}_{\{S>s_u\}}\bigr],
\end{align*}
which is \eqref{eq:tail_transfer}; the choice $\phi\equiv1$ gives $\e[\pi_u(S)]=1-u$. For (b), note that $\pi_{u^\ast}=\mathds{1}_{\D}$ $\PP_S$-almost everywhere, except at a possible atom $s^\ast\in\D$, where $\pi_{u^\ast}(s^\ast)=\theta_{u^\ast}>0$ because $F_S(s^\ast)>u^\ast$; hence (a) applied to $|\phi|$ gives $\underline{\theta}\,\e[|\phi(S)|\mathds{1}_{\{S\in\D\}}]\le\int_{u^\ast}^1|\phi(q(v))|\,dv\le\e[|\phi(S)|\mathds{1}_{\{S\in\D\}}]$ with $\underline{\theta}:=\theta_{u^\ast}$ if $s^\ast\in\D$ and $\underline{\theta}:=1$ otherwise, and both assertions follow, the second upon replacing $\phi$ by $\mathds{1}_{\{\phi\neq0\}}$. For (c), if $\phi$ is non-decreasing on $\D\setminus\mathcal{N}$ with $\PP_S(\mathcal{N})=0$, then $\phi\circ q$ is non-decreasing off $q^{-1}(\mathcal{N})\cap(u^\ast,1)$, which is Lebesgue-null by (b). Conversely, if $\phi\circ q$ is non-decreasing off a Lebesgue-null set $L$, then $\phi$ is non-decreasing on $q\bigl((u^\ast,1)\setminus L\bigr)$, because $q$ is non-decreasing; the complement $\mathcal{N}$ of this set in $\D$ is $\PP_S$-null, since $\{q(V)\in\mathcal{N},V>u^\ast\}\subset\{V\in L\}$ and a possible atom $s^\ast\in\D$ has the non-degenerate preimage $(u^\ast,F_S(s^\ast)]$, which is not contained in $L$.
\end{proof}


\begin{proof}[\textup{\textbf{Proof of Proposition~\ref{prop:allocation_representation}}}]
Apply \eqref{eq:tail_transfer} with $\phi(s)=s$, $\phi=g_i$ and $\phi(s)=s\,g_i(s)$, which is legitimate by \eqref{eq:L1_triplet}. Since $\pi_u(S)$ is $\sigma(S)$-measurable and vanishes almost surely outside $\{S\in\D\}$, the tower property and \eqref{eq:consistency_identity} give $\e[g_i(S)\pi_u(S)]=\e[R_i\pi_u(S)]$ and $\e[S\,g_i(S)\,\pi_u(S)]=\e[X_i\,\pi_u(S)]$. The summation property follows from $\sum_{i\in N}X_i=S$ and $\sum_{i\in N}R_i=1$, and the remaining statements from \eqref{eq:tail_weight}.
\end{proof}


\begin{proof}[\textup{\textbf{Proof of Lemma~\ref{lem:cov_representation}}}]
By Definition~\ref{def:allocations_paradigms}, $\Phi_{i,u}=\e_u[q(V)h(V)]/\e_u[q(V)]$ and $\Psi_{i,u}=\e_u[h(V)]$, all expectations being finite by \eqref{eq:L1_triplet}.
\end{proof}


\begin{proof}[\textup{\textbf{Proof of Theorem~\ref{thm:tail_allocations_coincide}}}]
By Lemma~\ref{lem:tail_transfer}(b), \eqref{eq:tail_constancy} is equivalent to $h=r_i$ almost everywhere on $(u^\ast,1)$.

\medskip
\noindent\textit{Sufficiency.} If $h=r_i$ almost everywhere on $(u^\ast,1)$, then $\Cov_u(q(V),h(V))=0$ and $\Psi_{i,u}=r_i$ for all $u\in[u^\ast,1)$, and the claim follows from \eqref{eq:cov_representation}.

\medskip
\noindent\textit{Necessity.} Introduce the primitives
\[
Q(u)=\int_{u}^1 q(v)\,dv, \qquad G(u):=\int_{u}^1 h(v)\,dv, \qquad H(u):=\int_{u}^1 q(v)h(v)\,dv,\qquad u\in[u^\ast,1].
\]
By \eqref{eq:L1_triplet}, they are absolutely continuous on $[u^\ast,1]$ with $Q'=-q$, $G'=-h$ and $H'=-qh$ almost everywhere. The function $\Psi_{i,u}=G(u)/(1-u)$ is absolutely continuous on every compact subinterval of $[u^\ast,1)$, with
\begin{equation}
\label{eq:Psi_derivative}
\frac{d}{du}\Psi_{i,u}=\frac{\Psi_{i,u}-h(u)}{1-u}\qquad\text{for almost every }u\in(u^\ast,1).
\end{equation}
Clearing denominators, $\Phi_{i,u}=\Psi_{i,u}$ for all $u\in[u^\ast,1)$ is equivalent to
\begin{equation}
\label{eq:integral_identity_tail}
(1-u)\int_{u}^1 q(v)h(v)\,dv = \left(\int_{u}^1 q(v)\,dv\right)\left(\int_{u}^1 h(v)\,dv\right), \qquad u \in [u^\ast, 1),
\end{equation}
that is, to $H(u)=Q(u)\,\Psi_{i,u}$ on $[u^\ast,1)$. Differentiating \eqref{eq:integral_identity_tail} and using \eqref{eq:Psi_derivative} gives $-q(u)h(u)=-q(u)\Psi_{i,u}+Q(u)\bigl(\Psi_{i,u}-h(u)\bigr)/(1-u)$ almost everywhere. With $\Delta(u):=Q(u)-(1-u)q(u)=\int_u^1\bigl(q(v)-q(u)\bigr)\,dv$, this rearranges to the factorization
\begin{equation}
\label{eq:factored_tail}
\bigl[\Psi_{i,u}-h(u)\bigr]\,\Delta(u)=0\qquad\text{for almost every }u\in(u^\ast,1).
\end{equation}
The factor $\Delta(u)=(1-u)\bigl[\mathrm{ES}_u(S)-\VaR_u[S]\bigr]$ measures the dispersion of the aggregate beyond its Value-at-Risk. Since $q$ is non-decreasing, $\Delta\ge0$, and $\Delta$ is non-increasing, because for $u_1<u_2$ in $[u^\ast,1)$ the bound $\int_{u_1}^{u_2}q\ge(u_2-u_1)q(u_1)$ gives $\Delta(u_1)-\Delta(u_2)\ge(1-u_2)\bigl(q(u_2)-q(u_1)\bigr)\ge0$. Moreover, $\Delta(u)=0$ if and only if $q$ is constant on $[u,1)$: if $q(v_0)>q(u)$ for some $v_0\in(u,1)$, then $q(v)-q(u)\ge q(v_0)-q(u)>0$ for all $v\in[v_0,1)$, so that $\Delta(u)>0$. Let
\[
u_{0} := \inf\bigl\{u \in [u^\ast, 1) : \Delta(u) = 0\bigr\} \in [u^\ast, 1],
\]
with $\inf\varnothing:=1$. Then $\Delta>0$ on $(u^\ast,u_0)$ and, if $u_0<1$, $q$ equals some constant $b$ on $(u_0,1)$.

On $(u^\ast,u_0)$, \eqref{eq:factored_tail} gives $h(u)=\Psi_{i,u}$ almost everywhere, so that \eqref{eq:Psi_derivative} yields $\frac{d}{du}\Psi_{i,u}=0$ almost everywhere. By absolute continuity on compact subintervals, $\Psi_{i,u}$ equals a constant $r_i$ on $[u^\ast,u_0)$, and hence $h=r_i$ almost everywhere on $(u^\ast,u_0)$. On $(u_0,1)$, $h=g_i(b)=:\medtilde{r}_i$ and consequently $\Psi_{i,u}=\medtilde{r}_i$ for $u\in[u_0,1)$. If $u^\ast<u_0<1$, the continuity of $u\mapsto\Psi_{i,u}$ at $u_0$ forces $r_i=\medtilde{r}_i$. In all cases, therefore, $h$ is almost everywhere constant on $(u^\ast,1)$, which proves the necessity. The final assertion follows from the sufficiency part.
\end{proof}


\begin{proof}[\textup{\textbf{Proof of Corollary~\ref{cor:tail_weighted}}}]
If $g_{i}(S)=r_{i}$ almost surely on $\D$, the tower property and condition (i) give $\e[R_{i}\,w(S)]=\e[g_{i}(S)\,w(S)]=r_{i}\,\e[w(S)]$ for every admissible $w$. Conversely, every indicator $w=\mathds{1}_{B}$ with $B\subset\D$ Borel and $\PP(S\in B)>0$ is admissible by \eqref{eq:standing_integrability}, and \eqref{eq:tail_weighted} yields $\e[(g_{i}(S)-r_{i})\,\mathds{1}_{B}(S)]=0$. Choosing $B=\D\cap\{g_i>r_i\}$ and $B=\D\cap\{g_i<r_i\}$ whenever these sets have positive $\PP_S$-probability shows that $g_{i}(S)=r_{i}$ almost surely on $\D$. If $0\in\D$ and $\PP(S=0)>0$, then $g_i(0)=c_i$, because $R_i=c_i$ on $\{S=0\}$, so constancy forces $c_i=r_i$. Substituting the definition of $R_i$ into \eqref{eq:tail_weighted} gives
\begin{equation}
\label{eq:tail_weighted_expanded}
  r_{i}\,\e[w(S)]
  =\e\!\left[\frac{X_{i}}{S}\,w(S)\,\mathds{1}_{\{S\neq0\}}\right]
         +c_{i}\,\e\bigl[w(S)\,\mathds{1}_{\{S=0\}}\bigr],
\end{equation}
and $c_i=r_i$ yields \eqref{eq:ci_formula}. If $0\notin\D$ or $\PP(S=0)=0$, the last term in \eqref{eq:tail_weighted_expanded} vanishes for every admissible $w$.
\end{proof}


\begin{proof}[\textup{\textbf{Proof of Proposition~\ref{prop:tail_allocations_coincide_phi}}}]
By \eqref{eq:consistency_identity} and the tower property,
\[
\e\bigl[X_ie^{\i tS}\mathds{1}_{\{S\in\D\}}\bigr]=\e\bigl[S\,g_i(S)\,e^{\i tS}\mathds{1}_{\{S\in\D\}}\bigr],
\]
so \eqref{eq:tail_cf_condition} states that the Fourier--Stieltjes transform of the signed measure
\[
\nu(B):=\e\bigl[(g_i(S)-r_i)\,S\,\mathds{1}_{\{S\in B\cap\D\}}\bigr],\qquad B\subset\RR\text{ Borel},
\]
vanishes identically. The measure $\nu$ is finite, since $\e[|S\,g_i(S)|\mathds{1}_{\{S\in\D\}}]$ and $\e[|S|\mathds{1}_{\{S\in\D\}}]$ are finite by \eqref{eq:L1_triplet}. By the uniqueness theorem for Fourier--Stieltjes transforms, $\nu=0$, that is, $(g_i(S)-r_i)\,S=0$ almost surely on $\{S\in\D\}$, which is equivalent to $g_i(S)=r_i$ almost surely on $\{S\in\D\setminus\{0\}\}$. The converse is immediate. The final statement follows from Theorem~\ref{thm:tail_allocations_coincide}, because $g_i(0)=c_i$ when $\PP(S=0)>0$.
\end{proof}


\begin{proof}[\textup{\textbf{Proof of Corollary~\ref{cor:global_version_0}}}]
(i) For $u^\ast=0$, we have $s^\ast=s_{\min}$ and $\PP(S\in\mathfrak{S}_0)=1$ in both cases of Definition~\ref{def:Tail_event_SDomain}. Since $\e[S]>0$, $Q(u)>0$ for all $u\in[0,1)$, and (i) follows from Theorem~\ref{thm:tail_allocations_coincide} and Proposition~\ref{prop:tail_allocations_coincide_phi}.

(ii) Write $Z_i=S-X_i$. By independence, $\e[X_ie^{\i tS}]=-\i\,\varphi_{X_i}'(t)\,\varphi_{Z_i}(t)$ and $\e[Se^{\i tS}]=-\i\,\varphi_S'(t)$, so \eqref{eq:global_cf} reads $\varphi_{X_i}'\varphi_{Z_i}=r_i\,\varphi_S'$. Dividing by $\varphi_S=\varphi_{X_i}\varphi_{Z_i}$ on $I_0$ and integrating from the origin yields \eqref{eq:global_log_cf}; reversing these steps gives \eqref{eq:global_cf} on $I_0$, hence on $\RR$ when $I_0=\RR$. Under the moment generating function condition, the same computation with real arguments shows that $K_{X_i}=r_iK_S$ on $(-\delta,\delta)$ implies $\e[X_ie^{tS}]=r_i\,\e[Se^{tS}]$ there; both sides are analytic in the strip $\{|\mathrm{Re}\,z|<\delta\}$ and thus agree on the imaginary axis, which is \eqref{eq:global_cf}. If $r_i<0$, then \eqref{eq:global_log_cf} gives $|\varphi_{X_i}|=|\varphi_S|^{r_i}\ge1$, hence $|\varphi_{X_i}|=1$ on $I_0$, so that $X_i$ is degenerate; then $|\varphi_S|=1$ on $I_0$ and $S$ is degenerate. Hence $r_i\ge0$ for non-degenerate $S$, and $\sum_{j\in N}r_j=1$ gives $r_i\in[0,1]$. Finally, under finite variances, the covariance weight of Corollary~\ref{cor:tail_weighted} with $u^\ast=0$ gives $r_i=\Cov(X_i,S)/\Var(S)=\Var(X_i)/\Var(S)$ \citep[see also][]{Mohammed2021}.
\end{proof}


\begin{proof}[\textup{\textbf{Proof of Theorem~\ref{thm:allocations_dominance_tail}}}]
The map $u\mapsto\mathrm{ES}_u(S)$ is non-decreasing, so that $\e_u[q(V)]=\mathrm{ES}_u(S)$ is at least $\mathrm{ES}_{u^\ast}(S)>0$ for every $u\in[u^\ast,1)$. By \eqref{eq:cov_representation}, $\Phi_{i,u}\ge\Psi_{i,u}$ is thus equivalent to
\begin{equation}\label{eq:chebyshev_ineq_tail}
\frac{1}{1-u}\int_{u}^1 q(v)h(v)\,dv \ge \left(\frac{1}{1-u}\int_{u}^1 q(v)\,dv\right)\left(\frac{1}{1-u}\int_{u}^1 h(v)\,dv\right),
\end{equation}
that is, to $\Cov_u\bigl(q(V),h(V)\bigr)\ge0$.
We shall use Chebyshev's integral inequality \citep{Hardy1952} in the form
\begin{equation}
\label{eq:chebyshev_double}
\Cov_u\bigl(q(V),h(V)\bigr)=\frac{1}{2(1-u)^2}\int_u^1\!\!\int_u^1\bigl(q(v)-q(w)\bigr)\bigl(h(v)-h(w)\bigr)\,dv\,dw,
\end{equation}
where the double integral converges absolutely by \eqref{eq:L1_triplet}. Its integrand is almost everywhere non-negative (non-positive) whenever $h$ is almost everywhere non-decreasing (non-increasing) on $(u,1)$.

\medskip
\noindent\textit{Sufficiency.} If $g_i$ is $\PP_S$-almost surely non-decreasing on $\D$, then $h$ is almost everywhere non-decreasing on $(u^\ast,1)$ by Lemma~\ref{lem:tail_transfer}(c), and \eqref{eq:chebyshev_double} gives \eqref{eq:chebyshev_ineq_tail} for every $u\in[u^\ast,1)$. Quasi-concavity was not used.

\medskip
\noindent\textit{Necessity.} Assume \eqref{eq:chebyshev_ineq_tail} for all $u\in[u^\ast,1)$, and let $s_m$ be a mode as in Definition~\ref{def:quasi_concave}. Put $u_m:=F_S(s_m)$, with $u_m:=1$ if $s_m=\infty$; then $u_m\ge F_S(s^\ast)\ge u^\ast$. By \eqref{eq:quantile_galois}, $q(v)\in\D\cap[s^\ast,s_m]$ for $v\in(u^\ast,u_m]$ and $q(v)\in\D\cap(s_m,\infty)$ for $v\in(u_m,1)$. Arguing as in Lemma~\ref{lem:tail_transfer}(c), $h$ is almost everywhere non-decreasing on $(u^\ast,u_m)$ and almost everywhere non-increasing on $(u_m,1)$.

If $u_m=1$, then $h$ is almost everywhere non-decreasing on $(u^\ast,1)$, and Lemma~\ref{lem:tail_transfer}(c) concludes. Let $u_m<1$. For $u\in[u_m,1)$, the functions $q$ and $h$ are oppositely ordered on $(u,1)$, so \eqref{eq:chebyshev_double} gives $\Cov_u(q(V),h(V))\le0$. Together with \eqref{eq:chebyshev_ineq_tail}, this yields $\Phi_{i,u}=\Psi_{i,u}$ for all $u\in[u_m,1)$. The argument of Theorem~\ref{thm:tail_allocations_coincide}, applied with threshold $u_m$, provides a constant $r$ with $h=r$ almost everywhere on $(u_m,1)$. If $u_m=u^\ast$, the proof is complete.

Let $u_m>u^\ast$ and $M_h:=\operatorname*{ess\,sup}_{v\in(u^\ast,u_m)}h(v)$; we claim that $M_h\le r$. Suppose instead that $M_h>r$. Since $h$ is almost everywhere non-decreasing on $(u^\ast,u_m)$, there is $\epsilon\in(0,u_m-u^\ast)$ with $h>r$ almost everywhere on $(u_m-\epsilon,u_m)$. Write $\bar{q}_u:=\e_u[q(V)]$. Because $q>s_m$ on $(u_m,1)$, we have $\bar{q}_{u_m}>s_m$, and by continuity of $u\mapsto\bar{q}_u$ we may shrink $\epsilon$ so that $\bar{q}_{u}>s_m$ for $u:=u_m-\epsilon$. Since a covariance is unaffected by shifting $h$ by the constant $r$, and $h=r$ almost everywhere on $(u_m,1)$,
\[
(1-u)\,\Cov_u\bigl(q(V),h(V)\bigr)=\int_{u}^{u_m}\bigl(q(v)-\bar{q}_u\bigr)\bigl(h(v)-r\bigr)\,dv<0,
\]
because $q(v)\le s_m<\bar{q}_u$ and $h(v)>r$ for almost every $v\in(u,u_m)$. This contradicts \eqref{eq:chebyshev_ineq_tail}. Hence $h\le r$ almost everywhere on $(u^\ast,u_m)$. Being almost everywhere non-decreasing on $(u^\ast,u_m)$, bounded above by $r$, and equal to $r$ almost everywhere on $(u_m,1)$, the function $h$ is almost everywhere non-decreasing on $(u^\ast,1)$. Lemma~\ref{lem:tail_transfer}(c) concludes the proof.

\medskip
\noindent\textit{The quasi-convex case.} The preceding argument never used that $h$ is a share. Since $g_i$ is quasi-convex if and only if $-g_i$ is quasi-concave, and $\Cov_u(q(V),-h(V))=-\Cov_u(q(V),h(V))$, the quasi-convex case follows by applying the quasi-concave case to $-h$.
\end{proof}


\begin{proof}[\textup{\textbf{Proof of Proposition~\ref{prop:R_not_L1}}}]
Because $X_i$ is absolutely continuous and independent of $Z_i$, the convolution formula gives $S$ an absolutely continuous distribution, so $\PP(S = 0) = 0$; the term $c_j\mathds{1}_{\{S=0\}}$ in $R_j$ therefore vanishes almost surely and $\e[|R_j|]=\e[|X_j|/|S|]$ for every $j\in N$. Both statements rest on the non-integrability of $1/|w|$ at the origin: the aggregate can come arbitrarily close to zero while a component stays away from it.

\smallskip
\noindent{\emph{(a)}} Write $S = X_i + X_j + Z_{ij}$ with $Z_{ij} = \sum_{k \neq i, j} X_k$ independent of $(X_i,X_j)$. Since the integrand is non-negative, Tonelli's theorem gives
\begin{equation}
\label{eq:Tonelli_a}
    \e\!\left[\frac{|X_j|}{|S|}\right]
    =
    \iiint_{\RR^3}
        \frac{|y|\,f_{X_i}(x)}{\,|x + y + z|\,}
        \;dx\;d\PP_{X_j}(y)\;d\PP_{Z_{ij}}(z).
\end{equation}
Fix $y \neq 0$ and $z \in \RR$, and substitute $w = x + y + z$ in the inner integral. With $a := \inf_{|w|\le 1} f_{X_i}(w - (y+z)) > 0$,
\[
    \int_{\RR} \frac{f_{X_i}(x)}{|x + y + z|}\,dx
    = \int_{\RR} \frac{f_{X_i}(w - (y+z))}{|w|}\,dw
    \geq a \int_{0 < |w| \le 1} \frac{dw}{|w|}
    = \infty .
\]
Integrating the remaining variables in~\eqref{eq:Tonelli_a} over the set $\{y\neq 0\}\times\RR$, which has positive $\PP_{X_j}\otimes\PP_{Z_{ij}}$-measure because $X_j\not\equiv0$, yields $\e[|X_j|/|S|]=\infty$.

\smallskip
\noindent{\emph{(b)}} Write $S = X_i + Z_i$ and apply Tonelli's theorem in the form
\begin{equation}
\label{eq:Tonelli_b}
    \e\!\left[\frac{|X_i|}{|S|}\right]
    =
    \int_{\RR}\!\left(
        \int_{\RR} \frac{|x|}{|x+z|}\,f_{X_i}(x)\,dx
    \right)d\PP_{Z_i}(z).
\end{equation}
For $z=0$ the inner integral equals $1$. For $z \neq 0$, pick $0 < \delta < |z|$ and set $a' := \inf_{|x + z|\le\delta} f_{X_i}(x) > 0$; since $|x|\ge|z|-\delta>0$ on $\{|x+z|\le\delta\}$,
\[
    \int_{\RR} \frac{|x|}{|x+z|}\,f_{X_i}(x)\,dx
    \geq (|z|-\delta)\,a'
        \int_{0 < |x+z|\le\delta} \frac{dx}{|x+z|}
    = \infty.
\]
As $\PP(Z_i \neq 0) > 0$ by hypothesis, \eqref{eq:Tonelli_b} is infinite.
\end{proof}


\begin{proof}[\textup{\textbf{Proof of Theorem~\ref{thm:EDM_coincide}}}]
Since the EDMs possess moment generating functions in a neighborhood of the origin, Corollary~\ref{cor:global_version_0} applies in its cumulant form: $\Phi_{i,u}=\Psi_{i,u}$ for all $u\in[0,1)$ if and only if $K_{X_i}=r_iK_S$ near the origin for some $r_i$, with $c_i=r_i$ when $\PP(S=0)>0$. Moreover $r_i=\Var(X_i)/\Var(S)\in(0,1)$, because $\Var(X_i)=\lambda_iA''(\theta_i)>0$.

\smallskip
\noindent{\emph{(1)}} Here $K_{X_i}(t)=\lambda_i(A(\theta+t)-A(\theta))$ and, by independence, $K_{S}(t)=(\sum_{j\in N}\lambda_j)(A(\theta+t)-A(\theta))$, so that $K_{X_i}=r_iK_S$ with $r_i=\lambda_i/\sum_{j\in N}\lambda_j$.

\smallskip
\noindent{\emph{(2), sufficiency.}} If $A$ is as in \eqref{eq:scaled_Poisson_cumulant}, then
\[
    K_{X_i}(t) = \lambda_i \frac{a}{b}\, e^{b\theta_i} \left( e^{bt} - 1 \right),
    \qquad
    K_S(t) = \Bigl( \sum_{j \in N} \lambda_j \frac{a}{b} e^{b\theta_j} \Bigr) \left( e^{bt} - 1 \right),
\]
so the factor $e^{bt}-1$ cancels in the ratio and $K_{X_i}=r_iK_S$ with $r_i$ as stated, irrespective of the heterogeneity of the $\theta_j$.

\smallskip
\noindent{\emph{(2), necessity.}} Choose a pair $(i,j)$ with $\delta:=\theta_i-\theta_j\neq0$. Since $K_{X_i}=r_iK_S$ and $K_{X_j}=r_jK_S$, we get $r_jK_{X_i}=r_iK_{X_j}$ on $\mathfrak{T}_{0}$, the common domain of the cumulant generating functions. Differentiating twice in $t$ and writing $x=\theta_j+t$,
\begin{equation}
\label{eq:functional_equation}
    A^{\prime}(x + \delta) = C\,A^{\prime}(x)
    \quad\text{and}\quad
    A^{\prime\prime}(x + \delta) = C\,A^{\prime\prime}(x),
    \qquad x\in J:=\theta_j+\mathfrak{T}_{0},
\end{equation}
where $C = r_i \lambda_j/(r_j \lambda_i) > 0$ and both $J$ and $J+\delta=\theta_i+\mathfrak{T}_0$ are non-empty open subintervals of $\operatorname{int}(\Theta)$.

Infinite divisibility turns the second equation in \eqref{eq:functional_equation} into a statement about measures. By the L\'{e}vy--Khintchine representation \citep{Sato2013} applied to $X_j$, with Gaussian variance $\sigma^2\ge0$ and L\'{e}vy measure $\nu_j$,
\[
\lambda_jA^{\prime\prime}(\theta_j+t)=K_{X_j}^{\prime\prime}(t)=\sigma^{2}+\int_{\RR} x^{2}e^{tx}\,\nu_j(\mathrm{d}x),
\]
so that, setting $\rho(\mathrm{d}x):=\lambda_j^{-1}e^{-\theta_jx}\bigl(\sigma^{2}\delta_{\{0\}}(\mathrm{d}x)+x^{2}\nu_j(\mathrm{d}x)\bigr)$,
\begin{equation}
\label{eq:Aprimeprime_transform}
A^{\prime\prime}(\theta)=\int_{\RR}e^{\theta x}\,\rho(\mathrm{d}x),\qquad \theta\in\operatorname{int}(\Theta),
\end{equation}
for a non-null positive measure $\rho$; non-nullity follows from $A^{\prime\prime}>0$. Substituting \eqref{eq:Aprimeprime_transform} into \eqref{eq:functional_equation} gives
\[
\int_{\RR}e^{\theta x}\bigl(e^{\delta x}-C\bigr)\rho(\mathrm{d}x)=0\qquad\text{for all }\theta\in J .
\]
Fix $\theta_0\in J$ and let $\varsigma(\mathrm{d}x):=e^{\theta_0x}(e^{\delta x}-C)\rho(\mathrm{d}x)$, a finite signed measure because the integrals $\int e^{\theta x}e^{\delta x}\rho$ and $\int e^{\theta x}\rho$ both converge for $\theta\in J$. Its two-sided Laplace transform is analytic on the interior of its convergence strip and vanishes on a non-degenerate real interval; by the identity theorem it vanishes on the whole strip, in particular on the imaginary axis, and the uniqueness theorem for Fourier transforms gives $\varsigma=0$. Hence $\rho$ is concentrated on $\{x\in\RR:e^{\delta x}=C\}=\{b\}$, where $b:=\delta^{-1}\log C$, so that $\rho=c\,\delta_{\{b\}}$ with $c>0$ and
\[
A^{\prime\prime}(\theta)=c\,e^{b\theta},\qquad \theta\in\operatorname{int}(\Theta).
\]
If $b=0$, then $C=e^{b\delta}=1$ and $A^{\prime}(\theta)=c\theta+\varkappa$, which upon substitution in \eqref{eq:functional_equation} yields $c\,\delta=0$, contradicting $c>0$ and $\delta\neq0$. Therefore $b\neq0$ and $C=e^{b\delta}\neq1$. Integrating once more, $A^{\prime}(\theta)=(c/b)e^{b\theta}+\varkappa$, and \eqref{eq:functional_equation} forces $\varkappa=C\varkappa$, hence $\varkappa=0$. Consequently $A^{\prime}(\theta)=ae^{b\theta}$ with $a=c/b$ and $ab=c>0$, which is \eqref{eq:scaled_Poisson_cumulant} up to an additive constant; such a constant is immaterial, as only the increments $A(\theta+t)-A(\theta)$ enter \eqref{eq:EDM_CGF}. The resulting cumulant generating function $K_{X_i}(t)=\lambda_i(a/b)e^{b\theta_i}(e^{bt}-1)$ identifies $X_i$ as $b$ times a Poisson variable with mean $\lambda_i(a/b)e^{b\theta_i}$. Finally, $\e[S]=a\sum_{j\in N}\lambda_je^{b\theta_j}>0$ forces $a>0$ and then $b>0$.

\smallskip
\noindent\emph{(2), necessity without infinite divisibility.} Replacing the equation $A^{\prime}(x+\delta)=CA^{\prime}(x)$ by the equivalent $A^{\prime}(x-\delta)=C^{-1}A^{\prime}(x)$ if necessary, we may assume $\delta>0$. Suppose first that the generating measure is bounded below; the case in which it is bounded above reduces to this one upon replacing $A$ by $\check{A}(\vartheta):=A(-\vartheta)$, again a unit cumulant function with $\check{A}^{\prime\prime}>0$, whose generating measure is the reflection of $\nu$ and which satisfies the same equation with $\delta$ replaced by $-\delta$; the argument below then returns $A^{\prime}(\theta)=ae^{b\theta}$ with $a,b<0$, which is again of the form \eqref{eq:scaled_Poisson_cumulant}. Fix $\lambda_0\in\Lambda$ and let $\nu$ be the corresponding generating measure, so that $L(\theta):=\int_{\RR}e^{\theta x}\,\nu(\mathrm{d}x)=e^{\lambda_0A(\theta)}$ on $\Theta$; write $\widetilde{A}:=\lambda_0A=\log L$, so that $\widetilde{A}^{\prime}(\theta)=\e_{\nu_\theta}[X]$ is the mean of the tilted measure $\nu_\theta(\mathrm{d}x)\propto e^{\theta x}\nu(\mathrm{d}x)$ and the first equation in \eqref{eq:functional_equation} holds for $\widetilde{A}^{\prime}$ with the same $C$.

Both sides of that equation are real-analytic on the interval $\operatorname{int}(\Theta)\cap\bigl(\operatorname{int}(\Theta)-\delta\bigr)$, which contains $J$, so the identity theorem extends it to the whole of that interval. Since $\nu$ is bounded below, $L$ is finite at every $\theta$ below any point of $\Theta$, so $\Theta$ is unbounded below and the equation may be iterated:
\[
\widetilde{A}^{\prime}(\theta-n\delta)=C^{-n}\widetilde{A}^{\prime}(\theta),\qquad n\in\NN .
\]
As $\widetilde{A}^{\prime}$ is strictly increasing, $C=1$ would give $\delta=0$, and a zero of $\widetilde{A}^{\prime}$ at $x_0$ would give another at $x_0+\delta$; hence $C\neq1$ and $\widetilde{A}^{\prime}$ has constant sign. Writing $x_-:=\inf\operatorname{supp}(\nu)>-\infty$, we have $\widetilde{A}^{\prime}\ge x_-$, while monotonicity gives $\widetilde{A}^{\prime}(\theta-n\delta)\le\widetilde{A}^{\prime}(\theta)$. If $\widetilde{A}^{\prime}>0$ the latter forces $C^{-n}\le1$ for all $n$, so $C>1$; if $\widetilde{A}^{\prime}<0$ the former bounds $C^{-n}$, so again $C>1$, and then $\widetilde{A}^{\prime}(\theta-n\delta)\to0$, contradicting $\widetilde{A}^{\prime}\le\widetilde{A}^{\prime}(\theta)<0$. Hence $C>1$ and $\widetilde{A}^{\prime}>0$.

Put $\beta:=\delta^{-1}\log C>0$. Then $P(\theta):=\log\widetilde{A}^{\prime}(\theta)-\beta\theta$ is continuous and $\delta$-periodic, so $e^{P}$ is bounded above and below by positive constants and $\widetilde{A}^{\prime}(\theta)=e^{\beta\theta}e^{P(\theta)}$ is integrable at $-\infty$. Consequently $\log L(\theta)$ has a finite limit as $\theta\to-\infty$, so $L(\theta)\to c\in(0,\infty)$. Were $x_-<0$, the mass of $\nu$ near $x_-$ would make $L(\theta)\to\infty$; and $\widetilde{A}^{\prime}(\theta-n\delta)\to0$ together with $\widetilde{A}^{\prime}\ge x_-$ gives $x_-\le0$. Hence $x_-=0$, $\nu$ is carried by $[0,\infty)$, and monotone convergence identifies $c=\nu(\{0\})>0$.

Finally, $L^{\prime}(\theta)=\int_{(0,\infty)}e^{\theta x}\rho(\mathrm{d}x)$ with $\rho(\mathrm{d}x):=x\,\nu(\mathrm{d}x)$, and
\[
e^{-\beta\theta}L^{\prime}(\theta)=e^{P(\theta)}L(\theta)
\]
is bounded above and below by positive constants as $\theta\to-\infty$. If $\rho$ charged $(0,\beta)$ this quantity would diverge, and if $\rho(\{\beta\})$ vanished it would tend to zero; hence $\rho((0,\beta))=0<\rho(\{\beta\})$ and, by dominated convergence, $e^{-\beta\theta}L^{\prime}(\theta)\to\rho(\{\beta\})$. Therefore $e^{P(\theta)}\to\rho(\{\beta\})/c$, and a continuous periodic function with a limit is constant, so that $\widetilde{A}^{\prime}(\theta)=\widetilde{a}e^{\beta\theta}$ for some $\widetilde{a}>0$ and $A^{\prime}(\theta)=ae^{b\theta}$ with $a=\widetilde{a}/\lambda_0>0$ and $b=\beta>0$, as required.
\end{proof}


\begin{proof}[\textup{\textbf{Proof of Proposition~\ref{prop:E[X_i|S]_formula}}}]
Write $S=X_i+Z_i$ with $Z_i=\sum_{j\neq i}X_j$ independent of $X_i$, so that
\[
f_{S}(s;\bm{\theta})=\int f_{X_i}(x;\theta_i)\,f_{Z_i}(s-x;\bm{\theta}_{-i})\,dx .
\]
Since $f_{X_i}(x;\theta_i)=c_i(x;\lambda_i)e^{\theta_ix-\lambda_iA_i(\theta_i)}$, we have $\partial_{\theta_i}f_{X_i}(x;\theta_i)=(x-\lambda_iA^{\prime}_i(\theta_i))f_{X_i}(x;\theta_i)$. Differentiating under the integral sign, which is legitimate by the analyticity of exponential families in the interior of the natural parameter space,
\[
\frac{\partial f_S(s;\bm{\theta})}{\partial \theta_i}
=\int \bigl(x-\lambda_iA^{\prime}_i(\theta_i)\bigr)f_{X_i}(x;\theta_i)f_{Z_i}(s-x;\bm{\theta}_{-i})\,dx
=f_S(s;\bm\theta)\bigl(\e[X_i|S=s]-\lambda_iA^{\prime}_i(\theta_i)\bigr),
\]
which is \eqref{eq:EDM_E[X_i|S]_exact}. This is the conditional score identity: the sensitivity of the aggregate log-likelihood to the $i$-th natural parameter is exactly the centered conditional mean of $X_i$.

For the second statement, recall the saddle-point expansion \citep{Daniels1954}
\[
f_S(s; \bm{\theta}) = \frac{\exp\left\{K_S\left(\th; \bm{\theta}\right) - \th s\right\}}{\sqrt{2\pi K_S^{\prime\prime}\left(\th;\bm\theta\right)}} \bigl(1+\mathcal{O}(\Lambda^{-1})\bigr),
\]
where the saddle-point $\th\equiv\th(s;\bm\theta)$ solves $K_S^{\prime}(\th)=s$; it exists and is unique for $s$ in the interior of the range of $K_S^{\prime}$, which under distributional steepness coincides with the interior of the convex hull of the support of $S$. Differentiating the logarithm of the leading factor with respect to $\theta_i$, and noting that the term generated by the implicit dependence of $\th$ on $\theta_i$ is multiplied by $K_S^{\prime}(\th)-s=0$,
\[
\frac{\partial}{\partial \theta_i}\bigl(K_S(\th;\bm\theta)-\th s\bigr)
=\left.\frac{\partial K_S(t;\bm\theta)}{\partial\theta_i}\right|_{t=\th}
=\lambda_i A^{\prime}_i\left(\theta_i + \th\right) - \lambda_i A^{\prime}_i(\theta_i).
\]
The remaining contributions are of smaller order. Indeed, $K_S^{\prime\prime}(\th)$ and $\lambda_iA^{\prime}_i(\theta_i+\th)$ are both of order $\Lambda$ in the stated regime, whereas $\partial_{\theta_i}\th=-\lambda_i(A^{\prime\prime}_i(\theta_i+\th)-A^{\prime\prime}_i(\theta_i))/K_S^{\prime\prime}(\th)$ is of order one, so that $\partial_{\theta_i}\bigl[-\tfrac12\log(2\pi K_S^{\prime\prime}(\th))\bigr]=\mathcal{O}(1)$; and the relative error $1+\mathcal{O}(\Lambda^{-1})$ is analytic in $\theta_i$ on a complex neighborhood on which the bound is uniform, so Cauchy's estimate shows that its logarithmic derivative is again $\mathcal{O}(\Lambda^{-1})$. Substituting into \eqref{eq:EDM_E[X_i|S]_exact}, the terms $\lambda_iA^{\prime}_i(\theta_i)$ cancel and
\[
\e[X_i|S=s]=\lambda_i A^{\prime}_i\left(\theta_i + \th(s)\right)+\mathcal{O}(1)=K^{\prime}_{X_i}(\th(s))\bigl(1+\mathcal{O}(\Lambda^{-1})\bigr) . \qedhere
\]
\end{proof}


\begin{proof}[\textup{\textbf{Proof of Corollary~\ref{cor:EDM_homogeneous}}}]
Under homogeneity, $S$ follows the additive EDM generated by $A$ with index $\Lambda=\sum_{j\in N}\lambda_j$ and natural parameter $\theta$, and the conditional law of $X_i$ given $S=s$ does not depend on $\theta$, because the joint density of $(X_i,S)$ factorizes as $c_i(x;\lambda_i)c_{-i}(s-x;\Lambda-\lambda_i)e^{\theta s-\Lambda A(\theta)}$. Write $\psi(s):=\e[X_i\mid S=s]$, a $\theta$-free function. Taking expectations gives
\[
\e_\theta\Bigl[\psi(S)-\frac{\lambda_i}{\Lambda}S\Bigr]=\lambda_iA^{\prime}(\theta)-\frac{\lambda_i}{\Lambda}\Lambda A^{\prime}(\theta)=0
\qquad\text{for every }\theta\in\operatorname{int}(\Theta).
\]
The family of laws of $S$ is a full natural exponential family and hence complete, so $\psi(s)=\lambda_i s/\Lambda$ for $\PP_S$-almost every $s$. Dividing by $s\neq0$ and using $c_i=\lambda_i/\Lambda$ at $s=0$ gives the claim.
\end{proof}


\begin{proof}[\textup{\textbf{Proof of Proposition~\ref{prop:EDM_gi_monotonicity}}}]
Since $s^\ast>0$ we have $\D\subset(0,\infty)$, so $\gh_i(s)=K_{X_i}^{\prime}(\th(s))/s$ is continuously differentiable on $\D\setminus\{s_{\min},s_{\max}\}$. By the quotient rule and the chain rule, together with $\th^{\prime}=1/K_S^{\prime\prime}(\th)$ from Remark~\ref{rem:T_D_bijection},
\[
\gh^{\prime}_i(s)
= \frac{s\,K_{X_i}^{\prime\prime}\left(\th(s)\right)/K_S^{\prime\prime}\left(\th(s)\right) - K_{X_i}^{\prime}\left(\th(s)\right)}{s^2}
= \frac{s\,K_{X_i}^{\prime\prime}\left(\th(s)\right) - K_{X_i}^{\prime}\left(\th(s)\right)K_S^{\prime\prime}\left(\th(s)\right)}{s^2 K_S^{\prime\prime}\left(\th(s)\right)} .
\]
The denominator is strictly positive, because $K_S^{\prime\prime}(\th)$ is the variance of $S$ under the tilt $\th$ and the model is non-degenerate. Substituting $s=K_S^{\prime}(\th(s))$, the sign of $\gh^{\prime}_i(s)$ is thus that of
\[
K_S^{\prime}\left(\th(s)\right) K_{X_i}^{\prime\prime}\left(\th(s)\right) - K_{X_i}^{\prime}\left(\th(s)\right)K_S^{\prime\prime}\left(\th(s)\right),
\]
and dividing by the strictly positive product $K_S^{\prime\prime}(\th(s))K_{X_i}^{\prime\prime}(\th(s))$ turns the sign condition into \eqref{eq:EDM_gi_monotonicity_K} evaluated at $t=\th(s)$. As $\th$ is an increasing bijection onto $\T$ by Remark~\ref{rem:T_D_bijection}, requiring the condition for all $s\in\D\setminus\{s_{\min},s_{\max}\}$ is the same as requiring it for all $t\in\T$. Finally, \eqref{eq:EDM_gi_monotonicity_A} follows from $K_{X_i}^{\prime}(t)=\lambda_iA^{\prime}_i(\theta_i+t)$ and $K_{X_i}^{\prime\prime}(t)=\lambda_iA^{\prime\prime}_i(\theta_i+t)$, with the analogous sums over $N$ for $S$.
\end{proof}


\begin{proof}[\textup{\textbf{Proof of Corollary~\ref{cor:EDM_allocations_monotonicity_general}}}]
Apply Theorem~\ref{thm:allocations_dominance_tail} with $\gh_i$ in place of $g_i$, which is legitimate because that theorem uses only the integrability \eqref{eq:L1_triplet} of the composed function and its shape, and combine it with Proposition~\ref{prop:EDM_gi_monotonicity}.
\end{proof}


\begin{proof}[\textup{\textbf{Proof of Theorem~\ref{thm:EDM_allocations_monotonicity_Ai=A}}}]
Assume $\iota$ is non-decreasing on $\Theta^{\ast}$ and $\theta_i=\theta_{\min}$. For every $t\in\T$ and every $j\in N$ both $\theta_i+t$ and $\theta_j+t$ lie in $\Theta^{\ast}$ and $\theta_i+t\le\theta_j+t$, so $\iota(\theta_j+t)\ge\iota(\theta_i+t)$. Writing the left-hand side of \eqref{eq:EDM_gi_monotonicity_A} as an average of the $\iota(\theta_j+t)$ with the strictly positive weights $\lambda_jA^{\prime\prime}(\theta_j+t)$,
\[
\frac{K_S^{\prime}(t)}{K_S^{\prime\prime}(t)}=\frac{\sum_{j\in N} \lambda_j A^{\prime\prime}(\theta_j+t)\,\iota(\theta_j+t)}{\sum_{j\in N} \lambda_j A^{\prime\prime}(\theta_j+t)} \ge \iota(\theta_i+t)=\frac{K_{X_i}^{\prime}(t)}{K_{X_i}^{\prime\prime}(t)} ,
\]
which is \eqref{eq:EDM_gi_monotonicity_K}. Proposition~\ref{prop:EDM_gi_monotonicity} then makes $\gh_i$ non-decreasing, and the sufficiency part of Theorem~\ref{thm:allocations_dominance_tail}, applied to $\gh_i$, gives $\phih_{i,u}\ge\psih_{i,u}$ on $[u^\ast,1)$; note that no quasi-concavity assumption is needed for this direction. The remaining three cases are proved in the same way, exchanging the roles of $\theta_{\min}$ and $\theta_{\max}$ and reversing the inequalities.
\end{proof}


\begin{proof}[\textup{\textbf{Proof of Proposition~\ref{prop:gamma_exact}}}]
On $(0,\infty)^n$ the joint density of $\XX$ is proportional to
\[
\prod_{j\in N}x_j^{\alpha_j-1}e^{-\beta_jx_j},
\]
and substituting $x_j=sr_j$, then restricting to the unit simplex, produces the factor
\[
s^{\sum_{j\in N}\alpha_j-1}\prod_{j\in N}r_j^{\alpha_j-1}\exp\Bigl\{-s\sum_{j\in N}\beta_jr_j\Bigr\},
\]
in which the power of $s$ does not depend on $\bm r$; normalizing gives the stated conditional density $\pi_s$. Thus $\{\pi_s\}_{s>0}$ is a one-parameter exponential family on the simplex with natural parameter $-s$ and sufficient statistic $T(\bm r)=\sum_{j\in N}\beta_jr_j$. Since the simplex is compact, differentiation under the integral sign is immediate and
\[
g_i^{\prime}(s)=\frac{d}{ds}\,\e_{\pi_s}[R_i]=-\Cov_{\pi_s}\bigl(R_i,T\bigr),
\]
which is the first equality in \eqref{eq:gamma_gi_derivative}. The second follows from $\sum_{j\in N}R_j=1$, which gives $\sum_{j\in N}\Cov(R_i,R_j\mid S=s)=0$ and hence allows $\beta_j$ to be replaced by $\beta_j-\beta_i$, the term $j=i$ then vanishing. For $n=2$, $\Cov(R_1,R_2\mid S=s)=-\Var(R_1\mid S=s)<0$, so $g_i$ is strictly increasing when $\beta_i<\beta_j$, strictly decreasing when $\beta_i>\beta_j$ and constant when $\beta_i=\beta_j$; since a monotone function is both quasi-concave and quasi-convex, Theorem~\ref{thm:allocations_dominance_tail} applies in both directions and yields the stated equivalence. For $n\ge3$ the sign of \eqref{eq:gamma_gi_derivative} is determined by the signs of $\beta_j-\beta_i$ once all conditional covariances $\Cov(R_i,R_j\mid S=s)$, $j\neq i$, are non-positive, and the sufficiency part of Theorem~\ref{thm:allocations_dominance_tail} concludes.
\end{proof}


\begin{proof}[\textup{\textbf{Proof of Theorem~\ref{thm:EDM_monotone_IR}}}]
By the L\'{e}vy--Khintchine theorem for non-negative infinitely divisible distributions \citep{Sato2013} there are a drift $\tau\ge0$ and a L\'{e}vy measure $\xi$ on $(0,\infty)$ with $\int_0^\infty\min(x,1)\,\xi(\mathrm{d}x)<\infty$ such that, for $\theta\in\operatorname{int}(\Theta)$ and up to an additive constant,
\begin{equation}
\label{eq:LK}
  A(\theta)
  = \tau\theta
    + \int_0^{\infty}\!\bigl(e^{\theta x}-1\bigr)\,\xi(\mathrm{d}x).
\end{equation}
Writing $m_k(\theta):=\int_0^{\infty} x^k e^{\theta x}\,\xi(\mathrm{d}x)$ for $k\in\{1,2,3\}$ and differentiating \eqref{eq:LK} under the integral sign, which is justified by the analyticity of $A$ on $\operatorname{int}(\Theta)$,
\begin{equation}
\label{eq:A_derivatives}
  A^{\prime}(\theta)=\tau+m_1(\theta),\qquad
  A^{\prime\prime}(\theta)=m_2(\theta),\qquad
  A^{\prime\prime\prime}(\theta)=m_3(\theta).
\end{equation}
Non-degeneracy, $A^{\prime\prime}>0$, forces $\xi\neq0$; hence $m_k(\theta)>0$ for every $k\ge1$, $A^{\prime}>0$, and $\iota$ is well defined and strictly positive. Moreover $\xi\neq0$ implies that the support of the member with index $\lambda$, which contains $\lambda\tau+x_0\NN_0$ for any $x_0$ in the support of $\xi$, is unbounded, so $s_{\max}=\infty$.

Let $\nu_\theta(\mathrm{d}x):=e^{\theta x}\xi(\mathrm{d}x)$. The Cauchy--Schwarz inequality in $L^2(\nu_\theta)$ applied to $x^{1/2}$ and $x^{3/2}$ gives
\begin{equation}
\label{eq:CS}
  m_2(\theta)^2 \leq m_1(\theta)\,m_3(\theta),
\end{equation}
whence, using $\tau\ge0$ and $m_3>0$,
\begin{equation}
\label{eq:chain}
  A^{\prime}(\theta)\,A^{\prime\prime\prime}(\theta)
  =\bigl(\tau+m_1(\theta)\bigr)m_3(\theta)
  \geq m_1(\theta)\,m_3(\theta)
  \geq m_2(\theta)^2
  =A^{\prime\prime}(\theta)^2 ,
\end{equation}
which is \eqref{eq:logconvex}. Since $A^{\prime\prime}>0$, the quotient rule gives
\[
  \iota^{\prime}(\theta)
  =
  \frac{A^{\prime\prime}(\theta)^2-A^{\prime}(\theta)\,A^{\prime\prime\prime}(\theta)}
       {A^{\prime\prime}(\theta)^2}
  \leq 0 ,
\]
so $\iota$ is non-increasing, and so is $I(t)=\iota(\theta_{\min}+t)$.

For the equality case, both inequalities in \eqref{eq:chain} are equalities at a given $\theta$ if and only if, first, $x^{1/2}$ and $x^{3/2}$ are proportional $\nu_\theta$-almost surely, that is $\xi=c\,\delta_{\{x_0\}}$ for some $c,x_0>0$, and second, $\tau m_3(\theta)=0$, that is $\tau=0$. Both conditions concern the L\'{e}vy pair $(\tau,\xi)$ alone and are therefore independent of $\theta$, so equality at one point propagates to all of $\operatorname{int}(\Theta)$. Substituting $\tau=0$ and $\xi=c\,\delta_{\{x_0\}}$ into \eqref{eq:LK} gives \eqref{eq:scaled_Poisson_unit_cumulant}, for which $A^{\prime}=cx_0e^{\theta x_0}$, $A^{\prime\prime}=cx_0^2e^{\theta x_0}$ and $\iota\equiv1/x_0$. Every other member satisfies \eqref{eq:logconvex} strictly and has $\iota$ strictly decreasing.
\end{proof}


\begin{proof}[\textup{\textbf{Proof of Corollary~\ref{cor:ordering_nonneg_ID}}}]
Parts (i) and (ii) are immediate from Theorems~\ref{thm:EDM_monotone_IR} and~\ref{thm:EDM_allocations_monotonicity_Ai=A}, since $\iota$ is non-increasing on all of $\operatorname{int}(\Theta)\supseteq\Theta^{\ast}$. For (iii), $\gh_i$ is constant on $\D\setminus\{s_{\min},s_{\max}\}$ if and only if \eqref{eq:EDM_gi_monotonicity_A} holds with equality on $\T$, that is, if and only if the average of the $\iota(\theta_j+t)$ with the strictly positive weights $\lambda_jA^{\prime\prime}(\theta_j+t)$ coincides with $\iota(\theta_i+t)$ for every $t\in\T$. Taking an index $i$ with $\theta_i=\theta_{\max}$, for which $\iota(\theta_i+t)=\min_{j\in N}\iota(\theta_j+t)$ because $\iota$ is non-increasing, forces all the $\iota(\theta_j+t)$ to coincide; letting $t$ range over $\T$ and using $\theta_{\min}<\theta_{\max}$, these overlapping equalities make $\iota$ constant on $\Theta^{\ast}$, which by Theorem~\ref{thm:EDM_monotone_IR} happens only for the scaled Poisson family. The converse is Theorem~\ref{thm:EDM_coincide}--(2). Applying Theorem~\ref{thm:tail_allocations_coincide} to $\gh_i$ converts constancy into the equality of $\phih_{i,u}$ and $\psih_{i,u}$.
\end{proof}


\begin{proof}[\textup{\textbf{Proof of Proposition~\ref{prop:comonotone_gi}}}]
Fix $u\in(0,1)$, put $s:=q_S(u)$ and let $I_s:=\{v\in(0,1):q_S(v)=s\}$, which is an interval because $q_S$ is non-decreasing, and which contains $u$. By \eqref{eq:comonotone_quantile_sum}, $\sum_{j\in N}q_{X_j}(v)=s$ for every $v\in I_s$. Each $q_{X_j}$ is non-decreasing, so a strict increase of one of them across two points of $I_s$ could not be offset by the others without one of them decreasing; hence every $q_{X_j}$ is constant on $I_s$. By the standard identity $q_S(F_S(q_S(u)))=q_S(u)$, the level $u^{\prime}:=F_S(s)$ also belongs to $I_s$ whenever $u^{\prime}<1$, and in the remaining case we use the convention $q_{X_i}(1):=\lim_{v\to1^-}q_{X_i}(v)$, which is again the common value of $q_{X_i}$ on $I_s$. In either case
\begin{equation}
\label{eq:comonotone_qXi_invariance}
q_{X_i}\bigl(F_S(q_S(u))\bigr)=q_{X_i}(u),\qquad u\in(0,1).
\end{equation}
Applying \eqref{eq:comonotone_qXi_invariance} at $u=U$ and using $S=q_S(U)$ and $X_i=q_{X_i}(U)$ almost surely gives $X_i=q_{X_i}(F_S(S))$ almost surely. In particular $X_i$ is $\sigma(S)$-measurable, so $\e[X_i\mid S]=X_i$ and, on $\{S\neq0\}$, $g_i(S)=\e[R_i\mid S]=X_i/S=q_{X_i}(F_S(S))/S$; this is \eqref{eq:comonotone_gi_exact}, and \eqref{eq:h_formula} follows upon substituting $s=q_S(u)$ and using \eqref{eq:comonotone_qXi_invariance} again. Finally, if $\PP(S=0)>0$, then $X_i=q_{X_i}(F_S(0))$ almost surely on $\{S=0\}$, so $\e[X_i\mathds{1}_{\{S=0\}}]=q_{X_i}(F_S(0))\,\PP(S=0)$, which vanishes if and only if $q_{X_i}(F_S(0))=0$.
\end{proof}


\begin{proof}[\textup{\textbf{Proof of Proposition~\ref{prop:comonotone_allocations}}}]
By \eqref{eq:h_formula} and the last part of Proposition~\ref{prop:comonotone_gi}, $q_S\,h=q_{X_i}$ almost everywhere on $(u^\ast,1)$, including on $\{q_S=0\}$, where both sides vanish. Definition~\ref{def:allocations_paradigms} then gives \eqref{eq:comonotone_allocations}, and \eqref{eq:comonotone_cov} is \eqref{eq:cov_representation}.
\end{proof}


\begin{proof}[\textup{\textbf{Proof of Corollary~\ref{cor:comonotone_coincidence}}}]
By Theorem~\ref{thm:tail_allocations_coincide} and Lemma~\ref{lem:tail_transfer}(b), $\Phi_{i,u}=\Psi_{i,u}$ for all $u\in[u^\ast,1)$ if and only if $h$ is almost everywhere equal to a constant $r_i$ on $(u^\ast,1)$. On $\{q_S\neq0\}$ this reads $q_{X_i}=r_iq_S$, and on $\{q_S=0\}$ the same identity holds trivially by Proposition~\ref{prop:comonotone_gi}; so \eqref{eq:comonotone_constancy} holds almost everywhere. Both $q_{X_i}$ and $q_S$ are left-continuous, hence so is $q_{X_i}-r_iq_S$, and a left-continuous function vanishing almost everywhere on an interval vanishes identically there, which upgrades \eqref{eq:comonotone_constancy} to every $u\in(u^\ast,1)$. Conversely, \eqref{eq:comonotone_constancy} makes $h\equiv r_i$ on $\{q_S\neq0\}$, and choosing $c_i=r_i$ extends this to all of $(u^\ast,1)$; the value $\Phi_{i,u}=\Psi_{i,u}=r_i$ follows from Theorem~\ref{thm:tail_allocations_coincide}, and $c_i=r_i$ is forced when $0\in\D$ and $\PP(S=0)>0$, because then $g_i(0)=c_i$. Finally, $X_j \stackrel{d}{=} \lambda_j X$ gives $q_{X_j}=\lambda_jq_X$ and $q_S=(\sum_{j\in N}\lambda_j)q_X$ by \eqref{eq:comonotone_quantile_sum}, whence $r_i=\lambda_i/\sum_{j\in N}\lambda_j$.
\end{proof}


\begin{proof}[\textup{\textbf{Proof of Corollary~\ref{cor:comonotone_monotonicity}}}]
By Theorem~\ref{thm:allocations_dominance_tail}, $\Phi_{i,u} \geq (\leq) \Psi_{i,u}$ for all $u \in [u^\ast, 1)$ if and only if $g_i$ is $\PP_S$-almost surely non-decreasing (non-increasing) on $\D$, which by Lemma~\ref{lem:tail_transfer}(c) and Proposition~\ref{prop:comonotone_gi} is equivalent to the monotonicity of $h=q_{X_i}/q_S$ on $(u^\ast,1)$; this is \eqref{eq:comonotone_monotonicity}. Where the two quantile functions are differentiable with positive derivatives,
\[
  h^{\prime}(u)
  = \frac{q_{X_i}^{\prime}(u)\,q_S(u) - q_{X_i}(u)\,q_S^{\prime}(u)}{q_S(u)^2},
\]
and dividing the numerator by the strictly positive product $q_S^{\prime}(u)q_{X_i}^{\prime}(u)$ turns $h^{\prime}(u)\ge0$ into \eqref{eq:comonotone_logquantile}; when moreover $q_{X_i}(u)>0$, dividing instead by the positive product $q_S(u)q_{X_i}(u)$ gives the logarithmic form. The statement about $c_i$ is Remark~\ref{rem:ci_choice_monotonicity} applied to \eqref{eq:comonotone_gi_exact}, and the normalization follows from $\sum_{i \in N} q_{X_i}(u) = q_S(u)$.
\end{proof}


\begin{proof}[\textup{\textbf{Proof of Lemma~\ref{lem:two_representations}}}]
By \eqref{eq:L1_triplet}, $q$, $h$ and $qh$ belong to $L^1(u^\ast,1)$. Hence $H$ and $G(v):=\int_v^1q(t)h(t)\,dt$, which equals $\Phi_{i,v}\int_v^1q$, are absolutely continuous on $[u^\ast,1]$, with $H(1)=G(1)=0$, $H^{\prime}=-h$ and $G^{\prime}=-qh$ almost everywhere.

(a) Since $q$ is non-decreasing and non-negative on $(u_1,1)$ and $H$ is continuous, Lebesgue--Stieltjes integration by parts gives
\[
  \int_u^1 q(v)h(v)\,dv=-\int_u^1q(v)\,dH(v)
  =q(u)H(u)-\lim_{v\to1^-}q(v)H(v)+\int_u^1 H(v)\,dq(v).
\]
The boundary term vanishes because $|q(v)H(v)|\le\int_v^1q(t)|h(t)|\,dt\to0$, the inequality using $q(v)\le q(t)$ for $t\ge v$, and \eqref{eq:IBP_identity} follows from $H(v)=(1-v)\Psi_{i,v}$. Applying the same identity with $h\equiv1$ gives $q(u)(1-u)+\int_u^1(1-v)\,dq(v)=\int_u^1q(v)\,dv$, so the two non-negative weights in \eqref{eq:IBP_identity} normalize to a probability measure after division by $\int_u^1q>0$, which is \eqref{eq:Phi_weighted_avg}.

(b) Since $h=-G^{\prime}/q$ almost everywhere and $v\mapsto-1/q(v)$ is non-decreasing on $(u_1,1)$,
\[
  H(u)=\int_u^1\frac{-G^{\prime}(v)}{q(v)}\,dv=-\int_u^1\frac{\mathrm{d}G(v)}{q(v)}
  =\frac{G(u)}{q(u)}-\lim_{v\to1^-}\frac{G(v)}{q(v)}-\int_u^1 G(v)\,\mathrm{d}\bigl(-1/q(v)\bigr),
\]
and the boundary term vanishes because $|G(v)|\le\int_v^1q|h|\to0$ while $q(v)\ge q(u)>0$. Dividing by $1-u$ and substituting $G(v)=\Phi_{i,v}\int_v^1q$ gives \eqref{eq:Psi_nonconvex_avg}. For the total mass, integrating by parts once more with $A(v):=\int_v^1q$ and $B(v):=-1/q(v)$, so that $\mathrm{d}A(v)=-q(v)\,dv$, and using $A(1^-)=0$ together with the finiteness of $B(1^-)=-1/s_{\max}$, read as $0$ when $s_{\max}=\infty$,
\[
  \int_u^1 A\,\mathrm{d}B=\bigl[A B\bigr]_u^{1^-}-\int_u^1 B(v)\,\mathrm{d}A(v)
  =\frac{A(u)}{q(u)}-\int_u^1\frac{q(v)}{q(v)}\,dv
  =\frac{A(u)}{q(u)}-(1-u),
\]
so that $\nu_u$ has total mass $A(u)/\bigl((1-u)q(u)\bigr)-1=W_u-1$.
\end{proof}


\begin{proof}[\textup{\textbf{Proof of Theorem~\ref{thm:allocation_limits}}}]
Let $u_1$ be as in Lemma~\ref{lem:two_representations}; since $q(u)\uparrow s_{\max}>0$, such a $u_1<1$ exists, and all statements concern $u\in(u_1,1)$.

(1) The measure $\mu_u$ of \eqref{eq:Phi_weighted_avg} is a probability measure on $[u,1)$, so $\Phi_{i,u}-\ell=\int_{[u,1)}(\Psi_{i,v}-\ell)\,\mu_u(\mathrm{d}v)$ and the bound follows.

(2) Since $\nu_u$ has total mass $W_u-1$, \eqref{eq:Psi_nonconvex_avg} gives
\[
  \Psi_{i,u}-\ell=W_u\bigl(\Phi_{i,u}-\ell\bigr)-\int_u^1\bigl(\Phi_{i,v}-\ell\bigr)\,\nu_u(\mathrm{d}v),
  \qquad
  \Psi_{i,u}-\Phi_{i,u}=(W_u-1)\Phi_{i,u}-\int_u^1\Phi_{i,v}\,\nu_u(\mathrm{d}v),
\]
and the two bounds in \eqref{eq:tauberian_bounds} follow by the triangle inequality, the mass of $\nu_u$ contributing the factor $W_u-1$ in each case. For (2a), write $\epsilon_u:=\sup_{v\in[u,1)}|\Phi_{i,v}-\ell|$. If $\Phi_{i,u}\to\ell$ then $\epsilon_u\to0$, and if moreover $W_u\le M<\infty$ near $1$, the first bound gives $|\Psi_{i,u}-\ell|\le(2M-1)\epsilon_u\to0$; the reverse implication is (1). For (2b), the second bound gives $\Phi_{i,u}-\Psi_{i,u}\to0$ as soon as $W_u\to1$ and $\sup_{v\ge u}|\Phi_{i,v}|$ stays bounded, which holds automatically when the components are non-negative, since then $\Phi_{i,v}\in[0,1]$.

For the sufficient conditions, recall from the identity following Lemma~\ref{lem:two_representations} that $W_u-1=\int_{q(u)}^{\infty}\Fbar_S(t)\,dt\,\big/\,\bigl((1-u)q(u)\bigr)$. Since $\Fbar_S(q(u))\le 1-u$,
\[
  1\le W_u\le 1+\frac{e\bigl(q(u)\bigr)}{q(u)},
  \qquad
  e(x):=\frac{1}{\Fbar_S(x)}\int_x^{\infty}\Fbar_S(t)\,dt ,
\]
with equality on the right when $F_S$ is continuous, $e$ being the mean excess function of $S$. If $s_{\max}<\infty$, then $q(u)\to s_{\max}>0$ and $\mathrm{ES}_u(S)\to s_{\max}$, so $W_u\to1$. Suppose next that $s_{\max}=\infty$ and that $S$ lies in the Gumbel maximum domain of attraction, so that $\Fbar_S$ is rapidly varying \citep[Section~3.3]{Embrechts1997}: $\Fbar_S(\lambda x)/\Fbar_S(x)\to0$ as $x\to\infty$, for every $\lambda>1$. Fix $\lambda>1$ and $x_0$ with $\Fbar_S(\lambda y)\le(2\lambda)^{-1}\Fbar_S(y)$ for all $y\ge x_0$; iterating gives $\Fbar_S(\lambda^kx)\le(2\lambda)^{-k}\Fbar_S(x)$ for $x\ge x_0$ and $k\ge0$, whence
\[
  \int_x^{\infty}\Fbar_S(t)\,dt
  \le\sum_{k\ge0}(\lambda-1)\lambda^k x\,\Fbar_S(\lambda^kx)
  \le 2(\lambda-1)\,x\,\Fbar_S(x),
\]
so that $e(x)\le2(\lambda-1)x$ for $x\ge x_0$. As $\lambda>1$ was arbitrary, $e(x)=o(x)$ and $W_u\to1$. Finally, if $S$ lies in the Fr\'{e}chet domain with index $\alpha>1$, then $\Fbar_S$ is regularly varying with index $-\alpha$, and Karamata's theorem \citep[Section~1.5]{Bingham1987} gives $\int_x^{\infty}\Fbar_S(t)\,dt\sim x\Fbar_S(x)/(\alpha-1)$, so that $e(x)/x\to1/(\alpha-1)$ and $\limsup_{u\to1^-}W_u\le\alpha/(\alpha-1)$, with convergence when $F_S$ is continuous.

(3) See Example~\ref{ex:oscillating_counterexample}.
\end{proof}


\begin{example}
\label{ex:oscillating_counterexample}
Let $n=2$, fix an integer $k_0\ge3$, and set
\[
  s_k := \frac{2^k}{k^2},\qquad p_k:=C\,2^{-k},\qquad C:=2^{k_0-1},\qquad k\ge k_0 .
\]
The sequence $s_k$ is strictly increasing for $k\ge k_0$, because $s_{k+1}/s_k=2k^2/(k+1)^2>1$ exactly when $k>1+\sqrt2$, and $\sum_{k\ge k_0}p_k=1$. Let $S$ have the distribution $\PP(S=s_k)=p_k$, $k\ge k_0$, so that $s_{\max}=\infty$, $\PP(S=0)=0$ and
\[
  \e[S]=\sum_{k\ge k_0}p_ks_k=C\sum_{k\ge k_0}k^{-2}<\infty .
\]
Define $g_1(s_k):=\tfrac12\bigl(1+(-1)^k\bigr)\in\{0,1\}$ and set $X_1:=S\,g_1(S)$ and $X_2:=S-X_1$. Then $0\le X_j\le S$, so all standing conditions hold with $u^\ast=0$, and $R_1=g_1(S)$ is $\sigma(S)$-measurable, so $g_1$ is indeed the conditional risk share.

Since $p_{k+1}=p_k/2$, the tail sums satisfy $\sum_{k>K}p_k=p_K$, so with $u_K:=\PP(S<s_K)$ we have $1-u_K=2p_K$ and $u_{K+1}=u_K+p_K$; hence $q(v)=s_K$ for $v\in(u_K,u_{K+1}]$. Write $u=u_K+\alpha p_K$ with $\alpha\in(0,1]$, so that $1-u=p_K(2-\alpha)$, and let $k(v)$ be the index with $v\in(u_{k(v)},u_{k(v)+1}]$. Because $g_1(s_k)=\tfrac12+\tfrac12(-1)^k$,
\begin{equation}
\label{eq:linear_decomposition_discrete}
  \Psi_{1,u}
  =\frac{1}{2}+\frac{1}{2}\,
       \frac{\int_u^1(-1)^{k(v)}\,dv}{1-u},
  \qquad
  \Phi_{1,u}
  =\frac{1}{2}+\frac{1}{2}\,
       \frac{\int_u^1 q(v)(-1)^{k(v)}\,dv}{\int_u^1 q(v)\,dv} .
\end{equation}

\emph{The flat allocation oscillates.} Summing the alternating geometric tail, $\sum_{k>K}p_k(-1)^k=C(-1)^{K+1}/(3\cdot2^{K})$, so
\[
  \int_u^1(-1)^{k(v)}\,dv
  =(1-\alpha)\,C2^{-K}(-1)^K+\frac{C(-1)^{K+1}}{3\cdot2^{K}}
  =C2^{-K}(-1)^K\Bigl(\tfrac23-\alpha\Bigr),
\]
and dividing by $1-u=C2^{-K}(2-\alpha)$,
\[
  \Psi_{1,u}=\frac12+\frac{(-1)^K}{2}\cdot\frac{\tfrac23-\alpha}{2-\alpha} .
\]
The fraction decreases from $\tfrac13$ to $-\tfrac13$ as $\alpha$ runs over $(0,1]$; at $\alpha=1$ the allocation equals $\tfrac13$ for even $K$ and $\tfrac23$ for odd $K$. Hence $\liminf_{u\to1^-}\Psi_{1,u}=\tfrac13$ and $\limsup_{u\to1^-}\Psi_{1,u}=\tfrac23$.

\emph{The severity-weighted allocation converges.} For the denominator in \eqref{eq:linear_decomposition_discrete},
\[
  \int_u^1 q(v)\,dv=\frac{C(1-\alpha)}{K^2}+C\sum_{k>K}\frac{1}{k^2}\ \ge\ \frac{C}{K+1},
\]
by the comparison $\sum_{k>K}k^{-2}\ge\int_{K+1}^\infty x^{-2}dx$. For the numerator, the alternating series estimate $\bigl|\sum_{k>K}(-1)^kk^{-2}\bigr|\le(K+1)^{-2}$ gives
\[
  \Bigl|\int_u^1 q(v)(-1)^{k(v)}\,dv\Bigr|
  \le\frac{C(1-\alpha)}{K^2}+\frac{C}{(K+1)^2}\le\frac{2C}{K^2},
\]
so that $\bigl|\Phi_{1,u}-\tfrac12\bigr|\le (K+1)/K^{2}=\mathcal{O}(K^{-1})$, uniformly in $\alpha\in(0,1]$. Thus $\Phi_{1,u}\to\tfrac12$ while $\Psi_{1,u}$ does not converge.

The mechanism is visible in the multiplier: here $\mathrm{ES}_u(S)\asymp C/\bigl(K\,(1-u)\bigr)$ while $q(u)=s_K=2^K/K^2$, so $W_u=K/(2-\alpha)+\mathcal{O}(1)\to\infty$, and the negative mass $W_u-1$ in \eqref{eq:Psi_nonconvex_avg} grows without bound. The alternation of $g_1$ along the quantile curve is then amplified rather than smoothed, which is exactly what Theorem~\ref{thm:allocation_limits}--(2) excludes when $W_u\to1$.
\end{example}

\begin{example}
\label{ex:oscillating_merge}
Part~\textup{(2b)} of Theorem~\ref{thm:allocation_limits} makes the two allocations merge, but it does not make either of them converge. Let $n=2$, let $S$ be standard exponential, and fix $\omega>0$. With $q(v)=-\log(1-v)$, set
\[
  g_1(s):=\tfrac12\bigl(1+\sin(\omega s)\bigr)\in[0,1],
  \qquad X_1:=S\,g_1(S),\qquad X_2:=S-X_1 .
\]
The components are non-negative with $0\le X_j\le S$ and $\e[S]=1$, so the standing conditions hold with $u^\ast=0$, and $R_1=g_1(S)$ is $\sigma(S)$-measurable, so $g_1$ is the conditional risk share. Here $\mathrm{ES}_u(S)=q(u)+1$, whence
\[
  W_u=1+\frac{1}{q(u)}\ \longrightarrow\ 1 ,
\]
and Part~\textup{(2b)} applies. Writing $y:=q(u)$ and substituting $v=1-e^{-y}$,
\[
  \Psi_{1,u}=\int_0^\infty g_1(y+t)\,e^{-t}\,dt
  =\frac12+\frac12\cdot\frac{\sin(\omega y)+\omega\cos(\omega y)}{1+\omega^{2}} ,
\]
which oscillates forever between $\tfrac12\pm\tfrac12(1+\omega^2)^{-1/2}$: the expected risk share has no limit. For the severity-weighted allocation,
\[
  \Phi_{1,u}=\frac{\int_y^\infty t\,g_1(t)e^{-t}\,dt}{\int_y^\infty t\,e^{-t}\,dt}
  =\frac12+\frac{1}{2(1+y)}\,
   \operatorname{Im}\left\{e^{\i\omega y}\left[\frac{y}{1-\i\omega}+\frac{1}{(1-\i\omega)^{2}}\right]\right\},
\]
so that $\Phi_{1,u}$ oscillates with the same asymptotic amplitude and phase as $\Psi_{1,u}$, and
\[
  \Phi_{1,u}-\Psi_{1,u}=\mathcal{O}\bigl(q(u)^{-1}\bigr),
\]
in accordance with the bound $2(W_u-1)=2/q(u)$. Both allocations therefore oscillate indefinitely, with identical $\liminf$ and $\limsup$, while their difference vanishes: flattening of the multiplier reconciles the two paradigms without stabilizing either. Taking $\omega\to0$ makes the common oscillation amplitude approach $\tfrac12$, its largest possible value for non-negative risks.
\end{example}

\begin{proof}[\textup{\textbf{Proof of Corollary~\ref{cor:tauberian}}}]
The implication $(\Leftarrow)$ is Theorem~\ref{thm:allocation_limits}--(1). For $(\Rightarrow)$, assume $g_i$ non-decreasing on $\D$; by Lemma~\ref{lem:tail_transfer}(c), $h$ is almost everywhere non-decreasing on $(u^\ast,1)$, so $\Psi_{i,u}=\frac{1}{1-u}\int_u^1h(v)\,dv\ge h(u)$ for almost every $u$, and \eqref{eq:Psi_derivative} gives
\[
  \frac{d}{du}\Psi_{i,u}=\frac{\Psi_{i,u}-h(u)}{1-u}\ \ge\ 0
  \qquad\text{for almost every }u\in(u^\ast,1).
\]
Being absolutely continuous on compact subintervals, $u\mapsto\Psi_{i,u}$ is therefore non-decreasing, so $\Psi_{i,v}\ge\Psi_{i,u}$ for $v\ge u$ and \eqref{eq:Phi_weighted_avg} yields $\Phi_{i,u}\ge\Psi_{i,u}$, in accordance with Theorem~\ref{thm:allocations_dominance_tail}. If now $\Phi_{i,u}\to\ell$, then $\Psi_{i,\cdot}$ is non-decreasing and bounded above by $\sup_{v\ge u}\Phi_{i,v}<\infty$, hence converges to some $\ell^{\prime}$; Part~(1) applied to $\Psi_{i,u}\to\ell^{\prime}$ forces $\ell^{\prime}=\ell$. The non-increasing case is symmetric.
\end{proof}


\begin{proof}[\textup{\textbf{Proof of Corollary~\ref{cor:limit_identification}}}]
The uniform average of a function converging at the endpoint inherits its limit: given $\varepsilon>0$, choose $u_\varepsilon$ with $|h(v)-\ell|\le\varepsilon$ for $v>u_\varepsilon$, so that $|\Psi_{i,u}-\ell|\le\frac{1}{1-u}\int_u^1|h(v)-\ell|\,dv\le\varepsilon$ for $u>u_\varepsilon$. Theorem~\ref{thm:allocation_limits}--(1) then gives $\Phi_{i,u}\to\ell$. For the two sufficient conditions, $q(v)\uparrow s_{\max}$ as $v\to1^-$, with $q(v)<s_{\max}$ for all $v<1$ when $\PP(S=s_{\max})=0$ and $q(v)=s_{\max}$ for $v$ beyond $\PP(S<s_{\max})$ otherwise.
\end{proof}

\end{appendices}
\end{document}